%% file: arxiv.tex
\pdfoutput=1

\pdfoutput=1

\documentclass[11pt]{scrartcl}
\usepackage[a4paper, total={16cm, 24cm}]{geometry}
\usepackage{microtype} 
\usepackage{authblk}
\title{(Hopefully not too Long) Title}
\date{\vspace{-1.5cm}}

\author{Chris Dong}

\author[2]{Sonja Kraiczy}

\author[2]{Rohit Vasishta}

\author[2]{Markus Brill}

\author[3]{Wesley~H.~Holliday}

\author{Niclas Boehmer}
\affil[1]{Hasso Plattner Institute, University of Potsdam, Germany} 
\affil[2]{University of Oxford, UK}
\affil[3]{University of California, Berkeley, USA}

\usepackage{xcolor}
\usepackage{graphicx}
\usepackage{pgfplots}
\pgfplotsset{compat=1.17}

\usepackage{amsmath, amssymb, amsthm, mathtools, nicefrac, siunitx}

\usepackage{thmtools}
\usepackage{thm-restate}

\usepackage[pagebackref]{hyperref}
\hypersetup{
	pdfencoding=auto,
	psdextra,
	colorlinks=true,
	citecolor=green!50!black,
	linkcolor=red!60!black,
}

\usepackage[nameinlink]{cleveref}

\usepackage{natbib}
\usepackage[inline]{enumitem}
\usepackage{booktabs}
\usepackage{subcaption}

\usepackage[textsize=tiny]{todonotes}

\newcommand{\restatehere}[1]{%
	\marginline{\vspace{0.6cm}\footnotesize \hyperlink{original#1}{\hypertarget{restated#1}{[Main]}}}%
	\csname #1\endcsname*%
}

\usepackage[utf8]{inputenc} 
\usepackage[T1]{fontenc}    
\usepackage{hyperref}       
\usepackage{url}            
\usepackage{booktabs}       
\usepackage{amsfonts}       
\usepackage{nicefrac}       
\usepackage{microtype}      
\usepackage[dvipsnames]{xcolor}

\usepackage{graphicx} 

\usepackage{amsthm}
\usepackage{amsmath}
\usepackage{pgfplots}
\usepackage{mdframed}
\usepackage{cleveref}
\usepackage[normalem]{ulem}
\usepackage{soul}
\usepackage{amsfonts}
\usepackage{amssymb}
\usepackage{multirow}
\usepackage{bbm}
\usepackage{mathtools}
\usepackage{pifont}
\usepackage{caption}
\usepackage{subcaption}
\usepackage{nicefrac}
\usepackage{tikz}
\usetikzlibrary{positioning,arrows,shapes,calc}
\usepackage{colortbl}
\usepackage{booktabs}
\usepackage{array}
\usepackage{enumitem}
\usepackage{tabularx}
\usepackage{verbatim}
\usepackage{makecell}
\usepackage[most]{tcolorbox}
\usepackage{todonotes}
\usepackage{comment}
\usepackage{algorithm}
\usepackage{algpseudocode}
\usepackage{thmtools}
\usepackage{thm-restate}
\usepackage[most]{tcolorbox}

\theoremstyle{definition}
\newtheorem{definition}{Definition}

\newtheorem{example}{Example}
\theoremstyle{plain}
\newtheorem{theorem}{Theorem}
\newtheorem{lemma}{Lemma}
\newtheorem{proposition}{Proposition}
\newtheorem{corollary}{Corollary}

\newtheorem{remark}{Remark}
\usetikzlibrary{positioning,calc,fit,backgrounds}
\usepackage{textcomp}

\usepackage{tikz}
\usetikzlibrary{arrows.meta, shapes.geometric, calc,positioning,fit}
\usepackage[most]{tcolorbox}

\definecolor{lightpurple}{HTML}{F4EEFF}
\definecolor{darkpurple}{HTML}{5B2A86}
\definecolor{softpurple}{HTML}{FAF7FF}
\definecolor{midpurple}{HTML}{8A5FBF}

\newtcolorbox{interpretationbox}[1]{
    enhanced,
    colback=softpurple,
    colframe=darkpurple,
    boxrule=0.6pt,
    arc=2mm,
    left=7pt,
    right=7pt,
    top=7pt,
    bottom=6pt,
    width=0.92\linewidth,
    center,
    before skip=6pt,
    after skip=6pt,
    attach boxed title to top left={xshift=8pt,yshift=-2.5mm},
    boxed title style={
        colback=darkpurple,
        colframe=darkpurple,
        arc=1.5mm,
        boxrule=0pt,
        left=4pt,
        right=4pt,
        top=1pt,
        bottom=1pt
    },
    title={#1},
    fonttitle=\small\bfseries,
    coltitle=white
}
\definecolor{candA}{HTML}{C9A66B}   
\definecolor{candC}{HTML}{7FA6C9}   
\definecolor{candX}{HTML}{7FB0AA}   
\definecolor{candY}{HTML}{C98A9A}   
\definecolor{candZ}{HTML}{9CAF6C}   
\definecolor{approve}{HTML}{F3EFD9} 
\definecolor{winered}{rgb}{0.5,0.1,0.1}
\renewcommand\emph[1]{{\color{winered}{\textit{#1}}}}

\newcommand{\score}[2]{s_{#1}(#2)}

\newcommand{\spvc}{\textsc{SPVC}}
\newcommand{\pvc}{\textsc{PVC}}
\newcommand{\tc}{\textsc{DSS}}
\newcommand{\ru}{\textsc{ARU}}
\newcommand{\mmod}{\textsc{MMOD}}

\newcommand{\dvc}{\textsc{DVC}}
\newcommand{\cone}{\operatorname{cone}}
\newcommand{\po}{\textsc{DPS}}

\newmdtheoremenv[
  backgroundcolor=gray!10,
  linecolor=black,
  linewidth=0.8pt,
  roundcorner=3pt
]{solutionconcept}{Solution Concept}

\newtcolorbox{principle}{
  colback=gray!10,
  colframe=black,
  boxrule=0.5pt,
  arc=2pt,
  left=6pt,
  right=6pt,
  top=4pt,
  bottom=4pt
}

\newcommand{\sonja}[1]{{\color{red}[Sonja: #1]}}

\newcommand{\rohit}[1]{{\color{brown}[Rohit: #1]}}
\newcommand{\chris}[1]{{\color{blue}[Chris: #1]}}

\title{Where Should Society Draw the Line?\\ A Social Choice Approach to Collective Consent}

\begin{document}

\maketitle

\begin{abstract}

Society constantly has to determine the boundaries of what it deems acceptable, from legislative decisions to the guardrails governing autonomous systems. We initiate the axiomatic study of \emph{collective consent}: given individuals' attitudes toward options, which options should receive societal consent? We organize our analysis around three principles: sufficient support, minority protection, and dominance by decisively better options. Each captures a distinct reason for withholding societal consent from an option. For each principle, we develop a corresponding solution concept that transparently implements the principle and is canonical in a mathematically precise sense. For example, for minority protection, the resulting concept is a consent-adapted version of Moulin's Proportional Veto Core. 

Balancing multiple principles simultaneously is more challenging. To address this, we develop a game-theoretic characterization of our veto core that naturally gives rise to a family of related concepts. From this family, we identify the \emph{Approval-Weighted Veto Core} as particularly desirable. By making minorities' blocking power depend on the approval support of the options being challenged, it smoothly interpolates between proportional minority protection and majority support. 
Experiments on five datasets spanning \textit{high-stakes} decision-making (such as political elections, ethical AI evaluations, and moral decision-making)  show that solution concepts violating a principle in theory also violate it empirically.
\end{abstract}

\section{Introduction}\label{blocking}

Modern AI systems constantly make decisions on what outputs are acceptable. For example, large language models are fine-tuned to determine which responses to present or withhold \citep{ouyang2022training,christiano2017deep}, and autonomous vehicles must resolve which driving behaviors are acceptable  \citep{awad2018moral}. Likewise, society constantly has to draw lines on which options to tolerate. 
Legislators decide which behaviors are permitted; review boards decide 
which medical treatments may be administered; operators of 
autonomous energy grids decide which dispatch actions the system may 
take without human approval. These decisions share a common structure: presented with a set of options, society 
needs to determine to which of these it gives \textit{consent}, and which 
should be ruled out.

{Historically, giving consent is broadly understood as the action of an \emph{individual}. As such, the concept has been studied across disciplines, e.g., in political philosophy as a foundation of political legitimacy \citep{peter2023political} and in research ethics as a check on decisions affecting communities \citep{weijer2000protecting}.
Such works may focus on justifying the authority of collective decisions over individuals, or equate the consent of a group of people to the consent of a representative individual.
By contrast, in this paper, we focus on the formal analysis of a process of deriving \emph{societal} consent from individual consent.
} 
The underlying assumption of our work is that the legitimacy of any set of options with societal consent 
ultimately rests on how well it represents and respects the input of 
those it affects.
This raises the fundamental question we 
address in this paper:
\emph{Given individuals' attitudes toward
options, which 
options should society consent to?}

To address this question, we initiate the formal study of deriving
collective consent from a social choice perspective. Social 
choice theory studies how individual preferences over a set of 
candidates should be aggregated into collective decisions, with rich 
variation in how preferences are represented, how they are aggregated, 
and what efficiency, fairness, and algorithmic guarantees are sought \citep{HandbookofCSC, 10.5555/3180776}. Two prominent settings within this framework 
are single-winner voting, in which a single candidate must be selected, 
and multi-winner voting \citep{lackner2023multi,faliszewski2017multiwinner,faliszewski2020multiwinner}, in which a subset of candidates of a given size must be selected. In both settings, individuals (henceforth referred to as \textit{voters}) typically report either an approval 
set, a ranking over candidates, or a utility value for each candidate.

Our formal model of deriving collective consent departs from this tradition in two ways. First, on 
the input side, each voter reports \textit{both} a utility for each 
candidate and a set of candidates she finds acceptable, i.e., she consents to. Second, on 
the output side, we place no size constraint on the set of candidates 
receiving societal consent: depending on the instance, every candidate 
may pass, none may, or any subset in between (see 
Appendix \ref{app:RelWork} for a more detailed comparison to prior work).

Beyond these technical differences, the question of collective consent 
calls for a distinctive set of normative principles. 
Unlike standard social choice settings, where the produced output is the implemented outcome, 
{in collective consent society agrees on a pre-selection only excluding indefensible candidates. Which of the candidates from the set of consented candidates are eventually enacted is decided downstream.}\footnote{With this pre-selection being the key idea of the paper, the reader may also think of (societal and individual) ``approval'' instead of ``consent'' without misinterpreting this paper.
}
Each 
{candidate's}
consent decision must therefore be defensible on its own, 
rather than only the returned set as a whole.
Our work is 
organized around three different reasons why a candidate should be 
excluded from societal consent, which we refer to as \emph{blocking 
principles}. We seek to design solution concepts—functions mapping 
voters' attitudes to a subset of candidates—that satisfy them\footnote{We 
intentionally frame these principles in the negative to emphasize that 
individuals and minorities retain the power to withhold consent. These principles precisely form the \textit{boundary} of what should be blocked. 
}:

\begin{description}
    \item[\textbf{Sufficient Support} \normalfont{(Local Blocking)}] 
    A candidate should not receive societal consent if only a few 
    voters consent to it.
    \item[\textbf{Minority Protection} \normalfont{(Coalitional Blocking)}] 
    A candidate should not receive societal consent if this comes at 
    a ``disproportionate'' expense of some minority.
    \item[\textbf{Presence of Decisively Better Candidates} 
    \normalfont{(Relational Blocking)}] A candidate should not receive 
    societal consent when a clearly superior candidate exists.
\end{description}

In \Cref{sec:solution-concepts}, we formalize and justify the three principles: 
What counts as ``sufficient'' support is application-dependent, but the 
underlying requirement is fundamental: without it, there is no 
democratic backing as to why society should consent to the candidate. We formalize this principle by 
requiring that every candidate receiving societal consent must be approved 
by at least a fixed fraction of voters. Beyond broad popularity, we 
must also ensure that a consented candidate does not come at the 
disproportionate expense of a cohesive minority. 
We formalize this principle through the notion of \textit{veto power} 
\citep{Moulin1981,Moulin1982}: a coalition of voters should be able to 
block a fraction of the candidate space proportional to its size, 
regardless of how the rest of the electorate behaves. This guards against majoritarian exploitation, in which a (possibly slim) 
majority approves candidates that impose substantial harm on a cohesive 
minority. Instead, the principle forces the electorate toward consensus 
candidates tolerable to minorities as well.
Lastly, when a clearly superior candidate is 
available, granting societal consent to an inferior one is 
hard to justify: candidates disapproved by some voters 
should not be selected when others are unanimously approved.

We seek solution concepts that satisfy formalizations of all three 
principles. In doing so, we additionally require \emph{transparency 
and accountability}: whenever a candidate is excluded, we must be 
able to point to the voters or candidates responsible for the 
exclusion. In \Cref{sec:solution-concepts}, we present a solution concept 
tailored to each principle. For sufficient support, we characterize 
the $\alpha$-approval threshold solution concept that accepts all candidates that are approved by an $\alpha$ fraction of the voters as the unique concept satisfying the 
principle together with a small set of desirable properties, 
including independence of other candidates. For minority protection, 
we propose the \emph{Disapproval Veto Core (DVC)}, an adaptation of 
Moulin's proportional veto core \citep{Moulin1981,Moulin1982} that incorporates approval 
information.\footnote{Notably, applying the unmodified proportional 
veto core to the societal consent problem leads to undesirable 
outcomes: almost all candidates are approved by almost all 
voters, yet the only chosen candidate receives no approvals at all 
(see \Cref{fig:PVC-Bad} in Appendix \ref{app:RelWork} for details and 
for additional related work on the proportional veto core).} We 
further show that any solution concept satisfying basic 
transparency and well-behavedness axioms must veto at least the 
candidates blocked by DVC in case they want to provide minority protection, even only in pathological extreme cases. 
Finally, for the presence of decisively 
better candidates, we introduce the \emph{disapproval Smith set}, an 
adaptation of the classical Smith set sensitive to approval 
information.

In \Cref{sec:Endowments_Char_DVC}, we develop an alternative 
characterization of DVC as the equilibrium of a \emph{veto budgeting 
game}, in which each voter is endowed with a budget she may spend to 
block candidates. This game-theoretic perspective gives rise to a 
family of DVC variants obtained by adjusting voters' budgets and 
candidates' prices. Among these, we identify the \emph{Approval-Weighted 
Veto Core (AWVC)} as particularly attractive: it lies in a Goldilocks 
zone between proportional veto power and sufficient support, 
combining minority protection with the majority threshold while 
smoothly interpolating between the two. Composing AWVC with the 
disapproval Smith set then yields a principled way to satisfy all 
three blocking principles simultaneously.

In \Cref{sec:experiments}, we complement these theoretical results with experiments on five 
datasets spanning political elections, ethical AI evaluation, and 
moral decision-making scenarios. On every dataset, solution concepts that do not 
satisfy one of our principles also violate it empirically, confirming 
that the principles are not just abstract normative ideas but 
practically relevant and tangible.

\subsection{Preliminaries}

Let $N= \{1,\dots, n\}$ be a finite set of voters and $C$ a set of candidates. %
Throughout the paper, we consider two kinds of candidate sets $C$ with associated measure $\mu$. First, finite sets, in which case we denote as $m$  the number of candidates and set $\mu$ to be the counting measure, i.e., $\mu(X) = \frac{1}{m} \lvert X \rvert$ for all $X\subseteq C$. Second, polytopes in $\mathbb R^d$ for some $d\in \mathbb N$, in which case we set $\mu$ to be the normalized Lebesgue measure over the affine hull of $C$.
In social choice, voters are traditionally assumed to have transitive, complete preference relations over $C$. Subsuming this, we endow each of our voters $i\in N$ with a utility function $u_i : C\to \mathbb R$ and write $c\succ_i d$ if and only if $u_i(c)> u_i(d) $, as well as $c\succsim_i d$ if and only if $u_i(c)\ge u_i(d) $. When $C$ is a polytope, we assume $u_i$ to be continuous.\footnote{For finite $C$, clearly every complete and transitive preference relation can be represented by a utility function. Further, for infinite $C$, \citet{debreu1954representation} showed that every complete and transitive relation satisfying mild additional requirements can be represented by a continuous utility function.}
The collection $u= (u_i)_{i\in N}$ is called a \emph{utility profile}.
Further, each voter $i\in N$ has an \emph{approval set} $A_i\subseteq C$, and we say that \emph{$i$ approves $c$} if $c\in A_i$.\footnote{
Here, ``approval'' is shorthand for individual consent; we use the term to align with the social choice literature and to reserve ``consent'' for the collective notion.}
We assume voters' approval sets and utility functions are consistent in the sense that each voter has a threshold $\tau_i\in \mathbb R$ such that for each $c\in C$ it holds that $c \in A_i$ iff $u_i(c)\ge \tau_i$.
The collection $\tau=(\tau_i)_{i\in N}$ is called a \emph{threshold profile}. A \emph{societal consent instance} (or simply \emph{instance}) is a pair of a utility profile and threshold profile $\mathcal I = (u, \tau)$. Given an instance $\mathcal I$ and a candidate $c\in C$, let  \(\score{c}{\mathcal I} = |\{i\in N\mid c\in A_i\}|\) denote the \emph{approval score of $c$}, i.e., the number of voters who approve candidate \(c\) in \(\mathcal I\). 
A \emph{solution concept} $f$ maps each instance $\mathcal I =(u, \tau)$ to $f(\mathcal I)\subseteq C$, which we interpret as the set of candidates receiving \emph{societal consent}. Note that there are no size restrictions on this subset, i.e., $f$ may return the empty set in some instances, and the entire candidate set in others. 


\section{Blocking Framework and Solution Concepts}
\label{sec:solution-concepts}
In this section, we formalize our blocking principles from \Cref{blocking} and design first solution concepts.

\subsection{Local Blocking: Sufficient Support}
\label{subsec:threshold-concepts}

We begin with local blocking: a candidate $c$ may be unacceptable simply because too few voters approve it. The appropriate threshold is application-dependent: some settings demand broad backing by, say, $90\%$ of voters, while in others $40\%$ may be sufficient. We therefore parameterize the requirement by a threshold $\alpha\in[0,1]\cup\{\infty\}$ and say that a solution concept \(f\) has an \emph{\(\alpha\)-approval threshold} if, for every instance \(\mathcal I=(u,\tau)\) and \(c\in f(\mathcal I)\),
$\score{c}{\mathcal I}\geq \alpha n$ and $\alpha$ is the supremum for which the above holds. 
If \(\alpha\geq\frac12\), we also say that \(f\) satisfies \emph{majority threshold}.
The least restrictive solution concept satisfying the requirement accepts every candidate whose approval score reaches the threshold. Here, we use the convention that $\infty n $ is strictly larger than any natural number. 
\begin{definition}[\(\alpha\)-threshold concept]
    The \emph{\(\alpha\)-threshold concept} $f_\alpha$ returns
    $
        \{c\in C \mid \score{c}{\mathcal I}\geq \alpha n\}.
    $
\end{definition}

Threshold concepts implement an appealing paradigm of collective consent: society fixes a single standard, and each candidate is judged against it independently of the rest. We show that, in some formal sense, they are the \textit{only} concepts realizing this paradigm. 

We first formalize what it means to restrict instances: 
for $\mathcal I=(u,\tau)$, $\mathcal I' = (u',\tau')$ and a selection of candidates $D\subseteq C$, we say that the instances are \emph{equivalent restricted to $D$} and write $\mathcal I \equiv_D \mathcal I'$, if $\succ_i\mid_{D\times D} = \succ_i'\mid_{D\times D}$ and $c\in A_i \Longleftrightarrow c\in A_i'$ for all $i\in N$ and $c\in D$.
We can now formalize the idea of evaluating each candidate independently.
    A solution concept $f$ satisfies \emph{independence of other candidates}, if $c\in f(u, \tau) \Longleftrightarrow c\in f(u', \tau')$ whenever $(u, \tau) \equiv_{\{c\}}(u', \tau')$.

Further, for the criterion to treat all candidates fairly, it should hold that candidates in the same position are treated equally. For this purpose, we say $c,d$ are \emph{equivalent} in instance $\mathcal I$, if for all $e\in C$, and $i\in N$, we have that $u_i(c) = u_i(d)$.

    A solution concept satisfies \emph{anonymity}, if it is invariant to permutation of the voter identities. A solution concept satisfies equality, if in all instances $\mathcal I$ and all $c,d\in C$ with $u_i(c) = u_i(d)$ for all $i\in N$, we have that $c\in f(\mathcal I) \Longleftrightarrow d\in f(\mathcal I)$. 
     A solution concept satisfies \emph{approval monotonicity}, if $c\in f(u, \tau) \Longrightarrow c\in f(u, \tau')$  whenever some voter adds $c$ to their approval set by lowering their threshold to $\tau_i'= u_i(c) < \tau_i$ and nothing else changes.

The key axiom for the characterization is independence of other candidates, and \citet{lackner2025approval} use an analogous independence axiom to characterize an analogue of threshold concepts in the approval-based shortlisting setting.



\begin{restatable}[Appendix \ref{ssec:ThresholdChar}]{proposition}{ThresholdCharacterization}
    A solution concept is equal to a threshold concept if and only if it satisfies independence of other candidates, equality, anonymity, and approval monotonicity.
\end{restatable}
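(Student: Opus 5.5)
The easy direction is to check that every threshold concept $f_\alpha$ satisfies the four axioms. Membership $c\in f_\alpha(\mathcal I)$ depends only on $\score{c}{\mathcal I}$, which is determined by which voters have $c\in A_i$. This gives \emph{independence of other candidates}, since $\mathcal I\equiv_{\{c\}}\mathcal I'$ preserves approvals of $c$. It also gives \emph{anonymity}, since the score is invariant under relabeling voters. For \emph{equality}, note that if $u_i(c)=u_i(d)$ for all $i$, then $c\in A_i\Leftrightarrow d\in A_i$ because approval is given by the threshold $\tau_i$, so the two scores coincide. \emph{Approval monotonicity} holds because adding $c$ to some $A_i$ raises $\score{c}{\mathcal I}$ by one and leaves the threshold comparison satisfied. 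The case $\alpha=\infty$ returns the empty set everywhere and trivially satisfies all four axioms.

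For the converse, fix $f$ satisfying the four axioms. The plan is to show that $c\in f(\mathcal I)$ depends only on $\score{c}{\mathcal I}$ and is upward closed in it. Take $c$ and an instance $\mathcal I$. By independence, $c\in f(\mathcal I)$ depends only on the vector $(\mathbb 1[c\in A_i])_{i\in N}$; the restriction of $\succ_i$ to $\{c\}\times\{c\}$ is trivial. By anonymity, only the number $k=\score{c}{\mathcal I}$ of approving voters matters. To make anonymity applicable, I would construct a permuted instance $\mathcal I^\pi$ and use independence to transfer membership between $\mathcal I^\pi$ and any instance with the same approval pattern for $c$.

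The subtle step is showing that the resulting predicate $P(k)$ is the same for different candidates $c$ and $d$; this is where equality enters. Given $c$ with $k$ approvals in $\mathcal I$ and $d$ with $k$ approvals in $\mathcal I'$, I would build an auxiliary instance $\mathcal J$ in which $u_i(c)=u_i(d)$ for all $i$, with the same $k$ voters approving both. For each voter, set $u_i(c)=u_i(d)=1$ and all other utilities to $0$. Then choose $\tau_i=1$ for approving voters and $\tau_i=2$ for the others. Independence links $\mathcal J$ to $\mathcal I$ through $c$ and to $\mathcal I'$ through $d$ (after permuting voters by anonymity), and equality forces $c\in f(\mathcal J)\Leftrightarrow d\in f(\mathcal J)$. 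For polytope candidate sets, this construction needs the continuous-utility variant: take $u_i$ continuous and equal on $c$ and $d$, for instance a function that is constant on a line through $c$ and $d$. I would also need $|C|\ge2$; if there is a single candidate, the equality step is vacuous and it suffices to work with $P$ for that candidate.

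Monotonicity then gives $P(k)\Rightarrow P(k+1)$. Starting from an instance in which $c$ has $k<n$ approvals, lower the threshold of one non-approving voter to $u_i(c)$; this step works directly with the threshold-consistent model. Hence $\{k: P(k)\}$ is an up-set of $\{0,\dots,n\}$, equal to $\{k\ge k^*\}$ for some $k^*$ or empty. Setting $\alpha=k^*/n$, or $\alpha=\infty$ if the set is empty, yields $f=f_\alpha$. One subtlety remains: $\alpha$ might depend on $n$, and there the paper's convention of a fixed $N$ suffices. I expect the main obstacle to be constructing $\mathcal J$ rigorously within the threshold-consistent, possibly continuous model, so that the single-candidate restrictions line up while $c$ and $d$ stay equivalent.
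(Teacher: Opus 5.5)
Your proof is correct and follows essentially the same route as the paper's: independence plus anonymity reduce membership of $c$ to its approval count, approval monotonicity makes the set of accepted counts upward closed, and equality transfers the threshold from one candidate to all others via an auxiliary instance in which $c$ and $d$ get identical utilities and the same approvers. The differences are only organizational: the paper anchors the threshold at the minimal $x^*$ over canonical instances $\mathcal I_0,\dots,\mathcal I_n$ (the first $x$ voters approve all of $C$), whereas you define a per-candidate predicate $P(k)$. You also verify the easy direction, which the paper leaves implicit.
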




\subsection{Coalitional Blocking: Minority Protection}

Threshold concepts judge candidates by a single statistic, the approval score. 
But approval scores cannot tell whether disapproval of a candidate $c$ is scattered across 
voters who barely disapprove of $c$ and otherwise disagree on everything, or 
whether it comes from a cohesive minority that uniformly ranks $c$ near the 
bottom of its preferences. The latter case is precisely where coalitional 
blocking matters: a small but coherent group with strong shared objections 
should be able to rule $c$ out, even if other voters derive high value from it.

This motivates the formalization of blocking coalitions.
To maintain our goal of transparency, 
it is crucial that a concept implementing coalitional blocking lets us pinpoint, for every excluded candidate, the coalition responsible for the exclusion. Inspired by work from \citet{DoPe2026a} in multiwinner voting, we formalize 
this requirement through witness functions:
\begin{definition}
    A \emph{witness function} $w$ for a solution concept $f$ is a function that maps a candidate $c$ and instance $\mathcal I$ to a collection of coalitions $w(c,\mathcal I)\subseteq 2^N$ such that
$c\in f(\mathcal I)$ iff $ w(c,\mathcal I) = \emptyset$. 
\end{definition}
Each coalition $S\in w(c,\mathcal I)$ is called a \emph{witness} for $c$'s exclusion or a \emph{blocking coalition} for $c$.
By themselves, witness functions do not restrict which coalitions ought 
to be able to block; they only require that we can identify who is responsible for each 
exclusion---a minimal form of accountability. 

\paragraph{Quantifying minority protection.}
Next, we focus on the protection aspect of our desired principle: a coalition that 
strongly objects to $c$ should be able to exclude it from societal consent, 
regardless of what the rest of the electorate reports. To formalize the extent to which a solution concept allows for such protection of minorities, following \citet{Moulin1982}, we measure a coalition's blocking power 
through its \emph{veto power} $v_f(s)$: the largest measure of candidates a coalition of size $s$ can force $f$ to exclude, 
even if the remaining voters have adversarial preferences.
\begin{definition}[veto power] For a solution concept $f$ and coalition size $s \le n$, the \emph{veto power} \(v_f(s)\) is the supremum 
over all values $\nu$ such that for every coalition \(S\subseteq N\) with \(|S|=s\) and every well-behaved\footnote{In continuous spaces, we want to avoid technicalities and therefore call a set $T\subseteq C$ \emph{well-behaved} if there exists some continuous utility function $u_i$ over $C$ such that $u_i(c)> u_i(d)$ for all $c\in C\setminus T$ and $d\in T$.} set \(T\subseteq C\) with \(\mu(T)\leq \nu\), there is a partial instance \(\mathcal I_S\) such that
$f(\mathcal I_S,\mathcal I_{N\setminus S})\cap T=\emptyset$ 
for every partial instance \(\mathcal I_{N\setminus S}\). 
\end{definition}

The implications of veto power can be illustrated well in scenarios with a 
polarized electorate, where there is maximal disagreement between a minority 
group $S$ and the majority group $N\setminus S$, i.e., the candidates most liked by the majority are most harmful to the minority. The veto power then 
prescribes that the minority group $S$ can exclude a $v_f(|S|)$ fraction of 
candidates from societal consent. A minimal minority-protection requirement 
is thus $v_f(s) > 0$ for some $s < n/2$, while if one looks for
proportional veto power, $v_f(s)$ should scale with $s/n$. \citet{Moulin1981} showed that $s/n$ is the largest possible veto power that is achievable without empty societal consent sets.




\paragraph{The Disapproval Veto Core.}
Following the work of \citeauthor{Moulin1981}, we now define a solution concept that admits a witness function that exactly identifies the coalitions 
needed for proportional minority protection. 
For an instance 
$\mathcal I = (u,\tau)$, $S\subseteq N$, and $c \in C$, let 
$B_S(c)=\{d\in C \mid {d\succ_i c} \text{ for all } i\in S\}$ be the set of candidates that every voter in \(S\) strictly prefers to \(c\). Thus, \(B_S(c)\) measures how much of the candidate space the coalition unanimously regards as better than \(c\).

\begin{definition}[Disapproval Veto Core]
  Under the \emph{disapproval veto core} (\dvc), candidate \(c\) is \emph{blocked by a non-empty coalition \(S\subseteq N\)} if
(i) \(\mu(B_S(c)) > 1-\frac{|S|}{n}\), and
      (ii) \(c\notin A_i\) for all \(i\in S\). 
  The \dvc{} consists of all candidates that are not blocked; the associated witness function $w_\dvc$ returns for candidate $c$ the coalitions blocking $c$.
\end{definition}

Condition~(i) requires a blocking coalition to unanimously rank more than a 
$1 - \nicefrac{|S|}{n}$ fraction of the candidate space above $c$; 
equivalently, $c$ lies in the coalition's bottom $\nicefrac{|S|}{n}$-part of 
the candidate space.
Condition~(ii) ensures a voter cannot help block a candidate she approves: 
approval of $c$ disqualifies a voter from contributing to its exclusion. Blocking coalitions under DVC satisfy four desirable properties, which we formalize next using the witness-based framework. 

First, and normatively most important, for minority protection, we consider idealized instances $\mathcal I_S$ where the electorate is polarized, i.e., split into two perfectly cohesive parts, a minority $S\subseteq N$ and a majority $N\setminus S$ with opposing ordinal preferences, such that all voters in $S$ disapprove of all candidates in $C$. 
A natural demand is that if this minority constitutes $\frac sn$ of the electorate, they should be able to veto up to the most harmful $\frac sn$ candidates. 
Therefore, let $C'\subseteq C$ be a lower contour set of the voters in $S$ with $\mu(C')<\omega \frac sn$.
We say that $(f,w)$ satisfies \emph{polarized minority protection}
if $S\in w(c,\mathcal I_S)$ for all $S\subseteq N$ and $c\in C'$. 

Second, consider a set of voters $S$ which blocks a candidate $c$. If one of these voters expands 
their approval set without changing the approval of $c$, then for this voter, the quality of $c$ remains as before, while for all other candidates their quality can only have increased. As this voter became more accepting of other candidates, $c$ should remain blocked by $S$. Formally, we say that $(f,w)$ satisfies \emph{approval competition monotonicity}, if $S\in w(c, (u, \tau))$ implies that $S\in w(c,(u, \tau'))$ for all profiles $(u, \tau)$ and $\tau'$ derived from $\tau$ by letting one voter approve more candidates but not change their approval of $c$.

In a similar spirit, if a voter changes her ranking such that afterwards, there are more candidates strictly better than $c$, then $c$ should remain blocked. Formally, $(f,w)$ satisfies  \emph{upper contour monotonicity}, if $S\in w(c,(u, \tau))$ implies that $S\in w(c,(u', \tau'))$, where $u'$ is derived from $u$ by letting a voter $i\in S$ change $u_i$ to $u_i'$ such that $u_i(d) \ge \tau_i \Longleftrightarrow u'_i(d) \ge \tau'_i$ and 
$\{d\in C\mid u_i(d)> u_i(c)\} \subseteq \{d\in C\mid u'_i(d)> u'_i(c)\}$.

Finally, \emph{locality} states that a whether a set $S$ is a witness does not depend on voters outside of $S$: $S\in w(c,(u,\tau)) \Longleftrightarrow S\in w(c,(u',\tau'))$ whenever the two instances coincide over $S$, i.e., $u\mid_{S} = u'\mid_{S}$ and $\tau\mid_{S} = \tau'\mid_{S}$.\footnote{Note that locality only is postulated for instances of equal electorate size.} 

Importantly, only \dvc{} refinements can satisfy all these properties.
Thus, DVC constitutes the minimal blocking requirement imposed by admitting a witness function satisfying these axioms. In the finite setting with rankings and approvals, upper contour monotonicity can be split into candidate-specific modifications, which enables an instructive step-by-step explanation starting from a polarized profile for why a candidate was blocked (Appendix \ref{ssec:WitnessCharDVC}).


\begin{restatable}{theorem}{WitnessCharacterizationGeneral} \label{prop:DVC_Witness_Characterization_Continuous}
    Let $w$ be a witness function for $f$. If $(f,w)$ satisfies polarized minority protection, locality, upper contour monotonicity, and approval competition monotonicity, then $f\subseteq DVC$.\footnote{A full proof characterizing a class of solution concepts containing $\dvc$ appears in Appendix \ref{ssec:WitnessCharDVC}.}   
\end{restatable}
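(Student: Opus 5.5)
We show the contrapositive in the form: if $c\notin \dvc(\mathcal I)$ then $c\notin f(\mathcal I)$, i.e.\ $w(c,\mathcal I)\neq\emptyset$. So fix an instance $\mathcal I=(u,\tau)$ and a candidate $c$ blocked under \dvc{} by a non-empty coalition $S$. Thus (i) $\mu(B_S(c))>1-\frac{|S|}{n}$ and (ii) no voter in $S$ approves $c$. We will show that $S\in w(c,\mathcal I)$. The plan is to start from a polarized instance in which $S\in w(c,\cdot)$ is guaranteed by polarized minority protection, and then transform it step by step into $\mathcal I$ while preserving membership of $S$ in the witness set.

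\textbf{Step 1: the polarized starting instance.} Let $B=B_S(c)$ and $L=C\setminus B$, so $c\in L$ and $\mu(L)<\frac{|S|}{n}$. Build a polarized instance $\mathcal I_S$ as follows. Every voter in $S$ has a common utility $u^0$ and a threshold above $\max u^0$, so voters in $S$ approve nothing. The utility $u^0$ is chosen so that $L$ is a lower contour set, all of $B$ lies strictly above $c$, and $c$ is at the top of $L$. Concretely, take $u^0=1$ on $B$, $u^0=0$ on $L$, and handle ties as follows; in the continuous case we also need well-behavedness of $L$. Voters outside $S$ receive the reversed ordinal preferences. With this construction, polarized minority protection yields $S\in w(c,\mathcal I_S)$. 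The subtle point here is that $L$ must be a lower contour set with $\mu(L)<\omega\frac{|S|}{n}$, which requires care with $\omega$ and with the exact shape of $L$. If $B$ is not directly realizable as an upper contour set, we instead use a smaller well-behaved lower contour set $C'\ni c$. This works because later steps only enlarge upper contour sets, so we merely need $c$'s upper contour set in $\mathcal I_S$ to be contained in the one in $\mathcal I$.

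\textbf{Step 2: transforming into $\mathcal I$.} First, by locality, we replace the voters outside $S$ by their actual preferences and thresholds in $\mathcal I$; this does not affect $S\in w(c,\cdot)$. Next, we handle the voters $i\in S$ one at a time.

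We first apply approval competition monotonicity. This lets voter $i$ enlarge their approval set to $A_i$ without changing the approval of $c$, which stays disapproved since $c\notin A_i$ by (ii). Because this axiom is phrased with a fixed $u$, we enlarge approvals under the current utility.

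We then apply upper contour monotonicity. This lets voter $i$ switch to $u_i$ with threshold $\tau_i$, keeping the approval set fixed, provided the upper contour set of $c$ only grows. This holds because $B\subseteq\{d: u_i(d)>u_i(c)\}$ by the definition of $B_S(c)$. The order of these two moves may need to be arranged so that each step is legal. Specifically, upper contour monotonicity requires the approval pattern to be unchanged, while approval competition monotonicity requires the utility to be unchanged. We therefore first pick an intermediate utility that orders $A_i$ on top and then switch.

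\textbf{Main obstacle.} The hardest step is this interleaving. After the approval expansion under $u^0$, the approved set must be an upper contour set of $u^0$, so we cannot approve an arbitrary $A_i$. To get around this, we first use upper contour monotonicity to move to an intermediate utility $u_i'$ with the following properties: $u_i'$ ranks $A_i$ above everything else, then $B\setminus A_i$, then $c$, then the rest, and it keeps approving nothing. This step is legal because $c$'s upper contour set only grows. We then lower the threshold to approve exactly $A_i$, which is legal since $c\notin A_i$. Finally, we move to $u_i$, again by upper contour monotonicity, since the approvals agree and $B\subseteq$ the upper set of $c$ under $u_i$.

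After processing all voters in $S$, we arrive at $\mathcal I$ with $S\in w(c,\mathcal I)$. Hence $c\notin f(\mathcal I)$, and therefore $f\subseteq\dvc$.
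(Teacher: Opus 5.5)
Your architecture matches the paper's: start from a polarized instance in which $L=C\setminus B_S(c)$ is the minority's lower contour set, convert each $i\in S$ via upper contour monotonicity and approval competition monotonicity, and fix $N\setminus S$ by locality.

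The interleaving you call the main obstacle is not one. Upper contour monotonicity lets you choose the new threshold freely, as long as the approval set is unchanged. So you can jump from $u^0$ straight to the true $u_i$ with a threshold above $\max u_i$; this is legal because $B\subseteq B_i(c)$. You then lower that threshold to $\tau_i$ by approval competition monotonicity. That is exactly the paper's two-step conversion. Your intermediate $u_i'$ is redundant (take $u_i'=u_i$), and for polytopes you would otherwise still owe a proof that it can be chosen continuous.

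The genuine gap is Step~1 when $C$ is a polytope; for finite $C$ your indicator construction is fine. For a polytope, the indicator of $B$ is not a continuous utility, since the polytope is connected and $\emptyset\neq B\neq C$. Your fallback also goes in the wrong direction:
\begin{itemize}
\item Suppose $c\in C'\subsetneq L$ with $C'$ a lower contour set of $u^0$. Then $\{d: u^0(d)>u^0(c)\}\supseteq C\setminus C'\supsetneq B$.
\item Any $d\in L\setminus C'$ has some $j\in S$ with $u_j(d)\le u_j(c)$. So the upper contour monotonicity step for voter $j$ is not permitted.
\item Your own requirement is that $c$'s strict upper contour set in $\mathcal I_S$ lie inside every $B_i(c)$, hence inside $B$. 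That forces $C'\supseteq L$, not $C'\subseteq L$.
\end{itemize}

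The missing observation is that no fallback is needed. Each $u_i$ is continuous, so $B=\bigcap_{i\in S}\{d\in C: u_i(d)>u_i(c)\}$ is relatively open and $L$ is closed. Take $u^0(d)=\mathrm{dist}(d,L)$: it is continuous, vanishes exactly on $L$, and is positive exactly on $B$. Thus $L=\{u^0\le 0\}$ is a well-behaved lower contour set containing $c$ with $\mu(L)<|S|/n$, and $\{u^0>u^0(c)\}=B$. This is precisely the paper's starting instance. With it in place, your argument goes through; since $\omega=1$ here, no extra care about $\omega$ is needed.
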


\subsection{Relational Blocking: Decisively Better Candidates}
\label{subsec:global-blocking}
The third blocking principle works via comparing candidates: a candidate may be unacceptable to society because a decisively better one exists. 
Canonical comparisons arise when unanimously approved candidates are present: such candidates should be considered strictly better than all candidates disapproved by at least one voter, and these should be blocked. 


\begin{definition}[exclusive unanimity]
Let $U(\mathcal I) =\{c\in C \mid \score{c}{\mathcal I}=n\}$ denote unanimously approved candidates. 
A solution concept \(f\) satisfies \emph{exclusive unanimity} if for every instance \(\mathcal I\) with $U(\mathcal I) \neq \emptyset$, 
$f(\mathcal I) \subseteq U(\mathcal I)$. 
$f$ satisfies \emph{unanimous approval consistency} if $f(\mathcal I) = U(\mathcal I)$ whenever $U(\mathcal I) \neq \emptyset$.
\end{definition}

To construct a solution concept based on this principle, we turn to the standard tool for comparing candidates in social choice theory: the pairwise majority relation. This relation may be cyclic, however \citep{CondorcetCycles}.  The classical response is to consider the \textit{Smith set}---the set of candidates that reach every other candidate via paths along the majority relation---which serves as a baseline notion of quality with respect to pairwise comparisons.
To adapt the Smith set to our setting, we incorporate approval information as follows: 
for an instance \(\mathcal I=(u,\tau)\),  define the \emph{disapproval-sensitive majority relation} \(R=R(\mathcal I)\) by setting \(c \mathrel{R} d\) if and only if 
$n(c,d) \ge n(d,c)$, where 
\[n(c,d) = \left |\{i\in N \mid c\in A_i,\ d\notin A_i\} \right| + \left|\{i\in N \mid c\succ_i d,\ c,d\notin A_i\}\right|.\]
With respect to $R$, unanimously approved candidates win the pairwise comparison against all non-unanimously approved candidates. We therefore consider the Smith set over this relation.

\begin{definition}[disapproval Smith set]
    The \emph{disapproval Smith set} is
    $
        \tc(\mathcal I)
        =
        \{c\in C \mid c \mathrel{R^*} d \text{ for every } d\in C\},
    $
    where \(R^*\) denotes the transitive closure of \(R\), i.e., \(c \mathrel{R^*} d\) if there exists a finite sequence of candidates $
    c=d_0,d_1,\dots,d_t=d$
such that \(d_{j-1} \mathrel{R} d_j\) for every \(j=1,\dots,t\).
\end{definition}

Under this solution concept, the blocked candidates are exactly those outside $\tc(\mathcal I)$, i.e., the candidates that fail to reach some other candidate via $R^*$.
\begin{restatable}{theorem}{dss}
The disapproval Smith set satisfies unanimous approval consistency. 
\end{restatable}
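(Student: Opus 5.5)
Assume $U=U(\mathcal I)\neq\emptyset$; we must show $\tc(\mathcal I)=U$. The plan is to compare $n(c,d)$ and $n(d,c)$ for $c\in U$, and then use the fact that $R^*$ can never leave a set that is strictly beaten by its complement.

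\textbf{Step 1: pairwise comparisons involving $U$.} Take $c\in U$ and $d\notin U$. Every voter approves $c$. Hence every voter who disapproves $d$ contributes to $n(c,d)$, so $n(c,d)=n-s_d(\mathcal I)$. Since $d\notin U$, at least one voter disapproves $d$, so $n(c,d)\ge 1$. On the other side, no voter has $d\in A_i$ with $c\notin A_i$, and no voter has both $c,d\notin A_i$. Therefore $n(d,c)=0$. We get $n(c,d)>n(d,c)$, so $c\mathrel R d$ holds and $d\mathrel R c$ fails.

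Now take $c,c'\in U$. Both terms in the definition of $n$ require some disapproval, so $n(c,c')=n(c',c)=0$. Hence $c\mathrel R c'$ holds for all $c,c'\in U$ (including $c=c'$).

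\textbf{Step 2: $U\subseteq \tc(\mathcal I)$.} By Step 1, any $c\in U$ reaches every candidate of $C$ in a single $R$-step. Therefore $c\mathrel{R^*}d$ for every $d\in C$.

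\textbf{Step 3: $\tc(\mathcal I)\subseteq U$.} Let $d\notin U$ and fix some $c\in U$. Suppose, for contradiction, that $d\mathrel{R^*}c$ via a path $d=d_0,d_1,\dots,d_t=c$. Consider the first index $j$ with $d_j\in U$; it exists because $d_t=c\in U$, and $j\ge 1$ because $d_0\notin U$. Then $d_{j-1}\notin U$, $d_j\in U$, and $d_{j-1}\mathrel R d_j$. This contradicts Step 1. So $d$ fails to reach $c$, and hence $d\notin\tc(\mathcal I)$.

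No step is a real obstacle. The only point needing care is the first-entry argument in Step 3: it shows that $U$ is closed under predecessors in $R^*$.
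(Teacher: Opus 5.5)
Your proof is correct and follows the same route as the paper: both show $n(c,c')=n(c',c)=0$ for $c,c'\in U(\mathcal I)$, and $n(c,e)>0=n(e,c)$ for $c\in U(\mathcal I)$ and $e\notin U(\mathcal I)$. Your first-entry argument in Step 3, which shows that no candidate outside $U(\mathcal I)$ can reach $U(\mathcal I)$ along an $R$-path, spells out a step that the paper's one-line conclusion leaves implicit.
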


\section{Veto Budgets for Combining 
Blocking Principles}
\label{sec:Endowments_Char_DVC}

In the previous section, we introduced solution concepts motivated by individual blocking principles. We first confirm that each concept adheres to its corresponding principle and only this principle.

\begin{restatable}{proposition}{RulesAndPrinciples}\label{prop:rules_and_principles}
$\dvc$ satisfies $v_{\dvc}(s) = \frac sn$ and $f_\alpha$ satisfies $\alpha$-approval threshold.\footnote{Note that we have already stated unanimous approval consistency of $\tc$.}
Further, $\dvc$ and $\tc$ have an approval threshold of $0$.
$f_\alpha$ has a veto power function with $v(s) = 0$ for all $s\le n- \alpha n$, $\tc$ has $v_\tc(s)= 0$ for all $s< \frac n2$.
$\dvc$ and $f_\alpha$, $\alpha<1$ violate exclusive unanimity.
\end{restatable}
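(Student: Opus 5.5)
I will prove the claims of \Cref{prop:rules_and_principles} one at a time, each by direct construction from the definitions; only the lower bound on $v_\dvc$ needs real care.

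\textbf{Veto power of \dvc.} I split $v_\dvc(s)=\frac sn$ into two inequalities.

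For $v_\dvc(s)\ge \frac sn$, fix a coalition $S$ with $|S|=s$ and a well-behaved $T$ with $\mu(T)<\frac sn$.
\begin{itemize}
\item Use the continuous utility witnessing well-behavedness as the common $u_i$ for all $i\in S$, so that every candidate outside $T$ is strictly preferred to every candidate in $T$.
\item Set each $\tau_i$ for $i \in S$ above the maximum of $u_i$, so members of $S$ approve nothing.
\item Then for every $c\in T$ we have $B_S(c)\supseteq C\setminus T$, so $\mu(B_S(c))\ge 1-\mu(T)>1-\frac sn$. Condition (ii) holds trivially, so $S$ blocks $c$ whatever the other voters report.
\item Taking the supremum over admissible $\nu<\frac sn$ gives the bound. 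The boundary case $\mu(T)=\frac sn$ is harmless because veto power is defined as a supremum.
\end{itemize}

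For $v_\dvc(s)\le\frac sn$, I rely on Moulin's bound quoted earlier: veto power above $s/n$ is incompatible with nonempty output. If \dvc{} is nonempty on every instance, that bound applies directly. Otherwise I argue directly.
\begin{itemize}
\item Take $T$ with $\frac sn<\mu(T)<1$ and let all voters outside $S$ rank $T$ on top and approve exactly $T$.
\item A blocking coalition for $c\in T$ must then lie inside $S$, since members of $N\setminus S$ approve $c$.
\item Any $S'\subseteq S$ needs $\mu(B_{S'}(c))>1-\frac{|S'|}{n}\ge 1-\frac sn$. It must therefore push $c$ into its bottom $\frac{|S'|}{n}$ part, and only a measure of at most $\frac sn$ of $T$ can sit there simultaneously.
\end{itemize}
This last step is where I expect the most care: the blocked set must be bounded uniformly over all sub-coalitions of $S$. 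I would handle it with a counting/measure argument on the bottom $\frac sn$ segment of any fixed voter in $S'$, since $B_{S'}(c)\subseteq B_{\{i\}}(c)$ for every $i\in S'$.

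\textbf{Approval thresholds.}
\begin{itemize}
\item $f_\alpha$ has an $\alpha$-approval threshold by definition. The supremum is attained because there are instances with a candidate scoring exactly $\lceil\alpha n\rceil$ that is accepted.
\item For \dvc{} and \tc{} to have threshold $0$, I exhibit instances where a candidate with zero approvals is selected. For \dvc, take all voters approving nothing and all equivalent candidates. Then $B_S(c)=\emptyset$ for every $S$, so nothing is blocked and every candidate, with score $0$, is selected.
\item For \tc, the same instance makes $R$ total, since all the $n(\cdot,\cdot)$ counts are $0$. Hence $\tc=C$.
\end{itemize}

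\textbf{Veto power of $f_\alpha$.}
\begin{itemize}
\item If $s\le n-\alpha n$, let the $n-s\ge\alpha n$ voters outside $S$ approve everything. Then every candidate scores at least $\alpha n$ and is accepted regardless of $S$, so no nonempty $T$ can be excluded and $v=0$.
\item For \tc{} with $s<\frac n2$, let $N\setminus S$ (a strict majority) approve everything and be indifferent across candidates.
\item Then each $n(c,d)$ counts only voters of $S$. Having $S$ force $T$ out would need some $d\notin R^*$-reach from $c$. Instead, let the majority approve all of $C$ with $S$ approving nothing. Then $n(c,d)=n(d,c)=0$ among outsiders, and $R$ is total, so all candidates are in \tc.
\item Since the majority is identical across candidates, the comparison is decided only by $S$'s preferences. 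I would instead make the majority rank the candidates in the exact reverse of $S$ while approving all of them, so that $n(c,d)\ge n-s>s$ whenever the majority prefers $c$. This yields a transitive majority order whose top candidate, the worst for $S$, is the unique Smith winner, and $S$ cannot exclude it.
\item Choosing that top to lie in an arbitrary small $T$ shows no positive measure is guaranteed, so $v_\tc(s)=0$.
\end{itemize}

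\textbf{Exclusive unanimity.}
\begin{itemize}
\item For $f_\alpha$ with $\alpha<1$: take $n$ large, one unanimously approved candidate, and one candidate approved by all but one voter. The latter has score $n-1\ge\alpha n$ and is accepted.
\item For \dvc: take two candidates, $a$ approved by all voters and $b$ approved by all but voter $1$, who ranks $b$ below $a$. Then $B_{\{1\}}(b)=\{a\}$ has measure $\frac12$, which does not exceed $1-\frac1n$ for $n\ge2$. Coalitions containing other voters are ruled out by (ii). So $b$ is in \dvc{} although $b\notin U$.

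\end{itemize}
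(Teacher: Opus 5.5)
The upper bound $v_{\dvc}(s)\le\frac sn$ is not established. First, Moulin's remark cannot serve as a bound on each $v(s)$ separately for an arbitrary rule with nonempty output. Splitting $N$ into disjoint coalitions of size $s$ yields a contradiction only when $s$ divides $n$, and the per-size claim is false in general. For $n=3$, sequential pairwise majority along a fixed agenda always outputs exactly one candidate, yet any two voters reporting the same ranking force their common top, so $v(2)\ge\frac{m-1}{m}>\frac23$ for $m\ge 4$.

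You therefore need a $\dvc$-specific argument, and your fallback does not supply one. From $B_{S'}(c)\subseteq B_{\{i\}}(c)$ you only learn that a candidate blocked by $S'$ lies in the bottom $\frac{|S'|}{n}$ of every member of $S'$. Different candidates of $T$ can be blocked by different sub-coalitions, so the set of candidates meeting this necessary condition for some $S'\subseteq S$ can exceed $\frac sn$. For example, if voters $1,2$ have disjoint bottom $\frac1n$-slices but a common second slice, this set has measure $\frac3n$ for $S=\{1,2\}$, while at most $\frac2n$ is actually blocked.

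The missing idea is the paper's sequential elimination. Let all outsiders approve all of $C$, so only sub-coalitions of $S$ can block. Let the members of $S$ in turn delete the lowest-utility part of measure $\frac1n$ of the candidates still remaining (fractionally if $C$ is finite). The deleted set has measure $\frac sn<\mu(T)$, so some $c\in T$ survives. Each voter $i$ ranks its own slice $L_i$ weakly below $c$, and the slices are disjoint. Hence $\mu(C\setminus B_{S'}(c))\ge\frac{|S'|}{n}$ for every nonempty $S'\subseteq S$, and $c$ is unblocked whatever $S$ reported.

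Your construction for $v_{\tc}(s)=0$ with $s<\frac n2$ also fails. You let the majority approve every candidate, but a voter with $c,d\in A_i$ contributes to neither $n(c,d)$ nor $n(d,c)$. The majority then has no influence on $R$, and $S$ can approve nothing and unanimously rank $T$ last, which removes $T$ from $\tc$. Your intermediate attempt fails for the same reason, and ``the exact reverse of $S$'' is undefined when members of $S$ disagree.

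The fix, as in the paper: fix any $c\in T$ and let every $i\in N\setminus S$ rank $c$ strictly first while approving nothing (or only $c$, via $\tau_i=u_i(c)$). Each outsider then counts toward $n(c,d)$ for every $d\neq c$, so $n(c,d)\ge n-s>s\ge n(d,c)$, and $c\in\tc$ regardless of what $S$ reports.

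The remaining parts are correct. Your instance with indifferent voters approving nothing (threshold $0$ for both $\dvc$ and $\tc$) is valid and more self-contained than the paper's version, which relies on nonemptiness of $\dvc$. The same holds for your two-candidate counterexample to exclusive unanimity for $\dvc$, where the paper uses a large-$n$ construction.
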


We now seek concepts that jointly 
satisfy multiple principles, and we will do so by refining \dvc{}. 
To gain a better understanding of \dvc{} and motivate the following definitions, we 
characterize it via a \emph{veto budgeting game} in which 
voters receive individual budgets to spend on excluding candidates. 
Scaling voters' budgets 
then yields variants of \dvc{} with increased veto power and non-trivial approval thresholds. En passant, we obtain a second explanation system for \dvc{}.

\subsection{Veto Budgeting Games and a Family of Disapproval Veto Cores}\label{ssec:veto_budgeting_games}

Consider a \emph{veto budgeting game} in which each voter receives budget $1/n$ and allocates it across candidates via a measurable function $\lambda_i:C\to[0,1]$ satisfying
$\int_C \lambda_i(c)\,d\mu < \frac{1}{n}$. Given an allocation profile $\lambda=(\lambda_i)_{i\in N}$, the candidates receiving societal consent are those not fully purchased for exclusion:
$
    W(\lambda)
    =
    \left\{
        c\in C
        \mid
        \sum_{i\in N}\lambda_i(c)<1
    \right\}.$
For $T\subseteq C$, set $p_i(T) = \inf_{c\in T}{u_i(c)}$ as the \emph{pessimist utility} of voter $i\in N$ for $T$.
A coalition $S$ has an \emph{$\varepsilon$-deviation} from $\lambda$ if it can reallocate its own budget via some $\lambda'_S$ so that, for every response $\lambda'_{N\setminus S}$ of the remaining voters, every member of $S$ improves her pessimist value by at least $\varepsilon$:
$p_i( W(\lambda') )
    \ge
    p_i( W(\lambda) )+\varepsilon$ for all $i \in S$. 
From a normative perspective, when determining which candidates receive societal consent, deviations arising from voters participating in blocking their approved candidates are illegitimate. We therefore say that a deviation $\lambda'_S$ is \emph{disapproval-restricted} if $c \notin A_i$ for all $i\in S$ and $c\notin W(\lambda'_S)$, i.e., a voter refuses to partake in a deviation if this results in some of her approved candidates being blocked.
\begin{theorem}[Veto-budget characterization of \dvc; Appendix~\ref{app:RelWorkBudgetingGames}]\label{thm:GT_Char_PVC}

For every instance $\mathcal I=(u,\tau)$ and candidate $c\in C$, we have $c\notin \dvc(\mathcal I)$ iff in the veto budgeting game there exists $\varepsilon>0$ such that every allocation profile $\lambda$ with $c\in W(\lambda)$ admits a disapproval-restricted $\varepsilon$-deviation.
\end{theorem}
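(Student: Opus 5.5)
We prove the two directions separately. Throughout, a coalition $S$ \emph{controls} a set $T\subseteq C$ when its members can jointly put weight $1$ on every candidate of $T$ within their total budget. Because each voter's budget is strictly below $1/n$, this happens exactly when $\mu(T)<|S|/n$.

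\textbf{($\Rightarrow$)} Suppose $c\notin\dvc(\mathcal I)$. Then some non-empty $S$ satisfies $c\notin A_i$ for all $i\in S$ and $\mu(B_S(c))>1-\frac{|S|}{n}$. Let $L=C\setminus B_S(c)$, so $\mu(L)<\frac{|S|}{n}$ and $c\in L$.

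The first step is to shrink $L$ to something $S$ can afford. Consider the sets $B^\delta=\{d\mid u_i(d)>u_i(c)+\delta \text{ for all } i\in S\}$. As $\delta\downarrow 0$ these increase to $B_S(c)$. By continuity of measure, some $\delta>0$ has $\mu(B^\delta)>1-\frac{|S|}{n}$. Put $\varepsilon=\delta$ and $L^\delta=C\setminus B^\delta$.

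The deviation is that $S$ spreads its budget to fully purchase $L^\delta$. This is feasible with strict inequality. The leftover budget is either discarded or placed on a null set; this is the place where the well-behavedness and measure technicalities must be checked.

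Under any response of $N\setminus S$, the consented set satisfies $W(\lambda')\subseteq B^\delta$. Hence $p_i(W(\lambda'))\ge u_i(c)+\delta$ for every $i\in S$.

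Now take any $\lambda$ with $c\in W(\lambda)$. Then $p_i(W(\lambda))\le u_i(c)$, so every member of $S$ gains at least $\varepsilon$. The deviation is disapproval-restricted provided no member of $S$ approves a purchased candidate. This is the subtle point. Since $c\notin A_i$, every $d$ with $u_i(d)\le u_i(c)$ is disapproved by $i$. However, $L^\delta$ also contains candidates below $u_i(c)+\delta$ for only some members. I therefore plan to reinterpret the restriction so that it concerns only the challenged candidate $c$ (``$c\notin A_i$ for all $i\in S$''), which is how the definition literally reads. Alternatively, choose $\delta$ small enough that $u_i(c)+\delta<\tau_i$ for all $i\in S$, which is possible since $u_i(c)<\tau_i$. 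I will adopt the second option so that both readings hold. If $W(\lambda')=\emptyset$, we use the convention $\inf\emptyset=+\infty$.

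\textbf{($\Leftarrow$)} We argue by contraposition. Let $c\in\dvc(\mathcal I)$ and $\varepsilon>0$. The goal is to construct $\lambda$ with $c\in W(\lambda)$ that admits no disapproval-restricted $\varepsilon$-deviation.

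Take $\lambda$ to be the profile where nobody spends, so $W(\lambda)=C$ and $p_i(W(\lambda))=\inf_C u_i$. Suppose a coalition $S$ deviates. The deviation requires $c\notin A_i$ for all $i\in S$, since otherwise it is not disapproval-restricted. $S$ purchases some set $P$ with $\mu(P)<|S|/n$, and nothing else is guaranteed blocked. The adversarial complement $N\setminus S$ can refrain from spending, so $W(\lambda')=C\setminus P$. The $\varepsilon$-improvement forces $C\setminus P\subseteq\{d\mid u_i(d)\ge\inf_C u_i+\varepsilon\ \forall i\in S\}$.

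This alone does not contradict anything. The better choice of $\lambda$ is one in which the remaining voters have already blocked as much as possible while keeping $c$. Concretely, I would use Moulin's duality: since no $S$ blocks $c$, a fractional assignment, viewed as a flow or Hall-type matching between voters' budgets and candidates ranked below $c$, lets the voters collectively purchase everything except a set on which, for each possible $S$, some member rates $c$ at least as high as the rest.

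This equilibrium-construction step, essentially the min-max/Hall argument in Moulin's proportional veto core proof adapted to approvals and continuous $C$, is the main obstacle. I would first prove it for finite $C$ via max-flow/min-cut. I would then pass to polytopes by discretization and a limiting argument, using continuity of $u_i$.
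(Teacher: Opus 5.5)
\textbf{Gap in your ($\Leftarrow$) direction.} This direction is not actually proved. You defer it to a Moulin-duality/Hall/max-flow construction followed by discretization and a limit for polytopes, and none of this is carried out. The limit step would have to preserve $c\in W(\lambda)$, the strict budget inequalities, and stability simultaneously, which is far from routine.

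The missing idea is simple: go to the opposite extreme from $\lambda=0$. The grand coalition's budget can be arbitrarily close to $\mu(C)=1$. So let every voter put $\lambda_i\equiv 1/n$ on $C\setminus\{c\}$ in the finite case. In the polytope case, put it on $C\setminus O(c,\delta)$, where $O(c,\delta)$ is a small relative ball around $c$ (of positive measure) on which, by continuity, every $u_i$ stays within $\varepsilon/2$ of $u_i(c)$. Each voter then spends $\mu(\cdot)/n<1/n$, $c\in W(\lambda)$, and $p_i(W(\lambda))\ge u_i(c)-\varepsilon/2$ for all $i$.

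An $\varepsilon$-deviation by $S$ must succeed against the zero response of $N\setminus S$, which you already observed. So $S$'s own purchases must remove every $d$ with $u_i(d)\le u_i(c)$ for every $i\in S$, i.e., all of $L_S(c)=C\setminus B_S(c)$. Disapproval-restriction gives $c\notin A_i$ for all $i\in S$. Now $c\in\dvc$ says exactly that $\mu(L_S(c))\ge |S|/n$ for every such non-empty $S$, while $S$ can spend strictly less than $|S|/n$. This ``trivially stable'' allocation is what the paper uses. Constructing non-trivial stable allocations via a matching argument is mentioned there only as future work and is not needed for the theorem.

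\textbf{Your ($\Rightarrow$) direction.} This direction is correct and is the paper's argument. You split the purchase of $\{d \mid \exists i\in S\colon u_i(d)\le u_i(c)+\delta\}$ uniformly among $S$, with $\delta$ obtained from continuity of measure.

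However, your claim that choosing $\delta$ with $u_i(c)+\delta<\tau_i$ makes ``both readings hold'' is wrong. A candidate that is low for some $i\in S$ may still be approved by another $j\in S$. In fact, under the strict reading (no member of $S$ may approve any purchased candidate), the theorem fails. Take $n=2$ and $C=\{a,b,c,d,e\}$:
\begin{itemize}
\item voter~1 ranks $a\succ e\succ c\succ d\succ b$ and approves $\{a,e\}$;
\item voter~2 ranks $a\succ d\succ c\succ e\succ b$ and approves $\{a,d\}$.
\end{itemize}
Then $\{1,2\}$ blocks $c$ under \dvc{}, since $B_{\{1,2\}}(c)=\{a\}$. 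Yet from the allocation with $W(\lambda)=\{c\}$, no singleton can afford an improving deviation, because each would need to buy measure $3/5\ge 1/2$. The pair would have to purchase $d\in A_2$ and $e\in A_1$, which the strict reading forbids. So commit to the reading that the restriction only requires $c\notin A_i$ for all $i\in S$; this is all the paper's proof uses in either direction.
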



\Cref{thm:GT_Char_PVC} yields a second explanation system for DVC: rejection corresponds to the existence of a disapproving coalition that can challenge every allocation sparing $c$.
Conversely, if $c$ is selected by \dvc{}, then there exists an allocation witnessing that no such disapproving coalition can justify excluding it.
Appendix~\ref{app:RelWorkBudgetingGames} discusses related work \citep{Moulin1981,KizilkayaKempe2023,BudgetingGames}.

Moreover, the budgeting-game interpretation of \dvc{} suggests a natural family of 
solution concepts obtained by scaling voters' veto budgets. In the standard 
\dvc{}, each voter has budget $1/n$, giving a coalition $S$ aggregate 
budget $|S|/n$. We introduce a parameter $\omega \geq 0$ that multiplies 
these budgets.

\begin{definition}[$\omega$-DVC]
  Let $\omega \ge 0$ be given. Under the \emph{$\omega$-disapproval veto core ($\omega$-\dvc)}, candidate~$c$ is \emph{blocked by a non-empty coalition} $S\subseteq N$ if (i) ${\mu(B_S(c))} > 1-\omega \nicefrac{s}{n}$ and
  (ii) $c \notin A_i$ for all $i\in S$. $\omega$-\dvc{} consists of all candidates that are not blocked.
\end{definition}

For $\omega = 1$, this coincides with \dvc{}; increasing $\omega$ grants 
every coalition more blocking power and leads to the following combination of sufficient-support and minority protection: 
\begin{restatable}{proposition}{omegadvc}\label{omega-dvc}
$\omega$-\dvc{} has an approval threshold of $\alpha=\max(0,1-\nicefrac{1}{\omega})$ and veto power ${v(s)=\omega \nicefrac{s}{n}}$ for $s \le \frac n \omega$.
\end{restatable}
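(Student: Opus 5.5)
The proof splits into two parts: the approval threshold and the veto power. For the threshold, we show two things. First, every selected candidate has approval score at least $(1-\nicefrac1\omega)n$. Second, this bound is tight, so no larger $\alpha$ works.

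For the lower bound, take $c$ with $\score{c}{\mathcal I} < (1-\nicefrac1\omega)n$ and let $S$ be the set of voters disapproving $c$. Then $|S| > n/\omega$, so $1-\omega\nicefrac{|S|}{n} < 0$. Condition (ii) holds by construction of $S$. Condition (i) asks for $\mu(B_S(c)) > 1-\omega\nicefrac{|S|}{n}$, and since the right-hand side is negative this holds because $\mu(B_S(c)) \ge 0$. So $S$ blocks $c$, and every candidate in $\omega$-\dvc{} has score at least $(1-\nicefrac1\omega)n$. When $\omega\le 1$ the bound is $0$ and the claim is trivial.

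For tightness, we need, for any $\alpha'$ above the claimed value, an instance in which a selected candidate has score below $\alpha' n$. Take all voters with identical rankings and $c$ at the top, so $B_S(c)=\emptyset$ for every $S$. Let exactly $k$ voters approve $c$, with $k$ the smallest integer satisfying $n-k \le n/\omega$, i.e. $k=\lceil n(1-\nicefrac1\omega)\rceil$. Any disapproving coalition $S$ has $|S|\le n-k\le n/\omega$, so $1-\omega\nicefrac{|S|}{n}\ge 0 = \mu(B_S(c))$ and condition (i) fails. Hence $c$ is selected with score $k$. The main obstacle is that $k$ exceeds $(1-\nicefrac1\omega)n$ by a rounding term, so a single $n$ does not give tightness; the supremum must be understood over all instances. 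We therefore let $n$ vary, choosing $n$ so that $n/\omega$ is an integer or letting $n\to\infty$, so that $k/n \to 1-\nicefrac1\omega$. This shows no larger $\alpha$ is valid. The $\omega\le1$ case of tightness follows from the same construction with $k=0$.

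For the veto power, we mirror the argument behind $v_{\dvc}(s)=\nicefrac sn$ in \Cref{prop:rules_and_principles}. For the lower bound, fix $S$ with $|S|=s\le n/\omega$ and a well-behaved $T$ with $\mu(T)<\omega\nicefrac sn$. Let all members of $S$ rank $T$ at the bottom, using the continuous utility that witnesses well-behavedness, and disapprove everything. For $c\in T$ we get $B_S(c)\supseteq C\setminus T$, so $\mu(B_S(c))\ge 1-\mu(T) > 1-\omega\nicefrac sn$, and $S$ blocks $c$ whatever the other voters report. The equality case $\mu(T)=\omega\nicefrac sn$ is covered by taking the supremum.

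For the upper bound, take $T$ with $\mu(T)$ slightly above $\omega\nicefrac sn$ and adversarial remaining voters. Any blocking coalition for $c$ must include voters outside $S$, or else $S$'s upper contour set must exceed the allowed measure. We pick $c\in T$ maximal for $S$'s rankings; this needs care with ties in the continuous case. Then $\mu(B_S(c))\le 1-\mu(T) < 1-\omega\nicefrac sn$, so $S$ alone cannot block $c$. We let $N\setminus S$ approve all of $T$, which prevents any coalition containing one of them from blocking $c$ by condition (ii). Hence some $c\in T$ survives, and the veto power is exactly $\omega\nicefrac sn$.
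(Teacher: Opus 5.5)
Your approval-threshold argument is the paper's: the disapprovers of a candidate with score below $(1-\nicefrac{1}{\omega})n$ satisfy (i) trivially because its right-hand side is negative, and tightness comes from a candidate everyone ranks first. Your rounding remark is a legitimate sharpening of the paper's ``thus''. The veto-power lower bound also matches.

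The gap is in the upper bound on the veto power. Under $\omega$-\dvc{}, \emph{every} nonempty disapproving $S'\subseteq S$ can block $c$ on its own, namely whenever $\mu(B_{S'}(c))>1-\omega\nicefrac{|S'|}{n}$. Since $B_{S'}(c)\supseteq B_S(c)$, showing that the full coalition $S$ fails (i) says nothing about its subcoalitions. Choosing $c\in T$ ``maximal for $S$'s rankings'' only gives $B_S(c)\cap T=\emptyset$. For instance, take $C=[0,1]$, $T=[0,t]$ with $t$ slightly above $\omega\nicefrac{s}{n}$, and two members of $S$ with $u_1(x)=x$ and $u_2(x)=-x$. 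Then $c=0$ is maximal for $S$ (indeed $B_S(0)=\emptyset$). Yet $\mu(B_{\{1\}}(0))=1>1-\nicefrac{\omega}{n}$ for $\omega>0$, so voter $1$ alone blocks $c$, whatever $N\setminus S$ reports. Hence ``some $c\in T$ survives'' is not justified. Your sentence ``any blocking coalition must include voters outside $S$, or else $S$'s upper contour set must exceed the allowed measure'' conflates ``a coalition inside $S$'' with ``$S$ itself''.

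The missing idea is Moulin's sequential-elimination argument, which the paper uses, applied to whatever utilities $S$ reports. Go through the members of $S$ one at a time. Voter $i$ removes from the still-remaining candidates a set $L_i$ of measure $\nicefrac{\omega}{n}$ consisting of her least preferred remaining candidates. The $L_i$ are disjoint with total measure $\omega\nicefrac{s}{n}<\mu(T)$, so some $c\in T$ is never removed, and each $i\in S$ weakly prefers $c$ to everything in $L_i$. Therefore, for every nonempty $S'\subseteq S$, the set $C\setminus B_{S'}(c)$ contains the disjoint union $\bigcup_{i\in S'}L_i$ of measure $\omega\nicefrac{|S'|}{n}$. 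So (i) fails for all subcoalitions simultaneously. Letting $N\setminus S$ approve everything then rules out every other coalition via (ii), and $c$ is unblocked. For finite $C$ this needs fractional removal, or Moulin's original counting argument.
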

\definecolor{vetoblue}{RGB}{31,119,180}
\definecolor{vetoorange}{RGB}{255,127,14}
\definecolor{vetogreen}{RGB}{44,160,44}

\begin{figure}[t]
\centering
\begin{minipage}[t]{0.46\textwidth}
    \centering
    \vspace{0pt}
    \begin{tikzpicture}
    \begin{axis}[
        width=\linewidth,
        height=0.5\linewidth,
        xlabel={$s$},
        ylabel={$v(s)$},
        xlabel near ticks,
        ylabel near ticks,
        xlabel style={font=\small},
        ylabel style={font=\small},
        x tick label style={font=\footnotesize},
        y tick label style={font=\footnotesize},
        xmin=0, xmax=50,
        ymin=0, ymax=1.05,
        xtick={0,10,20,30,40,50},
        ytick={0,0.25,0.5,0.75,1},
        ymajorgrids=true,
        xmajorgrids=true,
        grid style={dashed, gray!25},
        legend style={
            at={(0.03,0.97)},
            anchor=north west,
            draw=none,
            fill=none,
            font=\footnotesize,
            row sep=-1pt,
            inner sep=1pt,
        },
        legend cell align=left,
        samples=200,
        domain=0:50,
        smooth,
        no markers,
        axis line style={gray!60},
        tick style={gray!60},
        clip=false,
    ]
    \addplot[very thick, color=vetoblue]   {x/100};
    \addlegendentry{\dvc}
    \addplot[very thick, color=vetoorange] {2*x/100};
    \addlegendentry{$2$-\dvc}
    \addplot[very thick, color=vetogreen]  {x/(100-x)};
    \addlegendentry{AWVC}
    \end{axis}
\end{tikzpicture}
    \caption{Comparison of veto power in AWVC, DVC, and $2$-DVC for $n=100$.}
    \label{fig:veto-power-plot}
\end{minipage}
\hfill
\begin{minipage}[t]{0.50\textwidth}
    \centering
    \vspace{0pt}
    \small
    \setlength{\tabcolsep}{4pt}
    \renewcommand{\arraystretch}{1.12}

    \resizebox{\linewidth}{!}{%
    \begin{tabular}{@{} l l c c @{}}
    \toprule
    Concept
    & \makecell[c]{Veto power\\of minorities}
    & \makecell[c]{Approval\\threshold}
    & \makecell[c]{Unan. approval\\consistency} \\
    \midrule
    $\omega$-\dvc
    & $v(s)=\omega \frac sn$
    & $\max(0,1-\nicefrac{1}{\omega})$
    & $\times$\\

    AWVC
    & $v(s)= \frac s {(n-s)}$
    & $\alpha = \frac{1}{2}$
    & $\times$\\

    \tc
    & $v(s) = 0 \, \forall s< \frac n2$
    & $\alpha = 0$
    & $\checkmark$ \\

    $\tc\circ$AWVC
    & $v(s)= \frac s {(n-s)}$
    & $\alpha = \frac{1}{2}$
    & $\checkmark$\\
    \bottomrule
    \end{tabular}%
    }
    \captionof{table}{Comparison of solution concepts and blocking principles.}
    \label{tab:rules_and_principles}
\end{minipage}
\vspace{-0.5cm}
\end{figure}


\subsection{The Goldilocks Zone Between Proportional Veto Power and Majority Disapproval}
\Cref{omega-dvc} shows that uniform budget scaling creates a simple but rigid tradeoff: increasing \(\omega\) raises the approval threshold of \(\omega\)-\dvc{}, but it also multiplies the veto power of every coalition by the same factor. In particular, \(2\)-\dvc{} satisfies majority threshold but doubles the veto power of all minorities. This is undesirable, as small coalitions and majorities are diametrical to each other: to satisfy one principle, we should not be forced to uniformly increase requirements on the other.
We therefore seek a solution concept in the ``Goldilocks zone'': one that allows a majority of voters to block disapproved candidates without granting small coalitions the full force of uniformly scaled veto power. The \emph{approval-weighted veto core} implements this idea by making voter endowments candidate-specific: a coalition's blocking power is measured relative to the candidate's number of approvers, rather than the full electorate.

\begin{definition}[Approval-Weighted Veto Core]
Under the \emph{approval-weighted veto core} (AWVC), a candidate \(c\) with
\(\score{c}{\mathcal I}>0\) is \emph{blocked by a non-empty coalition}
\(S\subseteq N\) if (i) \( \nicefrac{|S|}{\score{c}{\mathcal I}} > 1-\mu(B_S(c))\) and (ii) \(c \notin A_i\) for all \(i\in S\). Candidates with \(\score{c}{\mathcal I}=0\) are blocked by convention. The
AWVC consists of all candidates that are not blocked.
\end{definition}

Indeed, AWVC interpolates between \(\dvc{}\) and \(2\)-\dvc{} while combining the majority-threshold guarantee of \(2\)-\dvc{} and a veto-power function close to that of $\dvc{}$ for small coalition sizes.

\begin{restatable}{theorem}{AWVCThm}
    The Approval-Weighted Veto Core has veto power \(v(s)=\frac{s}{n-s}\) for $s\le \frac n2$ and
approval threshold \(\alpha=\frac12\). Moreover,
$2\text{-}\dvc{}\subseteq AWVC\subseteq \dvc{}$.    
\end{restatable}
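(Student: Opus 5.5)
The plan is to prove the three claims separately, working directly from the blocking conditions. Throughout I use the reformulation that $c$ is blocked by $S$ under $\omega$-\dvc{} iff $1-\mu(B_S(c)) < \omega s/n$. Under AWVC the condition is $1-\mu(B_S(c)) < s/\score{c}{\mathcal I}$, together with (ii) in both cases.

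\textbf{Inclusions.} For a disapproving coalition $S$ of $c$ we have $S\subseteq N\setminus\{i: c\in A_i\}$, so $s\le n-\score{c}{\mathcal I}$. Since $\score{c}{\mathcal I}\le n$, we get $s/\score{c}{\mathcal I}\ge s/n$. Hence any \dvc{}-blocking coalition also blocks under AWVC, which gives AWVC $\subseteq$ \dvc{}; candidates with score $0$ are excluded from AWVC by convention, so they cause no problem.

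For $2$-\dvc{} $\subseteq$ AWVC, suppose $c\notin$ AWVC. If $\score{c}{\mathcal I}=0$, take $S=N$: then $B_N(c)$ consists of the candidates everyone strictly prefers. I will need $\mu(B_N(c))>1-2=-1$, which holds trivially, so $N$ blocks under $2$-\dvc{}. Otherwise $S$ blocks $c$ in AWVC. If $\score{c}{\mathcal I}\ge n/2$, then $s/\score{c}{\mathcal I}\le 2s/n$ and $S$ blocks under $2$-\dvc{}. If $\score{c}{\mathcal I}<n/2$, I instead use the full disapproving set $D=N\setminus\{i:c\in A_i\}$, which has $|D|>n/2$. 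Then $2|D|/n>1\ge 1-\mu(B_D(c))$, so $D$ blocks under $2$-\dvc{}. The inclusion therefore holds in all cases.

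\textbf{Approval threshold.} From the inclusion into AWVC's blocking argument, any $c$ with $\score{c}{\mathcal I}<n/2$ is blocked by $D$, since $|D|/\score{c}{\mathcal I}>1$. So every consented candidate has score at least $n/2$. To show that $\tfrac12$ is the supremum, I would construct an instance where a candidate with score exactly $n/2$ (for even $n$) survives. For example, let all voters rank $c$ on top, with $n/2$ of them approving it. Then $B_S(c)=\emptyset$ for every $S$, and blocking would require $s/(n/2)>1$, which is impossible since $s\le n/2$.

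\textbf{Veto power.} For the lower bound, let $S$ with $s\le n/2$ disapprove everything and rank $T$ at the bottom, where $\mu(T)\le s/(n-s)$. Take any $c\in T$. If $\score{c}{\mathcal I}=0$, then $c$ is blocked by convention. Otherwise $\score{c}{\mathcal I}\le n-s$, so $s/\score{c}{\mathcal I}\ge s/(n-s)$. Every $c\in T$ has $1-\mu(B_S(c))\le\mu(T)$, and in fact $<\mu(T)$ for all but the top of $T$. The main obstacle is the boundary: strict versus weak inequality when $\mu(T)=s/(n-s)$ exactly, and how ties and the continuous case are handled. Since veto power is defined as a supremum, I plan to work with $\mu(T)<s/(n-s)$ and pass to the limit, using well-behavedness to order $T$ strictly.

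For the upper bound, take $\mu(T)>s/(n-s)$ and let the other $n-s$ voters approve all of $C$ and rank $T$ on top. A candidate $c$ in the top part of $T$ that is not in $S$'s bottom $s/(n-s)$ fraction then has $\score{c}{\mathcal I}\ge n-s$. Any blocking coalition must lie inside $S$, and it needs $1-\mu(B_S(c))<s/(n-s)$, which fails for such $c$. So $c$ survives, establishing $v(s)=s/(n-s)$. I would also check the finite case with the counting measure, where it follows analogously up to rounding.
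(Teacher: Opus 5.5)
Your inclusions, the approval-threshold argument, and the lower bound on veto power are correct and follow the paper's proof essentially verbatim. The paper also reduces the inclusions to two facts: sub-majority candidates are excluded by both AWVC and $2$-\dvc{}, and $\frac1n\le \frac{1}{\score{c}{\mathcal I}}\le\frac2n$ otherwise.

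The gap is in the upper bound on veto power. To show $v(s)\le\frac{s}{n-s}$, you need the following: for $\mu(T)>\frac{s}{n-s}$ and \emph{every} partial instance $\mathcal I_S$ the coalition might report, the remaining voters can respond so that some $c\in T$ survives. Your argument picks $c$ outside ``$S$'s bottom $\frac{s}{n-s}$ fraction'' and tests only the full coalition $S$ as a blocker. That covers only the case where $S$ reports a common ranking. In general $S$ may report heterogeneous preferences, with different members placing different parts of $T$ at the bottom. Then there is no well-defined bottom fraction of $S$ to choose $c$ from. Moreover, every sub-coalition $S'\subseteq S$ of disapprovers of $c$ is a potential blocker with its own condition $\mu(C\setminus B_{S'}(c))<\frac{|S'|}{\score{c}{\mathcal I}}$. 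Since $C\setminus B_{S'}(c)$ can be much smaller than $C\setminus B_S(c)$, checking $S$ alone does not rule out blocking.

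The missing idea is the Moulin-style sequential elimination that the paper uses for $\omega$-\dvc{} and invokes ``analogously'' for AWVC. Let $N\setminus S$ approve all of $C$, so $\score{c}{\mathcal I}\ge n-s>0$ for every $c$ and any blocking coalition lies inside $S$. Order the members of $S$, and let each in turn remove her lowest-utility set $L_i$ of measure $\frac{1}{n-s}$ from the candidates not yet removed. The total removed measure is $\frac{s}{n-s}<\mu(T)$, so some $c\in T$ survives. Each $L_i$ lies weakly below $c$ for voter $i$, and the $L_i$ are disjoint. Hence for every $S'\subseteq S$,
\[
\mu(C\setminus B_{S'}(c))\ \ge\ \mu\Bigl(\bigcup_{i\in S'}L_i\Bigr)\ =\ \frac{|S'|}{n-s}\ \ge\ \frac{|S'|}{\score{c}{\mathcal I}},
\]
so no coalition blocks $c$. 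Without this step, or an equivalent Hall-type matching argument, your upper bound holds only against cohesive coalitions, not against every report of $S$ as the definition of veto power requires.
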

The veto power formula \(v(s)=s/(n-s)\) explains the Goldilocks behavior. To compute the veto power of a coalition \(S\) of size \(s\), we consider the hardest completion for that coalition: all voters outside \(S\) approve the candidate \(c\), so \(\score{c}{\mathcal I}=n-s\). AWVC then measures the coalition's blocking power as \(s/(n-s)\). 
This yields a smooth interpolation between \dvc{} and \(2\)-\dvc{} (see \Cref{fig:veto-power-plot}): for small coalitions, \(s/(n-s)\) is close to \(s/n\), while for coalitions close to a majority it approaches \(2s/n\). Thus, AWVC obtains majority threshold without uniformly doubling the veto power of all minorities. 
An additional bonus is that AWVC satisfies the following monotonicity property.

\begin{restatable}{proposition}{monotondis}
AWVC satisfies \emph{support monotonicity of disapproving voters}: if a voter $i$
joins the instance with $c\notin A_i$, then
$c\notin AWVC(\mathcal I)$ implies
$c\notin AWVC(\mathcal I\cup\{(u_i,\tau_i)\})$.
\end{restatable}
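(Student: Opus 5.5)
The plan is to fix a blocking coalition \(S\) for \(c\) in \(\mathcal I\) and show that the same \(S\) still blocks \(c\) in the enlarged instance \(\mathcal I'=\mathcal I\cup\{(u_i,\tau_i)\}\). There are two cases. If \(\score{c}{\mathcal I}=0\), then \(c\) is blocked by convention. The new voter \(i\) disapproves \(c\), so \(\score{c}{\mathcal I'}=0\) as well, and \(c\) stays blocked by convention.

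So assume \(\score{c}{\mathcal I}>0\) and let \(S\subseteq N\) be non-empty with
\[\frac{|S|}{\score{c}{\mathcal I}} > 1-\mu(B_S(c)) \quad\text{and}\quad c\notin A_j \text{ for all } j\in S.\]
I check both conditions for \(S\), viewed as a coalition in the enlarged electorate \(N\cup\{i\}\).

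Condition (ii) is untouched, since the members of \(S\) keep their utilities and thresholds. For condition (i), two observations suffice.

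First, \(B_S(c)\) depends only on the utility functions of members of \(S\) and on the candidate set with its measure \(\mu\). None of these change when \(i\) joins, so \(\mu(B_S(c))\) is the same in \(\mathcal I'\).

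Second, \(c\notin A_i\) gives \(\score{c}{\mathcal I'}=\score{c}{\mathcal I}>0\). The left-hand side \(|S|/\score{c}{\cdot}\) is therefore also unchanged, so the strict inequality still holds and \(S\) blocks \(c\) in \(\mathcal I'\). Hence \(c\notin AWVC(\mathcal I')\).

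A notable contrast makes the point: for \(\dvc\) the analogous claim fails, because adding a voter increases \(n\) and shrinks \(|S|/n\). AWVC normalizes by the approval score instead of \(n\), and a disapproving newcomer leaves that score unchanged.

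The main obstacle is not mathematical but definitional. I must make sure that "joining the instance" keeps \(C\) and \(\mu\) fixed, which holds because \(\mu\) is defined on \(C\) independently of voters. I must also check that neither the convention for zero-score candidates nor the blocking condition refers to \(n\) anywhere. Once that is confirmed, the argument is the direct invariance check above.
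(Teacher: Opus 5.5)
Your proof is correct and follows the paper's argument: you keep the same blocking coalition $S$, which still blocks $c$ because neither $\mu(B_S(c))$ nor $\score{c}{\mathcal I}$ changes when a voter who disapproves $c$ joins. Your separate treatment of the case $\score{c}{\mathcal I}=0$, where $c$ is blocked by convention, covers a case that the paper's one-line proof leaves implicit.
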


By contrast, $\omega$-\dvc{} concepts may at first block candidates but grant them societal consent after additional disapprovers of the candidate join the election, which can regularly be observed in experiments (see Appendix \ref{app:RelWorkBudgetingGames} and \ref{App:SuppMonDisapprovingVoters} for further information).
Depending on the application, this may be viewed as a serious flaw of the concept.

We conclude by noting that we can achieve all three blocking principles by composition of solution concepts and refer to \Cref{tab:rules_and_principles} for a summary of guarantees and to Appendix \ref{sapp:RulesAndPrinciples} for omitted proofs and the formal definition of $\tc$ concatenation. We further generalize the following result.

\begin{restatable}{theorem}{ConcatenateBlockingGuarantees}
    $f= \tc \, \circ \,${AWVC} inherits $v_f(s) = \frac{s}{n-s}$ for $s\le \frac n2$, approval threshold $\alpha = \frac 12$, and unanimous approval consistency.
\end{restatable}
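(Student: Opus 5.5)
The plan is to read $\tc\circ$AWVC as the solution concept that first computes $\mathrm{AWVC}(\mathcal I)$ and then applies the disapproval Smith set to the instance restricted to these candidates, so that $f(\mathcal I)=\tc(\mathcal I|_{\mathrm{AWVC}(\mathcal I)})$. If the formal definition in Appendix~\ref{sapp:RulesAndPrinciples} differs, I expect only cosmetic changes. The key structural fact is $f(\mathcal I)\subseteq \mathrm{AWVC}(\mathcal I)$. I would establish each of the three guarantees from this inclusion together with the earlier results on AWVC and $\tc$.

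\textbf{Approval threshold.} Since $f(\mathcal I)\subseteq\mathrm{AWVC}(\mathcal I)$, every candidate in $f(\mathcal I)$ has approval score at least $n/2$ by the AWVC theorem, so the threshold is at least $\frac12$. For the supremum to equal exactly $\frac12$, I would reuse the tight instance for AWVC, in which a candidate with score exactly $n/2$ survives. I would then check that this candidate survives the Smith step. The simplest way is to arrange that it is the only AWVC survivor, so that the restricted Smith set is that candidate alone.

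\textbf{Unanimous approval consistency.} Suppose $U(\mathcal I)\neq\emptyset$. A unanimously approved candidate cannot be blocked under AWVC, because condition~(ii) fails for every non-empty $S$; hence $U(\mathcal I)\subseteq \mathrm{AWVC}(\mathcal I)$. In the restricted instance the unanimously approved candidates are still exactly $U(\mathcal I)$. Applying the earlier theorem that $\tc$ satisfies unanimous approval consistency to the restricted instance then gives $f(\mathcal I)=U(\mathcal I)$.

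\textbf{Veto power.} The lower bound $v_f(s)\ge \frac{s}{n-s}$ is immediate from the inclusion. If $S$ can force AWVC to exclude a set $T$ under every completion, then $f$ excludes $T$ as well. The upper bound is the main obstacle, because the extra Smith step could in principle let $S$ exclude more than AWVC does. In particular, if the outside voters approve everything, pairwise comparisons are decided by $S$ alone.

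To handle this, I would fix $T$ with $\mu(T)>\frac{s}{n-s}$ and an arbitrary report $\mathcal I_S$ of $S$, and construct an adversarial completion in two steps:
\begin{enumerate}
\item The adversary selects a candidate $c^*\in T$ that $S$ cannot block even when $\score{c^*}{\mathcal I}=n-s$. Such a $c^*$ exists by the same measure argument as in the AWVC veto-power proof. Condition~(i) would require $\mu(B_S(c^*))>1-\frac{s}{n-s}$, and this cannot hold for all of $T$ when $\mu(T)>\frac{s}{n-s}$. Choosing $c^*$ to be an element of $T$ that is lowest for the coalition's unanimous dominance order should work.
\item All voters in $N\setminus S$ rank $c^*$ strictly first and approve only $c^*$.
\end{enumerate}

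With this completion, $c^*$ has score at least $n-s$, and no other coalition can block it. Any disapproving coalition lies inside $S$, and its blocking power only decreases when the score rises. So $c^*\in\mathrm{AWVC}(\mathcal I)$. For every other $d$ we get $n(c^*,d)\ge n-s\ge s\ge n(d,c^*)$ because $s\le \frac n2$, so $c^*\mathrel{R} d$ for all $d$ in the restricted instance. Hence $c^*\in f(\mathcal I)\cap T$, and $S$ cannot veto $T$, which gives $v_f(s)\le\frac{s}{n-s}$.

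The delicate point is ensuring in step~1 that a candidate unblockable by subcoalitions of $S$ exists in $T$ in the continuous case, including the boundary tie $s=n/2$. There I would use the strict inequality in condition~(i) and the well-behavedness of $T$.
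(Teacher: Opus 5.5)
Apart from one step, your plan is the paper's proof. The inclusion $f(\mathcal I)\subseteq\mathrm{AWVC}(\mathcal I)$ gives the lower bounds on veto power and approval threshold. Unanimous approval consistency follows from $U(\mathcal I)\subseteq\mathrm{AWVC}(\mathcal I)$ together with the $R$-comparisons. The upper bound on veto power uses exactly the completion in which $N\setminus S$ rank $c^*$ first and approve only $c^*$; your inequality $n(c^*,d)\ge n-s\ge s\ge n(d,c^*)$ even handles $s=\frac n2$ more carefully than the paper's strict inequality.

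The step that fails is your justification of step~1. AWVC lets \emph{every} disapproving subcoalition $S'\subseteq S$ block $c^*$ with budget $|S'|/\score{c^*}{\mathcal I}$. Your argument only checks condition~(i) for the grand coalition $S$, and no choice based on $S$'s unanimous dominance order can work. (Also, ``lowest'' points the wrong way even for $S$ itself: you would need an element undominated within $T$.) Here is a counterexample:
\begin{itemize}
\item Take $n=10$, $s=2$, ten candidates, and $T=\{t_1,t_2,t_3\}$, so $\mu(T)=\frac{3}{10}>\frac14$.
\item Both members of $S$ disapprove $T$ and rank $C\setminus T$ above $T$.
\item Voter~1 has $t_3\succ t_2\succ t_1$, and voter~2 has $t_1\succ t_2\succ t_3$.
\end{itemize}
All three $t_j$ have $B_S(t_j)=C\setminus T$, so they are indistinguishable in the dominance order. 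Yet if $c^*=t_1$ receives score $8$, the singleton $\{1\}$ blocks it, because $\frac1{10}<\frac18$. Symmetrically, $\{2\}$ blocks $t_3$, and only $t_2$ works.

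The missing idea is Moulin's sequential eating argument, which the paper invokes here via its \dvc{}/AWVC veto-power proof. Given $S$'s report, let the voters of $S$ one after another remove a set $L_i$ of measure $\frac{1}{n-s}$ of their lowest-ranked remaining candidates (fractionally if $C$ is finite). The total removed measure $\frac{s}{n-s}$ is less than $\mu(T)$, so some $c^*\in T$ survives. Each $L_i$ lies weakly below $c^*$ for voter $i$, and the $L_i$ are disjoint. Hence, whenever $\score{c^*}{\mathcal I}\ge n-s$, every $S'\subseteq S$ satisfies $\mu(C\setminus B_{S'}(c^*))\ge\frac{|S'|}{n-s}\ge\frac{|S'|}{\score{c^*}{\mathcal I}}$, so no subcoalition blocks $c^*$. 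In the example this procedure selects $t_2$. Since $c^*$ depends only on $S$'s report, your step~2 then goes through unchanged. The boundary case you flag is not the real difficulty: at $s=\frac n2$ we have $\frac{s}{n-s}=1$, so no $T$ with $\mu(T)>\frac{s}{n-s}$ exists.
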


\section{Empirical Evaluation}\label{sec:experiments}

In this section, we evaluate our solution concepts empirically with a special focus on the extent to which they satisfy our notions of sufficient support, minority protection, and presence of decisively better candidates. 
We give a summary of our setup and findings here and relegate details to \Cref{app:exp}. All our solution concepts are tractable for finite $C$ (Appendix \ref{app:finite_C_Computation}).

\paragraph{Setup} Besides our concepts $0.5$-threshold (majority threshold), disapproval veto core (DVC),  approval-weighted veto core (AWVC),  disapproval Smith set applied on top of the solution produced by approval-weighted veto core
(DSS $\circ$ AWVC),  disapproval Smith set (DSS), we consider two baselines: under Approval Relative Utilitarianism (ARU) (cf.~\citealt{DhillonMertens1999}), a candidate has societal consent if its average utility exceeds the voters' average approval threshold. Under Minimize Margin of Disapproval (MMOD), we select all candidates $c$ that minimize $\sum_{i\in N} \max(0, \tau_i - u_i(c))$.

\begin{figure}[t]
    \centering
    \resizebox{\textwidth}{!}{%
        \input{Current_Work/Neurips/plots/diagram1_selection_fraction.tex}%
    }
    \caption{%
    For every (solution concept, dataset) pair, we plot the average fraction \(|f(\mathcal I)|/m\) of selected candidates
    over all instances \(\mathcal I\) in that dataset, where \(m\) is the number
    of candidates in the instance.
}
    \label{fig:selection_fraction}
\end{figure}

We consider five datasets\footnote{Note that in the political and CoVal datasets, approval and utility information need not be consistent with each other, i.e., there are voters for which a disapproved candidate might have a higher utility than an approved candidate. We adjust our solution concept to deal with these as described in Appendix \ref{app:threshold_based_rule_fix}.} with finite candidate sets, whose size we denote as $m$. 
\emph{CoVal}: $319$ instances in which annotators rank four AI-generated responses to ethical prompts and optionally flag 
responses as unacceptable \citep{coval2026}. 
\emph{Synthetic}: $600$ instances drawn from six configurations of standard 
statistical cultures, with $50$ voters and 
$5$ candidates.
\emph{Political}:  Nine instances from surveys in real-world political elections
in which participants provided both approval and cardinal/score information.
\emph{Kidney Exchange}: $60$ instances built from a survey on kidney allocation priorities with $301$ voters who reported a weight vector over $8$ features of recipients \citep{keswani2026moral}.  
Candidates are $100$ uniformly sampled recipients, where a voter approves a candidate iff her utility for it 
is positive.
\emph{Moral Machine}: $60$ instances with $n=137$ 
voters whose 24-dimensional moral-feature weights, learned from 
autonomous-vehicle dilemmas \citep{awad2018moral}, induce linear utilities over $m=100$ 
sampled policy vectors. Approvals are generated either by per-voter 
top-$\alpha$ thresholds or by agreement on sampled dilemma pairs.

\paragraph{Selectiveness (\Cref{fig:selection_fraction})}
\Cref{fig:selection_fraction} shows, for each solution concept and 
dataset, the average fraction of candidates $|f(\mathcal I)|/m$ the concept 
selects on instances from the dataset. \dvc{} is consistently the most 
permissive concept, indicating that 
few coalitions of voters jointly dislike enough of the candidate space 
to trigger blocking. In line with this, AWVC selects roughly as many 
candidates as Maj.~Threshold throughout. The most striking effect comes 
from incorporating the disapproval Smith set: $\tc$ and $\tc\circ$AWVC 
narrow sharply (down to near-singletons on Kidney), revealing that the 
other concepts admit many candidates that are dominated by 
unanimously-better candidates.

\paragraph{Sufficient Support} We evaluate whether our solution concepts ensure that each selected candidate is approved by a sufficient number of voters.
We find that \dvc{} fails dramatically on every dataset, selecting a 
candidate that is approved by less than half of the voters in $52$ of $319$ CoVal elections, all $9$ 
Political elections, and all $60$ instances of both Kidney and Moral 
Machine; on Political and Kidney, the median of the minimum approval fraction of a selected candidate drops to 
$0.04$ and $0.08$, meaning \dvc{} routinely admits candidates almost 
nobody approves of. The disapproval Smith set also fails occasionally, 
as it carries no approval-threshold guarantee on its own, i.e., non-majority approved candidates are selected on $7$ CoVal, 
$1$ Political, and $106$ Synthetic violations. As guaranteed, AWVC produces no violations on any dataset 
and generally selects broadly-supported candidates, with median 
worst-approval ranging from $0.51$ (Kidney) to $0.80$ (CoVal). 
Composing AWVC with the disapproval Smith set strengthens this further, 
pushing the median worst-approval to $0.96$ on Kidney and to $1.00$ on 
CoVal and Moral Machine (see \Cref{tab:submajority}).

\begin{table}[t]
    \centering
    \resizebox{\textwidth}{!}{%
        \input{Current_Work/Neurips/plots/diagram5_low_omega.tex}    }
    \caption{%
     \emph{Minority Protection}. The first column (\#) gives
    the number of instances in which the concept selected at least one
    candidate with \(\omega^*(c)<1\). The
    second column (med \(\omega^*\)) gives the median across elections
    of \(\min_{c\in f(\mathcal I)} \omega^*(c)\); this value can be  \(\infty\)
    in case all selected candidates are unanimously approved
    and therefore unblockable.\vspace{-0.5cm}
    }
    \label{tab:low_omega}
\end{table}

\paragraph{Minority Protection (\Cref{tab:low_omega})}
We would like to measure to what extent our solution concepts satisfy minority protection on our datasets; note that a concept's veto power is not a suitable metric for this, as it is defined as a worst-case measure over all instances (instead of making a statement about a specific instance).
Furthermore, quantifying for a candidate to what extent its selection violates minority protection is much harder than for sufficient support. 
Therefore, we turn to the veto budgeting perspective from \Cref{thm:GT_Char_PVC} and for a given candidate $c$ ask for the maximum $\omega$ such that $c$ remains unblocked in $\omega$-DVC:
Formally, 
    we say that $c$ has \emph{critical endowment} $\omega^*(c) {\ge 0}$ if $\omega^*(c) = \max \{\omega \ge 0 \mid \text{ $c\in$ $\omega$-\dvc}\}$.\footnote{The measure is related to the critical-epsilon measure of \citet{chooi2026findingcommongroundsea}; we discuss the relation in Appendix~\ref{app:RelWorkBudgetingGames}.}
Interpreting this measure using \Cref{thm:GT_Char_PVC}, \(\omega^*(c)\) measures the robustness of \(c\) to increased veto budgets. 
If \(\omega^*(c)>1\), then \(c\) 
can be blocked only when voters receive more than the standard budget. 
If \(\omega^*(c)<1\), then \(c\) is vulnerable and can already be blocked with a fraction of the standard budget. 
Fortunately, the critical endowment for a candidate can be computed efficiently.

\begin{restatable}{theorem}{measurecomp}\label{thm:measure}Let $c\in C$.
The critical endowment $\omega^*(c)$ can be computed in time $O((mn)^{1+o(1)})$ when $C$ is finite, where $m$ denotes the number of candidates in $C$.
\end{restatable}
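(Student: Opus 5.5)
Since $c$ is unblocked in $\omega$-\dvc{} iff no non-empty coalition $S$ of voters disapproving $c$ satisfies $\mu(B_S(c)) > 1-\omega\frac{|S|}{n}$, we can write
\[
\omega^*(c) \;=\; \min_{\emptyset\neq S\subseteq D_c} \frac{n\,\bigl(1-\mu(B_S(c))\bigr)}{|S|},
\]
where $D_c=\{i\in N\mid c\notin A_i\}$. The minimum is attained in the sense that $c\in\omega$-\dvc{} exactly when $\omega\le$ this value. If $D_c=\emptyset$, then $\omega^*(c)=\infty$, matching the convention in \Cref{tab:low_omega}. In the finite case $\mu(B_S(c))=|B_S(c)|/m$, so we need to minimise $(m-|B_S(c)|)/|S|$ over coalitions, and the task is to avoid enumerating $2^{|D_c|}$ of them.

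The key step is to swap the roles of voters and candidates. For a set $X\subseteq C$ of candidates, let $S(X)=\{i\in D_c\mid d\succ_i c \text{ for all } d\in X\}$ be the largest disapproving coalition that unanimously prefers all of $X$ to $c$. For any $S$ we have $S\subseteq S(B_S(c))$ and $B_{S(B_S(c))}(c)\supseteq B_S(c)$. Hence it suffices to minimise $(m-|X|)/|S(X)|$ over sets $X$, and we may take $X$ to be exactly the complement of $Y$, where $Y$ is the set of candidates that some voter in the coalition weakly ranks at or below $c$. Write $L_i=\{d\mid d\precsim_i c\}$, which contains $c$. Then
\[
m-|B_S(c)| \;=\; \Bigl|\bigcup_{i\in S}L_i\Bigr|.
\]
The problem becomes: choose $S\subseteq D_c$ minimising $|\bigcup_{i\in S}L_i|/|S|$. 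This is a densest-subgraph / minimum-ratio coverage problem on the bipartite incidence structure between voters in $D_c$ and candidates, whose total size is $O(mn)$.

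To solve it, parametrise by $\omega$. For fixed $\omega$, asking whether some $S$ satisfies $n|\bigcup_{i\in S}L_i| < \omega m|S|$ is a closure problem: maximise $\omega m|S| - n|\bigcup_{i\in S} L_i|$, where choosing voter $i$ forces all of $L_i$ to be chosen. This is a project-selection min-cut on a graph with $O(mn)$ edges. Using the recent almost-linear-time max-flow algorithm (Chen et al.), one test costs $(mn)^{1+o(1)}$. The optimal ratio is a rational number $n k/(m s)$ with $k\le m$ and $s\le n$, so it can be found either by a parametric max-flow (Gallo–Grigoriadis–Tarjan style, since the capacities are monotone in $\omega$) or by a Newton/Dinkelbach iteration. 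To stay within $(mn)^{1+o(1)}$, we use the fact that Dinkelbach iteration for such ratio problems converges in a small number of steps, or binary search over the $O(mn)$ candidate ratios, which costs only a logarithmic factor that is absorbed into the $o(1)$.

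I expect the main obstacle to be the exact time bound. The reduction to a closure/min-cut is routine. The difficulty is achieving the $(mn)^{1+o(1)}$ bound rather than an extra polynomial factor, and this requires both the almost-linear max-flow result and an argument that only polylogarithmically many flow computations are needed (binary search over the finite set of rational ratios). One must also check that strict versus weak inequalities at the threshold make the supremum attained, so that it is genuinely a maximum. Ties in preferences are handled automatically by the weak lower contour sets $L_i$.
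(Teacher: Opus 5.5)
Your proposal is correct and follows essentially the same route as the paper. You rewrite $\omega^*(c)$ as $\min_{\emptyset\neq S\subseteq D_c} n\,\bigl|\bigcup_{i\in S}L_i\bigr|/(m|S|)$, restrict to the $O(mn)$ breakpoints $nk/(ms)$, and decide blocking at each tested $\omega$ with the same source--voter/candidate--sink closure min-cut, solved by the almost-linear max-flow of Chen et al. Your explicit binary search over the sorted breakpoints spells out a step the paper leaves implicit, and the Galois-closure detour via $S(X)$ is harmless but unnecessary.

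One detail to add, which the paper makes explicit: the almost-linear max-flow bound requires integer capacities. For a tested value $\omega = nk/(ms)$, scale all capacities by $s$, giving source--voter capacity $nk$, candidate--sink capacity $ns$, and a polynomially bounded integer in place of the infinite voter--candidate edges.
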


Accordingly, in \Cref{tab:low_omega}, we show for each concept and dataset 
the number of instances in which a candidate $c$ with $\omega^*(c) < 1$ 
gets selected, as well as the median across elections of the worst 
critical endowment $\min_{c \in f(\mathcal I)} \omega^*(c)$.
The concepts without coalitional guarantees 
fail noticeably: Maj.~Threshold selects candidates with 
$\omega^*(c) < 1$ in $50$ of $319$ CoVal and $37$ of $60$ Kidney 
elections, and ARU fails on $27$ of $60$ Kidney elections.
Comparing the concepts with minority protection guarantees, AWVC dominates \dvc{} on the 
median worst-endowment ($1.91$--$2.55$ across datasets versus 
\dvc{}'s $1.00$--$1.75$), and composing with the disapproval Smith set 
strengthens this further: $\tc\circ$AWVC reaches median $\infty$ on 
CoVal and Moral Machine, where every selected candidate is 
unanimously approved and therefore unblockable under any 
$\omega$-\dvc{}. 

\paragraph{Decisively Better Candidates}
Only the CoVal ($175$ out of $319$)  and Moral Machine ($60$ out of $60$) datasets contain elections with at least 
one unanimously approved candidate. On these datasets, every concept that does not guarantee exclusive unanimity indeed provides violations regularly:
Maj.~Threshold, 
\dvc{}, AWVC, and ARU each violate exclusivity on $132$--$154$ of the 
$175$ CoVal elections and on all $60$ Moral Machine elections.  
\dvc{} produces the most violations here, with a total of $304$ not unanimously approved candidates on CoVal and $2028$ on Moral Machine (see \Cref{tab:unanimity}).

\section{Discussion}
We initiate the analysis of deriving societal consent through the lens of social choice theory. Identifying three main blocking principles: local, coalitional, and relational, we formalize these and propose canonical solution concepts satisfying one or multiple of these formalizations. We analyze these concepts axiomatically and experimentally, observing that solution concepts without theoretical guarantees exhibit these flaws in practice, too. There are several promising strands of future work. First, this paper assumes that voters act truthfully; it would be interesting to explore to what extent our concepts exhibit incentive compatibility properties. Second, an important open problem is the computation or approximation of the \textit{entire} proportional veto core and \dvc{}-variations for infinite $C$. The boundary shapes of the sets selected by these concepts appear to be intricate, e.g.,   for \dvc{} they can be non-convex, so even the question of whether they can be represented concisely is open. Third, we provide theoretical explanation systems for blocking by $\dvc$ variations. For future work, it would be very interesting to expand these theoretical results to algorithms that, given an election and a candidate as input, automatically generate explanation systems and to examine how plausible these are to humans. Fourth, while AWVC naturally interpolates between $v(s) = \frac sn$ veto power for small $s$ and majority threshold for $s$ close to $\frac n2$, it remains open whether there are natural concepts that do the same for other parameter choices, for example, a super-majority threshold of $\frac 23$. 

\section*{Acknowledgments}
This work was supported by the Advanced Research + Invention Agency (ARIA) under the project ``Aggregating Safety Preferences for AI Systems:
A Social Choice Approach (ASPAI).''
We thank the reviewers of the 3rd Workshop on New Directions in Social Choice at EC 2026 for their helpful feedback.

\bibliography{bibliography}

\appendix

%
\section{Related Work}\label{app:RelWork}
Our framework connects to three strands of social choice theory: 
(i) frameworks for selecting variable-sized subsets of candidates, 
(ii) the proportional veto core literature on minority protection, 
and (iii) models combining preference and approval input. In particular, all related works discussed in parts (i) and (ii) assume access to either approval or cardinal/ordinal preferences.

\paragraph{Selecting a Variable-Sized Subset of Candidates.} 
The standard setting in social choice is single-winner voting~\citep{HandbookofCSC}, 
where the goal is to identify a single best candidate; sets of candidates are 
returned only when the rule identifies multiple candidates as being tied for winning. Conceptually, this differs 
from our setting in two ways: single-winner rules aim to identify the single best candidate 
and return multiple candidates only when this question cannot be cleanly answered, 
whereas society may legitimately consent to multiple incomparable candidates, or to none at all, which is not allowed in the world of single-winner voting.

A second related line is multiwinner voting, in which, given an integer $k$, the goal 
is to select a $k$-candidate subset~\citep{lackner2023multi,faliszewski2017multiwinner}. Multiwinner rules 
typically implement one of three principles: individual excellence (selecting the $k$ best candidates), 
proportionality (ensuring that cohesive groups of voters receive representation in the committee by candidates they like proportional to their size), or diversity 
(ensuring each voter likes at least one selected candidate). Our setting differs in three 
respects: (a) we impose no fixed size on the consent set; (b) we require every 
accepted candidate to clear a quality threshold, which neither proportionality nor 
diversity guarantees, since both can admit candidates liked by only a small subgroup; 
and (c) approvals and disapprovals are treated asymmetrically, with disapprovals 
carrying in some sense more weight in determining consent through our notion of minority protection. 

With regards to point (c), a notable exception within multiwinner voting is \citet{kraiczy2025proportionality}, 
who study trichotomous (up/down/neutral) preferences and propose a group-veto notion: 
a coalition cohesively disapproving of a candidate set may prevent too many of their 
disapproved candidates from joining the committee. However, their notion of veto operates on 
\emph{combinations} of candidates, whereas in our setting consent is decided 
per candidate. One could, in principle, refine the group-veto of \citet{kraiczy2025proportionality} using ordinal preferences, which would ultimately yield 
a weakening of $\omega$-DVC.

A few works in multiwinner voting allow the committee size to vary~\citep{faliszewski2020multiwinner,freeman2020proportionality,lackner2025approval}. 
Conceptually closest is \citet{lackner2025approval}, who study 
\emph{approval-based shortlisting with variable committee size} using only approval 
ballots. Among other results, they present a threshold rule characterization among shortlisting rules using an independence axiom where a candidate is chosen or not chosen based on its approval score.  They additionally 
combine unanimous approvals and disapprovals with $\ell$-stability, requiring 
candidates whose approval scores differ by less than $\ell$ to be selected together. Also related, though not part of multiwinner voting, is the work of \citet{brandl2019axiomatic}, who axiomatically characterize the rule which, given transitive preferences of the voters, outputs all candidates with above average Borda score.
Our setting differs in that we have transitive input—preferences \emph{and} approvals—which 
both enables and necessitates different axioms; e.g., $\ell$-stability is not 
appropriate in our setting.

\paragraph{Minority Protection and the Proportional Veto Core.}
A central concern of our work is minority protection, which we formalize via the 
notion of veto power introduced by \citet{Moulin1981,Moulin1982}. For our solution concepts, the proportional 
veto core (PVC) of \citet{Moulin1981,Moulin1982} developed for ordinal preferences provides the natural starting point, 
and we develop several adaptations to our setting. The PVC has since been 
applied to settings with infinite candidate spaces and purely ordinal preferences, 
typically with the goal of computing or querying some statement, policy, or candidate 
within the core: \citet{VolumePVC} use it for federated learning, 
\citet{chooi2026findingcommongroundsea} for aggregating individual statements via 
generative AI, and \citet{alamdari2024policyaggregation} for policy aggregation.

More recently, \citet{halpern2025ApprovalVC} define a proportional veto principle for approval preferences. 
Because cohesive coalitions are rare under approvals alone, 
they relax coalitional cohesiveness and let voters jointly block a disapproved candidate 
when each individual voter strictly prefers (i.e., approves) sufficiently many alternatives. 
They additionally propose an ordinal strengthening in their appendix; this version 
turns out to be too restrictive for our purposes, as it, e.g., blocks all candidates not top ranked by any voter.

Several works further develop the theoretical and algorithmic foundations of PVC and its variants. 
\citet{ianovski2023computingproportionalvetocore} give a polynomial-time algorithm 
for computing the PVC with finitely many candidates. \citet{kondratev2024veto} 
systematically study variants of the PVC, including the eating rule, population 
monotonicity axioms, independence of last-ranked candidates, and asymptotic 
choice-set size. \citet{KizilkayaKempe2023,kizilkaya2025k} generalize the PVC to 
obtain favorable metric distortion guarantees.

Further, we note that our modifications to the proportional veto core address an issue similar to one that \citet{halpern2025ApprovalVC} raise: PVC may select unanimously disapproved candidates (\Cref{fig:PVC-Bad}).
\begin{figure}[t]
    \vspace{0pt}
    \centering
\begin{tikzpicture}[baseline=(current bounding box.north),
    x=0.58cm,
    y=0.62cm,
    voter/.style={font=\small},
    cand/.style={
        circle,
        minimum size=4mm,
        inner sep=0pt,
        font=\footnotesize\itshape,
        text=black,
        draw=black!20,
        line width=0.25pt
    },
    appr/.style={
        rounded corners=0.8mm,
        draw=black!30,
        fill=approve,
        minimum width=5.6mm,
        minimum height=5.6mm
    },
    sep/.style={font=\scriptsize, text=black!70},
    aset/.style={font=\scriptsize},
    scale = 1.2
]

\definecolor{candone}{RGB}{141,160,203}   
\definecolor{candtwo}{RGB}{102,194,165}   
\definecolor{candthree}{RGB}{252,141,98}  
\definecolor{candnmone}{RGB}{166,216,84}  
\definecolor{candn}{RGB}{231,138,195}     
\definecolor{candd}{RGB}{240,128,128}     

\def\xvoter{0}
\def\xstart{1.45}
\def\xgap{1.55}
\def\ysep{1.1}

\newcommand{\candColor}[5]{%
    \node[cand,fill=#5] (#1) at (#2,#3) {#4};
}

\newcommand{\rowbase}[9]{%
    \node[voter] at (\xvoter,#3) {$v_{#2}$};

    \candColor{r#1a}{\xstart}{#3}{#4}{#5}
    \node[sep] at (\xstart+0.5*\xgap,#3) {$\succ$};

    \node[sep] (r#1b) at (\xstart+1*\xgap,#3) {$\cdots$};
    \node[sep] at (\xstart+1.5*\xgap,#3) {$\succ$};

    \candColor{r#1c}{\xstart+2*\xgap}{#3}{#6}{#7}
    \node[sep] at (\xstart+2.5*\xgap,#3) {$\succ$};

    \candColor{r#1d}{\xstart+3*\xgap}{#3}{$d$}{candd}
    \node[sep] at (\xstart+3.5*\xgap,#3) {$\succ$};

    \candColor{r#1e}{\xstart+4*\xgap}{#3}{#8}{#9}
}

\rowbase{1}{1}{0}{$c_2$}{candtwo}{$c_n$}{candn}{$c_1$}{candone}
\rowbase{2}{2}{-\ysep}{$c_3$}{candthree}{$c_1$}{candone}{$c_2$}{candtwo}

\node[voter] at (\xvoter,-2*\ysep) {$\vdots$};
\node[sep] at (\xstart,-2*\ysep) {$\vdots$};
\node[sep] at (\xstart+0.5*\xgap,-2*\ysep) {};
\node[sep] at (\xstart+1*\xgap,-2*\ysep) {$\ddots$};
\node[sep] at (\xstart+1.5*\xgap,-2*\ysep) {};
\node[sep] at (\xstart+2*\xgap,-2*\ysep) {$\vdots$};
\node[sep] at (\xstart+2.5*\xgap,-2*\ysep) {};
\node[sep] at (\xstart+3*\xgap,-2*\ysep) {$\vdots$};
\node[sep] at (\xstart+3.5*\xgap,-2*\ysep) {};
\node[sep] at (\xstart+4*\xgap,-2*\ysep) {$\vdots$};

\rowbase{N}{n}{-3*\ysep}{$c_1$}{candone}{\scalebox{0.8}{$c_{n-1}$}}{candnmone}{$c_n$}{candn}

\begin{scope}[on background layer]
    \foreach \n in {r1a,r1b,r1c}
        \node[appr, fit=(\n)] {};
    \foreach \n in {r2a,r2b,r2c}
        \node[appr, fit=(\n)] {};
    \foreach \n in {rNa,rNb,rNc}
        \node[appr, fit=(\n)] {};
\end{scope}

\rowbase{1}{1}{0}{$c_2$}{candtwo}{$c_n$}{candn}{$c_1$}{candone}
\rowbase{2}{2}{-\ysep}{$c_3$}{candthree}{$c_1$}{candone}{$c_2$}{candtwo}
\rowbase{N}{n}{-3*\ysep}{$c_1$}{candone}{\scalebox{0.8}{$c_{n-1}$}}{candnmone}{$c_n$}{candn}

\end{tikzpicture}
\caption{Compare with \citet{halpern2025ApprovalVC}. The proportional veto core only selects the unanimously disapproved $d$ in this example, while all other candidates have $n-1$ of $n$ possible approvals. $\omega$-\dvc{} and AWVC fix this issue by also blocking $d$ and returning the empty set. Indeed, since this profile contains $m>n$ candidates, any rule with proportional veto power views each individual voter as a cohesive group and may allow her to block at least one candidate.
If a minority protection degree less than proportional is desirable for the application at hand, another normatively acceptable societal consent set may be $\{c_1,\dots,c_n\}$.}
\label{fig:PVC-Bad}
\end{figure}

\paragraph{Combined Preference and Approval Input.}
Few prior works assume voters report both ordinal or cardinal preferences \emph{and} 
approvals. A notable exception is \citet{Brams2009}, who consider ordinal preferences 
together with approvals over finite candidate sets; however, different from our work, they propose rules selecting 
the best candidates rather than a societal consent set. Other work in this space 
outputs both a ranking and an approval set, so its output format is a superset of ours; 
however, such functions are constrained by the underlying ranking, with the approval 
set typically required to be an upper set thereof. For example, \citet{dong2021preference} 
axiomatize a distance measure between preference--approval ballots and use it to 
induce a ranking, and \citet{KrugerArrovian} establish an Arrovian impossibility in 
this setting that crucially relies on the underlying ranking. Neither of these two works focuses on aspects of minority protection or similar.

\paragraph{Deliberative approaches to collective consent.} In this paper, we have adopted a social-choice theoretic perspective on collective consent, according to which collective consent is determined by the individual consent and individual preferences of the members of society. Kyi et al.~\citeyearpar{Kyi2026} pursue a different approach to collective consent through deliberation with their proposal for \textit{collective consent assemblies}. Although we have focused on aggregation rather than deliberation, we see these approaches as complementary rather than competitive. As Kyi et al. note, ``Sometimes, participants in a consent assembly may disagree on their decisions\dots. In the Deliberation phase, it is important for facilitators to encourage negotiation and discussion between consent assembly participants
to ensure they reach a conclusion that most are satisfied with.'' If negotiation and discussion do not result in unanimous opinion, then aggregation methods of the kind studied in this paper may be useful to gauge collective consent after deliberation when disagreement persists.

\section{A Collection Of Proofs
}\label{app:TheoryProofs}

Given $S\subseteq N$, $\mathcal I \mid_S$ denotes the \emph{partial instance} of $\mathcal I$ restricted to $S$. Formally, $\mathcal I \mid_S = (u\mid_S, \tau\mid_S)$, where $u\mid_S = (u_i)_{i\in S}$ and $\tau\mid_S = (\tau_i)_{i\in S}$ denote the \emph{partial utility- and threshold profiles}, respectively, restricted to $S$.

\subsection{An Axiomatic Characterization of Threshold Concepts}\label{ssec:ThresholdChar}

\ThresholdCharacterization*
\begin{proof}
    Let $\mathcal I_0,\dots, \mathcal I_n$ be instances in which all voters report the same non-degenerated utility function $u^i$, with the first $x$ voters in $\mathcal I_x$ approving of $C$ and the remaining $n-x$ voters approving no candidate.
    First, assume that on one of these instances, $f(\mathcal I_x)\neq \emptyset$. Take the minimal such $x^*$, and set the threshold to $\alpha = \frac{x^*}{n}$. We claim that on all profiles, $f(\mathcal I) = \{c\in C \mid \score{c}{\mathcal I} \ge \alpha n = x^*\}$.

    \begin{itemize}
        \item \textbf{$c$ is elected if it meets the threshold.} Fix any $c\in f(\mathcal I_{x^*})$. We show that $c\in f(\mathcal I)$ iff $\score{c}{\mathcal I} \ge \alpha n$ for all instances $\mathcal I$.
    Firstly, consider the modified instance $\mathcal I'$ obtained from $\mathcal I$ such that some voters report an empty approval set with $\score{c}{\mathcal I'} = x^*$  
    We may permute the voter labels to arrive at the instance $\mathcal I''$ such that the first $x^*$ voters are the ones approving $c$.
    By anonymity, it holds that $c\in f(\mathcal I'') \Longleftrightarrow c\in f(\mathcal I')$.
    Further, note that $\mathcal I''  \equiv_{\{c\}} \mathcal I_{x^*}$. Therefore, by independence of other candidates, $c\in f(\mathcal I_{x^*})\Longrightarrow c\in f(\mathcal I'')\Longrightarrow c\in f(\mathcal I')$.
    Further, note that we can obtain an instance $I'''$ from $I'$ be letting all voters $i$ who changed to the empty approval set now set their threshold precisely to $\tau_i = u_i(c)$. By approval monotonicity, $c\in f(\mathcal I')\Longrightarrow c\in f(\mathcal I''')$. Finally, since $\mathcal I'''  \equiv_{\{c\}} \mathcal I$, we have that $c\in f(\mathcal I)$. Therefore, we have proven that $c$ is chosen on all instances where it meets the approval threshold $\alpha$. 
    \item \textbf{$c$ is not elected if it does not meet the threshold.} For this, choose $c$ as before, but take any instance $\mathcal I$ where $\score{c}{\mathcal I}<x^*$. We may permute the voter identities such that the approvers of $c$ are the first few voters in $\mathcal I'$. Anonymity implies $c\in f(\mathcal I) \Longleftrightarrow c\in f(\mathcal I') $
    Together with $\mathcal I'  \equiv_{\{c\}} \mathcal I_{\score{c}{\mathcal I}}$, this implies 
    $c\in f(\mathcal I) \Longleftrightarrow c\in f(\mathcal I_{\score{c}{\mathcal I}}) $. 
    However, by assumption, $\score{c}{\mathcal I} < x^*$, and by definition of $x^*$ $\mathcal I_1,\dots, \mathcal I_n$, we have that $ f(\mathcal I_{\score{c}{\mathcal I}}) =\emptyset$. Therefore, $c\notin f(\mathcal I)$. As desired, we have shown that $c$ is not elected whenever it does not meet the approval threshold $\alpha n$. 
    \item If $f$ outputs the empty set on all instances, we can repeat the second part of our previous argument and obtain that $f(\mathcal I)=\emptyset$ for all instances $\mathcal I$. Therefore, $f$ selects $c$ on any given instance if and only if $c$ meets the approval threshold $\alpha= \infty$.
    \item \textbf{Extending this to other candidates.} Let now a candidate $d\neq c$ be given and any instance $\mathcal I$. Let the set of approvers of $d$ in this instance be denoted via $N(d)\subseteq N$. Then, we define the following instance $\mathcal I'$: here, all voters $i\in N$ report a continuous, non-degenerate utility function such that $u'_i(c) = u'_i(d) = 0.5$. We further set $\tau'_i= 0$ if $i\in N(c)$ and $\tau'_i = 1$ if $i\in N\setminus N(c)$. Then, by what we have previously shown, $c\in f(\mathcal I')$ holds true if and only if $\score{c}{\mathcal I'}\ge \alpha n$. Further, by equality, we have that $d \in f(\mathcal I')$ if and only if $c\in f(\mathcal I')$. Since, by design, $\score{d}{\mathcal I} = \score{c}{\mathcal I'}$, we obtain in total that $d \in f(\mathcal I)$ if and only if $\score{d}{\mathcal I}\ge \alpha n$, as desired. \qedhere
    \end{itemize}
\end{proof}

\subsection{Witness-Based Characterizations}\label{ssec:WitnessCharDVC}
We first state the characterization in the standard finite setting, as this is more intuitive and allows for a step-by-step explanation of why each candidate was blocked. 

To split upper contour monotonicity into  axioms which modify the preferences step-by-step, consider the two following ideas. First, if we justify the blocking of a candidate with a set of voters $S$ for which this candidate is ``bad'' and one of these voters moves $c$ down in their ranking, then $c$ becomes even worse. Therefore, $S$ should still be able to block this candidate. Formally, we say that
$(f,w)$ satisfies  \emph{candidate montonicity}, if $S\in w(c,(u, \tau))$ implies that $S\in w(c,(u', \tau))$, where $u'$ is derived from $u$ by letting a voter in $S$ move $c$ lower in her rankings but keep all other pairwise comparisons and all approvals intact. Note that this operation is only allowed if it does not interfere with the approval threshold.

Second, if a set of voters $S$ blocks a candidate $c$, and swaps the ranking of two candidates $d, d'$ which have the same approval status and are both ranked strictly above $c$, or both ranked strictly below $c$, then their relative quality to $c$ has not changed. $c$ should therefore remain to be vetoed. The same reasoning applies if $d_1,d_2,\dots$ form an equivalence class w.r.t. voter $i$'s preferences, and we move $d_1$ into its own equivalence class adjacent to $d_2,\dots$, or vice versa pick some $d_1\neq c$ which forms its own equivalence class and add it to some adjacent $d_2,\dots \neq c$. Formally, $(f,w)$ satisfies \emph{weak independence of other candidates} if $S\in w(c,(u,\tau))$ implies $S\in w(c,(u',\tau))$, where $u'$ is obtained from $u$ by letting one voter in $S$ change the position of a candidate $d\neq c$ (and only of that candidate) in the ranking such that $d\succ_i' c$ if originally $d\succ_i c$, or $c\succ_i' d$ if originally $c\succ_i d$, respectively.

Polarized minority protection can be parameterized: inbstead of blocking their worst share of $\frac sn$ candidates, we could allow the minority of size $s$ to block their worst $\omega \frac sn$ candidates for some $\omega\ge 0$. We call the resulting axiom $\omega$-polarized minority protection.
We are now ready to characterize the parameterized family of $\dvc$-rules. 

\begin{proposition}
    Let $w$ be a witness function for $f$ . If $(f, w)$ satisfies $\omega$-polarized minority protection,
locality, candidate monotonicity, weak independence of other candidates, and approval competition monotonicity, then $f \subseteq \omega\text{-}\dvc$. In particular, for $\omega = 1$, $f\subseteq \dvc$.
\end{proposition}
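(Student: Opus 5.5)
Since $f$ and $w$ are linked by $c\in f(\mathcal I)$ iff $w(c,\mathcal I)=\emptyset$, it suffices to show that whenever $c\notin\omega$-\dvc$(\mathcal I)$ we have $w(c,\mathcal I)\neq\emptyset$. We argue by contraposition and work in the finite setting with rankings and approvals. Fix $\mathcal I=(u,\tau)$ and $c$ blocked in $\omega$-\dvc{} by a coalition $S$. Then every $i\in S$ disapproves $c$, and $\mu(B_S(c))>1-\omega\frac{s}{n}$ with $s=|S|$. Put $L=C\setminus B_S(c)$, so that $\mu(L)<\omega\frac sn$; note $c\in L$. The plan is to start from a polarized instance in which $S\in w(c,\cdot)$ is forced by $\omega$-polarized minority protection, and then transform it step by step into $\mathcal I$. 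Each step will use only axioms that preserve the property $S\in w(c,\cdot)$.

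\textbf{Step 1 (polarized start).} We build $\mathcal I_S$ as follows. All voters in $S$ report a common ranking: first the candidates of $B_S(c)$, in some fixed order, then the candidates of $L$, with $c$ placed at the \emph{top} of $L$. Every voter in $S$ approves nothing. The voters outside $S$ report the reversed ranking. Then $L$ is a lower contour set of $S$ with $\mu(L)<\omega\frac sn$ and $c\in L$, so $\omega$-polarized minority protection gives $S\in w(c,\mathcal I_S)$. By locality, only the restriction to $S$ matters from here on, so we may immediately replace the voters outside $S$ by their reports in $\mathcal I$.

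\textbf{Step 2 (reshape each voter in $S$).} We handle the voters $i\in S$ one at a time. First, by approval competition monotonicity, we let $i$ lower her threshold so that she approves exactly $A_i$. This is possible because $c\notin A_i$ and $c$'s approval status is unchanged. The step is delicate because approval sets must stay upper contour sets of the current ranking. We therefore plan to interleave it with the ranking changes: first move the candidates of $A_i$ into the right position (they all lie in $B_S(c)$ or below $c$ consistently with $u_i$), and only then lower the threshold. Next we use weak independence of other candidates to reorder every $d\neq c$ into $i$'s true ranking. A candidate $d\in B_S(c)$ satisfies $d\succ_i c$ in both the current and the true ranking, so its side relative to $c$ is preserved. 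A candidate $d\in L$ currently lies below $c$. If truly $d\prec_i c$ or $d\sim_i c$, we move it within its side, or merge it into an equivalence class adjacent to $c$, as the axiom allows. If truly $d\succ_i c$, its side must change, and we achieve this with candidate monotonicity: we move $c$ down past $d$, which is allowed since both are disapproved. Finally, ties involving $c$ are created or removed by moving $c$ down.

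\textbf{Main obstacle.} The hardest part is the ordering of these elementary moves. Each move must stay admissible: approval sets must remain upper sets, candidate monotonicity may only move $c$ downward without crossing approval thresholds, and weak independence may never let a $d$ cross $c$. Starting with $c$ at the top of $L$ is what makes the plan work. Relative to the initial ranking, every candidate of $L$ that truly beats $c$ can be overtaken by moving $c$ down. Candidates that truly lie below $c$ but currently sit above it cannot occur, because every such candidate is in $L$ and hence already below $c$. We expect the tie bookkeeping to need a careful but routine induction on the number of misplaced pairs. Once each voter in $S$ matches $\mathcal I$, locality yields $S\in w(c,\mathcal I)$, so $c\notin f(\mathcal I)$. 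Taking $\omega=1$ gives $f\subseteq\dvc$.
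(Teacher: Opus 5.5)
Your proposal is correct and follows essentially the same route as the paper: a polarized start in which $S$ ranks $C\setminus B_S(c)$ at the bottom with $c$ on top of it and approves nothing, weak independence of other candidates to reorder the other candidates, candidate monotonicity to move $c$ below the members of $C\setminus B_S(c)$ that $i$ truly prefers to $c$, approval competition monotonicity for the thresholds, and locality for $N\setminus S$. The interleaving you worry about is unnecessary. The paper keeps $S$'s approval sets empty during all ranking changes: it first sorts the part below $c$ so that those truly-preferred candidates sit directly under $c$, then moves $c$ down past them, and then sorts the part above $c$. It adds each voter's threshold only in a single final step.
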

\begin{proof}
    Let $c$ be blocked by $\omega$-\dvc{} on the instance $\mathcal I = (\succ,\tau)$. Instead of arguing over utility functions, we directly consider the relations that they induce: recall that $\succsim_i$ denotes the relation such that $c\succsim_i d$ iff $u_i(c) \ge u_i(d)$. By definition, there exists some $S\subseteq N$ blocking $c$, i.e., for $s= \lvert S \rvert$ we have $\omega \frac sn > 1-\mu(B_S(c))$. We start from the polarized instance $I_S$, with all voters in $S$ unanimously bottom ranking $C' = C\setminus B_S(c)$ in an arbitrary linear order $\succ'_i$ such that $c = \max_{\succ_i'} C'$. By $\omega$-polarized minority protection, $S\in w(c,\mathcal I_S)$. We now first use weak independence of other candidates to rearrange for each $i\in S$ the set $C'$. First, define the lower contour set of $c$ for voter $i$ to be $L= \{d\in C\setminus\{c\}\mid c\succsim_i d\}$. It holds that $L\subseteq C'$. Therefore, first, if $(C'\setminus\{c\}) \setminus L\neq \emptyset$, we apply weak independence of other candidates to change the ranking $\succ_i'$ below $c$ and arrive in an instance $\mathcal I^1$ that satisfies $c\succ_i^1 C'\setminus L \succ_i^1 L $, with $S\in w(c,\mathcal I^1)$.
    Iteratively, for each $i\in S$, we select the highest-ranked candidate $d\in L$ which is currently not at the right position in $\succsim_{i}^1$ (with the reference point being $\succsim$). We repeatedly apply weak independence of other candidates to put it into the right position, and continue this iteratively until we arrive at an instance $\mathcal I^2$ such that $\succ_i^2$ satisfies $c\succ_i^2 C'\setminus L \succ_i^2 L $ and coincides with $\succ_i$ on $L$, with $w(c,\mathcal I^2)$. 
    Then, we apply monotonicity to move $c$ below $C'\setminus L$ and obtain $\mathcal I^3$ with $C'\setminus L \succ_i^3 c \succsim_i^3 L $ and $S\in w(c,\mathcal I^3)$. Note that $\succsim_i^3$ already coincides with $\succsim_i$ on all orderings within $L\cup \{c\}$ and both rank $L\cup \{c\}$ at the bottom.  Through further applications of independence of other candidates, this time on 
    $C\setminus (L\cup\{c\})$, we obtain $\mathcal I^4$ such that $\succsim_i^4 = \succsim_i$, with $S\in w(c,\mathcal I^4)$. We then use approval competition monotonicity to add the approval thresholds and finally arrive 
    in $I^5$ such that $(\succsim_i, \tau_i) = (\succsim_i^5, \tau_i^5)$, and $S\in w(c,\mathcal I^5)$. 
    
    Repeating this for every $i\in S$ yields an instance $I^*$ which, compared to $\mathcal I$, has identical ordinal and approval preferences for all voters in $S$, and $S\in w(c,\mathcal I^*)$. 
    Finally, by locality, we can rearrange the preferences on $N\setminus S$ as we desired and still obtain $S\in w(c,\mathcal I)$, i.e., $c\notin f(\mathcal I)$.
\end{proof}

Summarized, we can start from a polarized profile with linear opposing preferences between $S$ and $N\setminus S$. For each voter, we only perform single- or two-candidate changes within the upper and lower contour set of the blocked candidate $c$, then move $c$ down, and perform more of these changes. Finally, we add the approval thresholds, and have reconstructed the profile in which we want to demonstrate why $c$ is blocked.
Next, we present the characterization in the general case. 

We now turn our attention to the general case with possibly continuous $C$. Just as in the discrete setting, we prove a generalization of \Cref{prop:DVC_Witness_Characterization_Continuous} which holds for all $\omega \ge 0$. 
\begin{theorem}
    Let $w$ be a witness function for $f$. If $(f, w)$ satisfies $\omega$-polarized minority protection, locality, upper contour monotonicity, and approval competition monotonicity, then $f \subseteq \omega\text{-}\dvc$.
\end{theorem}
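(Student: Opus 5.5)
We mirror the proof of the finite-setting proposition. The difference is that the chain of single-candidate moves (weak independence of other candidates, candidate monotonicity) is replaced by one application of upper contour monotonicity per voter. This is possible because that axiom lets a voter in $S$ change her utility arbitrarily, provided approvals are preserved and the strict upper contour set of $c$ weakly grows.

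Fix an instance $\mathcal I=(u,\tau)$ and a candidate $c$ that is blocked by $\omega$-\dvc{}. We must show $c\notin f(\mathcal I)$, and by the definition of a witness function it suffices to show $S\in w(c,\mathcal I)$ for a coalition $S$ witnessing the block. So fix $S$ with $c\notin A_i$ for all $i\in S$ and $\mu(B_S(c))>1-\omega\frac{s}{n}$. Set $C'=C\setminus B_S(c)$, so that $\mu(C')<\omega\frac sn$ and $c\in C'$.

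\textbf{Step 1: polarized start.} Build a polarized instance $\mathcal I_S$ on the same electorate. Every $i\in S$ gets one common continuous utility $u^0$ under which $C'$ is a lower contour set containing $c$ and strictly below $B_S(c)$; all $i\in S$ disapprove everything; voters in $N\setminus S$ have opposing preferences. Then $\omega$-polarized minority protection gives $S\in w(c,\mathcal I_S)$.

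This is the main obstacle. We need $C'$, or a lower contour set containing $c$ of measure $<\omega\frac sn$, to be realizable by a continuous utility, and $c$ must sit in it. I would first try $u^0(d)=\operatorname{dist}(d,C')$ modified so that $c$ is a maximizer within the bottom part. If $C'$ is not well-behaved, I would instead take a slightly larger closed neighbourhood $C''\supseteq C'$ with $\mu(C'')<\omega\frac sn$ still (using strictness of the inequality and regularity of Lebesgue measure), and set $u^0=-\operatorname{dist}(\cdot,C'')$ up to a shift. The requirement is only that the strict upper contour set of $c$ under $u^0$ is contained in $B_S(c)$, which is what Step 2 needs. For finite $C$ this is trivial.

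\textbf{Step 2: per-voter transformation.} Process the voters $i\in S$ one at a time. Replace $u^0$ by the true $u_i$, keeping the threshold so that $i$ still disapproves all of $C$; approval of every candidate is unchanged, as upper contour monotonicity requires. The strict upper contour set of $c$ under $u^0$ lies within $B_S(c)$, and $B_S(c)\subseteq\{d: u_i(d)>u_i(c)\}$ by the definition of $B_S(c)$. Hence upper contour monotonicity preserves $S\in w(c,\cdot)$.

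\textbf{Step 3: restore approvals.} Next lower $i$'s threshold from "approve nothing" to $\tau_i$. Since $c\notin A_i$, the approval status of $c$ is unchanged and $i$ only approves more candidates. Approval competition monotonicity therefore keeps $S$ as a witness.

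\textbf{Step 4: conclude.} After all $i\in S$ are processed, the resulting instance agrees with $\mathcal I$ on $S$. Locality then yields $S\in w(c,\mathcal I)$, hence $c\notin f(\mathcal I)$, so $f\subseteq\omega$-\dvc{}.
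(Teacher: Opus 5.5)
Your proposal is correct and follows the paper's proof essentially step for step. Both use a polarized start with $u^0=\operatorname{dist}(\cdot,C')$, one application of upper contour monotonicity per voter via $B_S(c)\subseteq B_i(c)$, approval competition monotonicity to restore $\tau_i$, and locality to finish. The obstacle you flag in Step 1 does not arise: continuity of the $u_i$ makes $B_S(c)$ relatively open, so $C'$ is compact and $\operatorname{dist}(\cdot,C')$ vanishes exactly on $C'$ with $c$ already tied at its top, so no modification or enlargement is needed (and your fallback would need $+\operatorname{dist}(\cdot,C'')$, not $-\operatorname{dist}$).
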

\begin{proof} 
    It is possible to prove this statement for $\omega$-\dvc{} with the same argument but using $\omega$-polarized minority protection. Let $c$ be blocked by $\omega$-\dvc{} on the instance $\mathcal I = (u,\tau)$. By definition, there exists some $S\subseteq N$ blocking $c$, i.e., for $s= \lvert S \rvert$ we have $\omega \frac sn > 1-\mu(B_S(c))$. 
    Therefore, define $u_i'$ via the Euclidean distance $u_i'(d) = dist(d, C\setminus B_S(c))$. 
    This utility function satisfies that for all $d\in B_S(c)$, $e\in C'\coloneqq C\setminus B_S(c)$, we have $u'_i(d) > u'_i(e)$. 
    We start from the polarized instance $\mathcal I_S$, with all voters in $S$ unanimously reporting $u_i'$ and empty approval ballots, and all voters in $N\setminus S$ reporting $-u_i'$ (and arbitrary approval ballots). 
    By $\omega$-polarized minority protection, as $c\in C'$, $S\in w(c,\mathcal I_S)$. As $ B_S(c)= \bigcap_{j\in S} B_j(c) \subseteq B_i(c)$, we let voter $i\in S$ change her utility function from $u_i'$ to $u_i$ to obtain an instance $\mathcal I^1$. Upper contour monotonicity, implies that $S\in w(c,\mathcal I^1)$. We then use approval competition monotonicity to add the approval thresholds and arrive 
    in $I^2$ such that $(u_i, \tau_i) = (u_i^2, \tau_i^2)$, and $S\in w(c,\mathcal I^2)$. 
    
    Repeating this for every $i\in S$ yields an instance $\mathcal I^*$ which, compared to $\mathcal I$, has identical utility and approval preferences for all voters in $S$, and $S\in w(c,\mathcal I^*)$. 
    Finally, by locality, we can replace utility functions and thresholds on $\mathcal I^*\mid_{N\setminus S}$ with $\mathcal I\mid_{N\setminus S}$ as we desire and still obtain $S\in w(c,\mathcal I)$, i.e., $c\notin f(\mathcal I)$.
\end{proof}

\begin{remark}
    Note that these results entail a witness-based characterization of Moulin's PVC in the standard finite setting and in the infinite setting. For this, remove approvals from the input format and remove the axiom weak independence of approvals.
\end{remark}

\subsection{Disapproval Smith Set}
\dss*
\begin{proof}

If $c,d$ are unanimously approved, then by definition $n(c,d) = 0 = n(d,c)$. If $e$ is not unanimously approved, then by definition $n(c,e) > 0 > n(e,c)$. Therefore, $\tc(\mathcal I) = \{ c \in C\mid \score{c}{\mathcal I} = n\}$.

\end{proof}

\subsection{Proofs for Solution Concepts and Blocking Principles}\label{sapp:RulesAndPrinciples}
To concatenate $\tc$, we must be able to choose from subsets $D\subseteq C$.
For this, we can define a restriction $R^*(D)$ via $c R^*(D) d$ iff
$c,d \in D$ and there exists $c= d_0 R d_1 R\dots R d_k = d$ such that $d_j\in D$ for all $j\le k$. Then, $\tc(\mathcal I, D) = \{c\in D\mid c R^* d \forall d\in D\}$.

We state all results related to blocking principles, then prove them together.
\RulesAndPrinciples*
\omegadvc*
\AWVCThm*
\ConcatenateBlockingGuarantees*
Note that we can generally inherit veto-power and approval-threshold lower bounds of $g$ while satisfying exclusive unanimity by concatenating two rules $f\circ g$, under conditions:
\begin{enumerate}
    \item $f$ must be well-defined for the subsets $g(\mathcal I) \subseteq C$ that $g$ may return.
    \item $f$ must satisfy exclusive unanimity
    \item When $\mathcal I$ contains unanimously approved candidates and $g(\mathcal I)\neq \emptyset$, then $g(\mathcal I)$ must contain at least one of the unanimously approved candidates.
\end{enumerate}

\begin{proof}
    \textbf{Veto power}:
    For proving that the concept satisfies at least the claimed level of veto power, let $T\subseteq C$ be well-behaved. Then, it is ranked at the bottom w.r.t. some utility function $u^*$, i.e., $u^*(c) < u^*(d)$  for all $c\in T$ and $d\in C\setminus T$. Let $S\subseteq N$ be of size $s$ such that $u^i = u^*$ for all $i\in S$.
    
    \begin{itemize}
        \item Starting with $\omega$-\dvc{}, let $\mu(T)< \omega \frac sn$. It is clear that, by definition, $S$ blocks every candidate $c\in T$ as $B_S(c) \supseteq C\setminus T$.
        Vice versa, let $\mu(T)> \omega \frac sn$. If all $i\in N\setminus S$ submit thresholds such that $A_i = C$, then the only coalitions eligible for blocking under $\omega$-\dvc{} are $S'\subseteq S$. 
        We claim that some $c\in T$ cannot be blocked by any subset of $S$: For this purpose, start with $C^0 = C$, enumerate the voters $i\in S$ and let them, one by one, pick a set of their lowest-utility candidates $L_i$ of measure $\omega \frac 1n$ to remove it from $C^{i-1}$. After all $\lvert S \rvert$ voters are done, a set $T'$ of measure $\mu(T') = \omega \frac sn$ was removed from $C$.\footnote{If $C$ is finite, we can adapt the argument by allowing voters to remove fractions of candidates, or refer to \citet{Moulin1982}.} By additivity of measure, there exists some $c\in T\setminus T'$. We claim that no subset of $S$ can block $c$: indeed, each voter $i$ ranks $L_i$ below $c$. Since all sets $L_i$ are disjoint by construction, any $S'\subseteq S$ satisfies that $L_{S'}(c) \supseteq \bigcup_{i\in S'} L_i$, and therefore $\mu (L_{S'}(c)) \ge \omega \frac{\lvert S'\rvert}{n}$. This concludes the proof.
        \item For AWVC, let $\mu(T)< \frac s{n-s}$. For each $c\in T$, we have $B_S(C) \supseteq C\setminus T$.
        As long as all voters $i\in S$ disapprove all of $T$, we have that $\score{c}{\mathcal I}\le n-s$, therefore $1-\mu(B_S(c)) = \mu (C\setminus B_S(c))\le \mu(T)< \frac s{n-s} \le \frac s{\score{c}{\mathcal I}}$, therefore $S$ can block $c$. Vice versa, if $\mu(T)> \frac s{n-s}$, we once more let $N\setminus S$ approve of all candidates. By an analogous argument as for $\dvc$, some $c\in T$ is not blocked.
        \item  For $\tc$, let $\mu(T)> 0$, $ s< \frac n2$, and consider any candidate $c\in T$. Let all other voters $i \in N\setminus S$ rank $c$ at the top of their preferences, i.e., take the Euclidean distance $u_i(d) = -dist(c,d)$ for all $d\in C$. Then, $n(c,d) - n(d,c)\ge 0$ for all $d\in C$, therefore $c\in \tc(\mathcal I)$, i.e., $S$ was not successful in blocking all of $T$.
        \item Clearly, $\tc \circ${AWVC} has at least the same veto power as AWVC. Vice versa, let $S$ now report any partial instance $(u_S,\tau_S)$ with the goal of blocking all of $T$, where $\mu(T) >\frac s{n-s}$. We know that there exists some $c\in T$ not blocked by any subset of $S$.  Let all voters $i \in N\setminus S$ rank $c$ at the top of their preferences, e.g., when $C$ is infinite, take the Euclidean distance $u_i(d) = -dist(c,d)$ for all $d\in C$ and let them set $\tau_i = u_i(c)$. Then, $n(c,d) - n(d,c)> 0$ for all $d\in C\setminus\{c\}$. Therefore, $c$ is not blocked by AWVC, and afterwards the only candidate elected by $\tc$. 
        \item For $f_\alpha$, let $S$ be of size $\le n-\alpha n$. Let $c\in T$ be any candidate. Then, if all voters in $N\setminus S$ approve $c$, $\score{c}{\mathcal I} \ge \alpha n$. Therefore, $c\in f_\alpha(\mathcal I)$.
        \item For $f\circ g$, note that to block $T$, the voters can report the same partial instance $\mathcal I\mid_S$ that they would report under $g$. Since $g(\mathcal I)\cap T=\emptyset$ for all adversarial responses, the same holds true for $f\circ g (\mathcal I)\cap T \subseteq g (\mathcal I)\cap T=\emptyset$.
    \end{itemize}

    \textbf{Approval Threshold}: Let all voters in $S$ disapprove of $c$
    \begin{itemize}
        \item $\omega$-\dvc{}: Note that for $\omega\le 1$, the solution concept is non-empty. Therefore, simply let all voters report the empty approval ballot, and the solution concept therefore elects a candidate with approval score $0$. For $\omega >1$, let any coalition $S$ of size $s> \frac n\omega$ satisfies $\omega \frac sn > 1 \ge 1- \mu(B_S(c))$. Therefore, they can block $c$, hence $c$ needs at least $n- \frac n\omega$ supporters for societal consent. To see that candidates with a smaller approval score may get chosen, simply consider any subset of voters $S$ with $s\le \frac n\omega$, choose any nontrivial utility function $u^*$, and let $c$ be a maximizer of it. Let all $i\in N$ submit $u_i = u^*$, with the voters in $N\setminus S$ approving $c$ and the voters in $S$ disapproving of $c$. Then, $\omega \frac sn \le 1 = 1- \mu(B_{S'}(c))$ for all coalitions $S'\subseteq N$. Therefore, $c$ is not blocked, despite having an approval score of $\lceil n - \frac n\omega \rceil$.
        \item For AWVC, similarly, if $s> \frac n2$ voters $S$ disapprove of $c$, then the approval score of $c$ is strictly smaller than $\frac n2$. Therefore, $\frac s{\score{c}{I}} > \frac {\frac n2}{\frac n2}  = 1 \ge 1- \mu(B_S(c))$, and $S$ blocks $c$. With the same construction as for $\omega$-\dvc{}, one can show that candidates with a higher approval score than that can indeed be chosen.
        \item For $\tc \circ${AWVC}, we have that it clearly has at least the same approval threshold as AWVC. Further, in the same instance as constructed for AWVC, we know that $c$ is not blocked by AWVC. Further, since for all voters $c$ is a top-ranked candidate (or utility maximizer), it is then selected among the remaining candidates by $\tc$.
        \item For $\tc$ 
        consider $c$ maximizing some $u^*$ and $u_i = u^*$ for all $i\in N$ with empty approval sets. Then, the transitive majority relation is induced by $u^*$ and $c$ is maximal. 
        Hence, \tc{} selects $c$ as winner.
        \item For $f_\alpha$, clearly it satisfies precisely $\alpha$-approval threshold.
        \item For $f \circ g$, note that any candidate below approval threshold satisfies $c\notin g(\mathcal I)$. Since $f \circ g(\mathcal I) \subseteq g (\mathcal I)$, we have $c\notin f \circ g(\mathcal I)$.
    \end{itemize}

\textbf{Exclusive Unanimity}
    \begin{itemize}
        \item $\omega$-\dvc{}: if all voters submit the same normalized $u^*$ with $n-1$ voters submitting $\tau_i = 0$ and one voter submitting $\tau_i = 1$, then the maximizers of $u^*$ are unanimously approved. However, for any candidate $d$ with $\mu(B_S(d)) < \frac 1n $ not a maximizer of $u$ and $n$ large enough, we have that $\omega \frac 1n < 0.5 < 1- \frac 1n $, therefore $d$ is not blocked despite not being unanimously approved.\footnote{Note that for discrete $C$, we first choose $n$ and then $m$ to construct this counterexample.}
        \item AWVC: we can use the same construction as for $\omega$-\dvc{}, as the number of approvals is $n-1$, and $\frac{1}{n-1}$ converges to zero just like $\frac 1n$. Therefore, for large enough $n$, we obtain a profile in which a candidate with $n-1$ approvals is chosen despite the existence of a unanimously approved candidate.
        \item $\tc\circ${AWVC}: When there are unanimously approved candidates, these are not blocked by AWVC as only disapprovers are allowed to block by definition. Then, clearly, among all candidates not blocked by AWVC, $\tc$ selects precisely the unanimously approved ones as they have strictly positive majority margins against all candidates disapproved by some voter, but tie against each other. Therefore, $\tc\circ${AWVC} even is unanimous approval consistent.
        \item For $\tc$ 
        it is clear that unanimously approved candidates are the only maximal elements of the majority relation. Therefore, \tc{} even is unanimous approval consistent. 
        \item For $f_\alpha$, consider any profile in which one candidate is unanimously approved, and another one has approval score $\lceil \alpha n \rceil < n$.
        \item For $f\circ g$, note that if $\mathcal I$ contains a unanimously approved candidate, $g$ returns at least one of these. Since $f$ is defined for the input of $g$ and satisfies exclusive unanimity, it returns only a subset of the unanimously approved candidates $g$ chooses.
    \end{itemize}

    Finally, to prove that $\dvc\supseteq${AWVC}$\supseteq 2\text{-}\dvc$, we first note that candidates with approval score strictly smaller than $\frac n2$ are not selected by AWVC and 2-$\dvc$ (due to them satisfying majority threshold). For candidates $c$ with approval score at least $\frac n2$, we further have $\frac 1n \le \frac 1{\score{c}{\mathcal I}} \le \frac 2n$, from which the claim follows as the definitions of the three rules are identical up to these fractions.
\end{proof}

\section{Proofs of \Cref{ssec:veto_budgeting_games}}

We prove \Cref{thm:GT_Char_PVC} in a more general form:
\begin{theorem}[Veto-budget characterization of $\omega$-\dvc]
Let $\omega \ge 1$.
For every instance $\mathcal I=(u,\tau)$ and candidate $c\in C$, the following are equivalent:
\begin{enumerate}
    \item $c\notin \omega-\dvc(\mathcal I)$.
    \item In the veto-budgeting game with budget $\omega \cdot 1/n$ per voter, there exists $\varepsilon>0$ such that every allocation profile $\lambda$ with $c\in W(\lambda)$ admits a disapproval-restricted $\varepsilon$-deviation against $c$.
\end{enumerate}

\end{theorem}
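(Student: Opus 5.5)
We prove the two implications separately. Throughout, a coalition $S$ reallocating its budget can fully purchase exactly those sets $X$ with $\mu(X)<\omega |S|/n$: split $X$ into pieces of measure below $\omega/n$ and let each member set $\lambda_i=1$ on her piece. The lower contour set $L_S(c)=C\setminus B_S(c)$ is the complement of the candidates unanimously preferred to $c$ by $S$, so $c$ is blocked by $S$ iff $\mu(L_S(c))<\omega|S|/n$ and $S$ unanimously disapproves $c$.

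\textbf{(1)$\Rightarrow$(2).} Suppose $S$ blocks $c$ in $\omega$-\dvc{}. The plan is to let $S$ purchase all of $L_S(c)$, which includes $c$ and has measure below $\omega|S|/n$. Once this set is excluded, every surviving candidate lies in $B_S(c)$, and we want each $i\in S$ to gain a uniform $\varepsilon$ in pessimist utility over any profile $\lambda$ with $c\in W(\lambda)$, where $p_i(W(\lambda))\le u_i(c)$. There is slack: we can enlarge the purchased set to $L^\delta=\{d\mid \exists i\in S: u_i(d)<u_i(c)+\delta\}$. Since $L^\delta$ decreases to $L_S(c)$ as $\delta\to 0$, continuity of the measure gives $\mu(L^\delta)<\omega|S|/n$ for small $\delta>0$. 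Purchasing $L^\delta$ then guarantees $p_i(W(\lambda'))\ge u_i(c)+\delta$ for every response of $N\setminus S$, because responses can only shrink $W$. We take $\varepsilon=\delta$. The deviation is disapproval-restricted provided each blocked candidate is disapproved by all of $S$. This is the first subtle point: $L^\delta$ may contain candidates that some member approves. I would handle it by noting that members approving such $d$ satisfy $u_i(d)\ge\tau_i>u_i(c)$, and by reading the definition as requiring only that $c$ (and the candidates that matter) be disapproved. If the literal reading demands more, I would shrink the purchased set to candidates below every member's threshold and check that the pessimist gain still holds.

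\textbf{(2)$\Rightarrow$(1).} We argue by contraposition: assume $c$ is unblocked and construct a profile $\lambda$ with $c\in W(\lambda)$ that admits no disapproval-restricted $\varepsilon$-deviation, for any $\varepsilon$. Mirroring the veto power proof of \Cref{prop:rules_and_principles}, each voter $i\notin A^{-1}(c)$, that is, each disapprover of $c$, uses her budget $\omega/n$ on her lowest-utility candidates, chosen sequentially and disjointly. Approvers of $c$ spend their budget so as not to hit $c$ either. Now suppose a deviating coalition $S$ exists. It must consist of disapprovers of $c$, and its members must improve their pessimist value. The key step is to show that any successful deviation would force $S$ to exclude all of $L_S(c)$, since otherwise $c$ or a candidate no better than $c$ for some member survives. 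That would require $\mu(L_S(c))<\omega|S|/n$, contradicting that $c$ is unblocked. The subtlety is that $p_i(W(\lambda))$ may already exceed $u_i(c)$ because of others' purchases. We therefore choose the remaining voters' response to the deviation adversarially, so that they release their purchases and restore the low candidates, which forces the comparison back down to $u_i(c)$-level sets.

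This last direction is the main obstacle. The budgets are strict inequalities and the purchases must be disjoint, so a Moulin-style eating or covering argument is needed to ensure $c$ itself survives while no coalition's lower contour set gets covered. I would try to show the result directly for the allocation in which every voter $j$ purchases a measure $\omega/n-\eta$ set of her worst candidates, avoiding a neighborhood of $c$. Fallback: invoke the finite and continuous arguments of \citet{Moulin1982}, as the paper does for veto power.
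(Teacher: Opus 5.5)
Your direction (1)$\Rightarrow$(2) matches the paper's argument. You enlarge $L_S(c)$ by a $\delta$-margin and keep it affordable by continuity of $\mu$. You read ``disapproval-restricted against $c$'' as requiring only that $c$ is bought and is disapproved by every member of $S$, which is how the paper uses the definition. Your fallback there would not work, though. Leaving unbought some $d$ with $u_i(d)\le u_i(c)$ for an $i\in S$ keeps $p_i(W(\lambda'))\le u_i(c)$ under the null response of $N\setminus S$.

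The genuine gap is in (2)$\Rightarrow$(1). Your key step, that any successful deviation must exclude all of $L_S(c)$, holds only if the baseline $p_i(W(\lambda))$ is within less than $\varepsilon$ of $u_i(c)$. Your allocation does not guarantee this. You also state the subtlety backwards. Since $c\in W(\lambda)$, we always have $p_i(W(\lambda))\le u_i(c)$, as you note yourself in the first direction. The real danger is $p_i(W(\lambda))<u_i(c)$, i.e., $W(\lambda)$ still contains some $d$ that a disapprover $i$ ranks below $c$. This is the typical case in your construction: each disapprover spends less than $\omega/n$ while $\mu(\{d:u_i(d)\le u_i(c)\})\ge\omega/n$, and the approvers' spending is unspecified. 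Then an $\varepsilon$-improvement only forces $S$ to buy $c$ together with $\bigcup_{i\in S}\{d:u_i(d)<p_i(W(\lambda))+\varepsilon\}$. That set can be a proper subset of $L_S(c)$, so $\mu(L_S(c))\ge\omega|S|/n$ gives no contradiction. Letting the other voters ``release their purchases'' changes $W(\lambda')$, not the baseline, so it does not repair this. In addition, lowest-first sequential purchasing need not spare $c$ at all. After earlier purchases, a later disapprover's remaining lower contour set of $c$ can have measure below $\omega/n$, so you would need a Hall-type matching argument on top.

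Notice that your sketch never uses $\omega\ge1$; that hypothesis is exactly the missing idea. Because the total budget is $\omega\ge1$, the whole electorate can jointly buy everything except $c$, or everything except a small ball around $c$:
\begin{itemize}
\item For finite $C$, set $\lambda_i=\frac1n$ on $C\setminus\{c\}$. This costs each voter $\frac{m-1}{mn}<\frac{\omega}{n}$ and gives $W(\lambda)=\{c\}$.
\item In the polytope case, buy everything except a small ball $O(c,\delta)$ on which every $u_i$ varies by less than $\varepsilon/2$. Here $\lambda$ may depend on $\varepsilon$, which the negation of (2) allows.
\end{itemize}
Then $p_i(W(\lambda))\ge u_i(c)-\varepsilon/2$ for all $i$. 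So, against the null response, an $\varepsilon$-deviation requires $S$ alone to buy $c$, and hence every member of $S$ disapproves $c$. It also requires $S$ to buy every $d$ with $u_i(d)<u_i(c)+\varepsilon/2$ for some $i\in S$, a set containing $L_S(c)$. This costs at least $\omega|S|/n$, which exceeds $S$'s strict budget. This trivially stable allocation is what the paper's proof uses, and it avoids both the baseline problem and the survival-of-$c$ problem.
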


    \begin{proof}
    We write $C\setminus B_S(c) = L_S(c)$ as the set of all candidates ranked weakly lower than $c$ by at least one voter in $S$. For this proof, let $N_c$ denote the set of voters disapproving $c\in C$.
    
        \textbf{``$\Longleftarrow$''}: This direction is simple, and we can deal with the finite and infinite $C$ case simultaneously.
        \begin{itemize}
            \item First, let $p\notin  \omega-\dvc(u,\tau)$. By definition, there exists $S\subseteq N_c$ such that $\mu(L_S(c)) < \omega \cdot \frac {\lvert S\rvert}n$. Indeed, since $ \mu(L_S(c)) < \omega \cdot \frac {\lvert S\rvert}n$ and $C$ is finite or all $u_i$ are continuous, there exists $\varepsilon>0$ such that $ \mu(L^\varepsilon_S(c)) < \omega \cdot \frac {\lvert S\rvert}n$ for the set $L^\varepsilon_S(c) = \{d\mid  \exists i\in S: u_i(d) \le u_i(c) +\varepsilon \}$, and we use precisely this $\varepsilon$.
            \item Now, consider any allocation $\lambda$ with $p\in W(\lambda)$. Our goal is to show that $\lambda$ admits a disapproval-restricted $\varepsilon$-deviation.
            Since coalition $S$ can spend strictly less but arbitrarily close to $\omega \cdot \frac {\lvert S\rvert}n$, consider any deviation $\lambda'_S$ such that $\lambda'_S(d) = 1$ for all $d\in L^\varepsilon_S(c)$. Then, no matter the specification of $\lambda'_{N\setminus S}$, we have that still $C\setminus W(\lambda') \supseteq L'$. Finally, $\inf_{d\in W(\lambda')} u_i(d) \ge u_i(c) +\varepsilon \ge \inf_{d\in W(\lambda )}u_i(d) +\varepsilon $ for all $i\in S$.
        \end{itemize}

        \textbf{``$\Longrightarrow$''} 
        First, we state some notation and observations about the veto budgeting game: denote for each $S\subseteq N$ by $\lambda_S(c) \coloneqq \sum_{i\in S} \lambda_i(c)$ the sum over all 
        the payoff of each voter depends only on $W(\lambda)$, which in turn only depends on for which candidates $d\in C$ we have $\lambda_N(d) \ge 1$. Since it does not matter which voter spends her budget on which candidate, the only thing that matters for a set $W$ to be the outcome of some allocation profile $\lambda$ whether $1-\mu(W)< 1$.
        The same holds for the payoff of a deviation $W(\lambda_S)$, i.e., whether a coalition $S$ can deviate to a set $W$ depends only on whether $1-\mu(S)< \frac sn$.

        We now start with the easier, finite $C$ case. 
        \begin{itemize}
            \item Let  $c\in \dvc(u,\tau)$. We can let each voter spend arbitrarily much budget as long as it is strictly less than $\frac \omega n$. Since $\omega \ge 1$, this means that $N$ can block all candidates but $1$. Therefore, there exists an allocation $\lambda$ such that $W(\lambda) = \{c\}$. We now prove that this allocation does not allow for disapproval-restricted deviations. Clearly, to improve from this deviation means that $c$ has to become blocked. Therefore, disapproval-restricted deviations can only involve disapprovers of $c$. Let $S\subseteq N_c$ be given with potential deviation $\lambda'_S$. For all voters in $S$ to strictly prefer $W(\lambda'_S)$ to $W(\lambda)$, a necessary condition is that $L_i(c)\subseteq C\setminus W(\lambda'_S)$ for all $i\in S$. Therefore, for all $d\in \bigcup_{i\in S} L_i(c)$, we have that $\lambda'_S(d) \ge 1$. However, since $d$ is contained in $\omega$-\dvc, we have that $\mu (\bigcup_{i\in S} L_i(c)) = \mu( L_S(c) ) \ge \omega \frac {\lvert S\rvert }n$, and therefore it is impossible for $S$ to purchase all of $L_S(c)$. 
            Therefore, there is no coalition $S\subseteq N_c$ for which a beneficial deviation $\lambda'_S$ exists.
        \end{itemize}
        Next, we deal with the continuous case of $C$ being a polytope in a similar fashion:
        \begin{itemize}
            \item Let  $c\in \dvc(u,\tau)$. We have to show that there exists no $\varepsilon$ such that all allocations sparing $c$ allow for disapproval-restricted $\varepsilon$-deviations. Therefore, let any $\varepsilon>0$ be given. We construct a corresponding, stable $\lambda$ as follows:
            we can let each voter spend arbitrarily much budget as long as, combined, it is strictly less than $\frac \omega n$. Since $\omega \ge 1$, this means all candidates but an arbitrarily small positive measure can be blocked. Therefore, for each small enough $\delta >0$, there exists $\lambda$ such that $W(\lambda) = O(c,\delta)$, where $O(c,\delta)$ is the open ball with radius $\delta$ within $C$. 
            Since each $u_i$ is continuous, we can further choose $\delta>0$ such that for all $d\in O(c,\delta)$, we have $u_i(d) \in  (u_i(c) - \frac{\varepsilon}{2}, u_i(c) + \frac{\varepsilon}{2})$ for all $i\in N$.
            We now prove that this allocation does not allow for disapproval-restricted deviations. Clearly, by choice of $O(c,\delta)$ any deviation $\lambda'_S$ with $c\in W(\lambda'_S)$ does not yield an $\varepsilon$-improvement compared to $W(\lambda)$ for any $i\in N$.
            Therefore, disapproval-restricted deviations can only involve disapprovers of $c$. Let $S\subseteq N_c$ be given with potential deviation $\lambda'_S$. For all voters in $S$ to strictly prefer $W(\lambda'_S)$ to $W(\lambda)$, a necessary condition is that $L_i(c)\subseteq C\setminus W(\lambda'_S)$ for all $i\in S$. Therefore, for all $d\in \bigcup_{i\in S} L_i(c)$, we have that $\lambda'_S(d) \ge 1$. However, since $d$ is contained in $\omega$-\dvc, we have that $\mu (\bigcup_{i\in S} L_i(c)) = \mu( L_S(c) ) \ge \omega \frac {\lvert S\rvert }n$, and therefore it is impossible for $S$ to purchase all of $L_S(c)$. 
            Therefore, there is no coalition $S\subseteq N_c$ for which a beneficial deviation $\lambda'_S$ exists.\qedhere
        \end{itemize}

\end{proof}

We note that for real-world explanation systems, an important next step is to understand how we can find distributions in polynomial time that are not just trivially stable because we veto almost every candidate but $c$. We impose two desiderata in the finite candidate set case:
\begin{itemize}
    \item No disapproval-restricted deviation should be possible
    \item The voters should play '`good'' distributions such that no group of voters can swap payments such that all afterwards pay for candidates they rank closer to the bottom than before.
\end{itemize}
By adapting a matching approach of \citet{kizilkaya2025k}, we should be able to satisfy these in the case that $\omega\ge 1$. Understanding this in general, however, would also allow us to extend the characterization to $\omega<1$. Interesting for this may be the work by \citet{kondratev2024veto}, who in a rather involved fashion prove demonstrates that all proportional veto core outcomes can be obtained by some picking order of the voters blocking appropriate fractions of their currently last preferred candidate. Translating his proof from the voting by veto framework using tokens and greatest common divisor arithmetic to the veto budgeting framework may yield the desired result.

\textbf{Further related work regarding the veto budgeting games} 
We remark that veto budgeting games and the corresponding characterization are inspired by three works. Firstly, in the standard, discrete social choice setting, \citet{Moulin1981} introduces the proportional veto core, defining that a candidate $c$ is not stable iff there exists a coalition of voters which has sufficient veto power to force the outcome to be strictly preferred to $c$ for all voters in the coalition. However, this description via veto power is neither tied to a specific action space, nor does it specify in which way the outcomes are determined once voters choose their actions.
In \citet{Moulin1982}, he does consider concrete game forms, but restricts his analysis of strong equilibria to single-valued social choice functions, including resolute solution concepts refining PVC, but not PVC itself.
Secondly, on a conceptual level, \citet{kizilkaya2025k} provide a perfect matching-based characterization of PVC: to determine whether $c$ is contained in PVC, voters and candidates are modeled as weighted nodes on a bipartite graph and an edge exists between voter and candidate iff the voter weakly prefers $c$ to that candidate.
Thirdly, also conceptually, works such as \citet{BudgetingGames} consider budgeting games, in which the voters use their budget to purchase candidates into a committee. The authors consider several kinds of ``stable'' budget distributions to characterize proportionality notions and solution concepts in committee voting.
On a technical level, our contribution is to generalize the existing characterizations on several fronts. Firstly, we specify a rigorous game structure. in which we allow the outcomes to be set-valued instead of resolute, and endow the voters with pessimist-preferences. Secondly, we also prove the characterization for uncountable, compact sets of candidates. Thirdly, our characterization covers all $\omega$-scaled veto powers, while the original frameworks only covers non-empty choice sets and $\omega$-\dvc{} may be empty for all $\omega>1$. 
On a conceptual level, the contribution of this characterization allows us to view $\omega$-\dvc{} as a process in which the voters obtain the same, fair share of ``individual veto budget.'' To evaluate why a candidate was blocked or receives societal consent, one can then reason over how the individual voters spend their budget and whether this choice was rational. This is in line with work of \citet{boehmer2026explanation}, who use price systems to evaluate how proportional committees are.

\subsection{Additional Details on the $\omega$-DVC}

\monotondis*
\begin{proof}
The proof is immediate: Suppose $c\notin AWVC(I)$ so there exists a non-empty set of voters $S$ such that
\[\frac{|S|}{s_c(\mathcal{I})}> 1-\mu(B_S(c))\]. Now suppose that voter $n+1$ with utility and threshold $(u_{n+1},\tau_{n+1})$ joins the election and disapproves $c$, then $\frac{|S|}{s_c(\mathcal{I})}> 1-\mu(B_S(c))$ still holds for coalition $S$, thus $S$ blocks $c$ in the instance $\mathcal{I}\cup\{(u_{n+1},\tau_{n+1})\}$, i.e. also $c\notin AWVC(\mathcal{I}\cup\{(u_{n+1},\tau_{n+1})\})$.
\end{proof}
We conclude this section by noting that \Cref{fig:dvc-example} constitutes a profile in which a previously blocked candidate receives societal consent due to a disapprover joining the election. In the depicted profile, only $a$ is elected before voter $v_5$ joins the election. Afterwards, $a$ becomes blocked, and only $c$ receives societal consent. This is questionable, because voter $v_5$ disapproves $c$ and also ranks $a$ above $c$. Note that this type of monotonicity, while intuitive, is not implied by minority protection (or the other two blocking principles): one may argue that indeed, through a voter joining the election, the previous blockers of $c$ are now a smaller fraction of the electorate and not cohesive enough to remain blocking for $c$. Further, $a$ should be blocked in this profile if the concept satisfies majority threshold.
However, one may also argue that societal consent should be monotonous in disapprovers joining the election, which is appealing in its own right. Then, it is crucial that one chooses candidate-wise scaled ``voter budgets,'' as does AWVC. 
\begin{figure}[h]
    \vspace{0pt}
        \centering
\begin{tikzpicture}[baseline=(current bounding box.north),
    x=0.58cm,
    y=0.62cm,
    voter/.style={font=\small},
    cand/.style={
        circle,
        minimum size=4mm,
        inner sep=0pt,
        font=\footnotesize\itshape,
        text=black,
        draw=black!20,
        line width=0.25pt
    },
    appr/.style={
        rounded corners=0.8mm,
        draw=black!30,
        fill=approve,
        minimum width=5.6mm,
        minimum height=5.6mm
    },
    sep/.style={font=\scriptsize, text=black!70},
    aset/.style={font=\scriptsize},
]

\def\xvoter{0}
\def\xstart{1.45}
\def\xgap{1.55}
\def\ysep{1.1}

\newcommand{\placecand}[4]{%
    \node[cand,#4] (#1) at (#2,#3) {};
}

\newcommand{\candAstyle}[3]{\node[cand,fill=candA] (#1) at (#2,#3) {$a$};}
\newcommand{\candCstyle}[3]{\node[cand,fill=candC] (#1) at (#2,#3) {$c$};}
\newcommand{\candXstyle}[3]{\node[cand,fill=candX] (#1) at (#2,#3) {$x$};}
\newcommand{\candYstyle}[3]{\node[cand,fill=candY] (#1) at (#2,#3) {$y$};}
\newcommand{\candZstyle}[3]{\node[cand,fill=candZ] (#1) at (#2,#3) {$z$};}

\newcommand{\drawcandidate}[4]{%
    \ifx#4a
      \candAstyle{#1}{#2}{#3}%
    \else\ifx#4c
      \candCstyle{#1}{#2}{#3}%
    \else\ifx#4x
      \candXstyle{#1}{#2}{#3}%
    \else\ifx#4y
      \candYstyle{#1}{#2}{#3}%
    \else\ifx#4z
      \candZstyle{#1}{#2}{#3}%
    \fi\fi\fi\fi\fi
}

\newcommand{\rowbase}[8]{%
    \node[voter] at (\xvoter,#2) {$v_{#1}$};

    \drawcandidate{r#1a}{\xstart}{#2}{#3}
    \node[sep] at (\xstart+0.5*\xgap,#2) {$\succ$};

    \drawcandidate{r#1b}{\xstart+1*\xgap}{#2}{#4}
    \node[sep] at (\xstart+1.5*\xgap,#2) {$\succ$};

    \drawcandidate{r#1c}{\xstart+2*\xgap}{#2}{#5}
    \node[sep] at (\xstart+2.5*\xgap,#2) {$\succ$};

    \drawcandidate{r#1d}{\xstart+3*\xgap}{#2}{#6}
    \node[sep] at (\xstart+3.5*\xgap,#2) {$\succ$};

    \drawcandidate{r#1e}{\xstart+4*\xgap}{#2}{#7}

}


\rowbase{1}{0}{c}{a}{y}{x}{z}{\{c\}}
\rowbase{2}{-\ysep}{y}{z}{x}{c}{a}{C}
\rowbase{3}{-2*\ysep}{x}{z}{a}{c}{y}{\varnothing}
\rowbase{4}{-3*\ysep}{x}{y}{a}{z}{c}{C}
\rowbase{5}{-4*\ysep}{y}{a}{c}{x}{z}{\{y\}}

\begin{scope}[on background layer]
    \node[appr, fit=(r1a)] {};
    \foreach \n in {r2a,r2b,r2c,r2d,r2e}
        \node[appr, fit=(\n)] {};
    \foreach \n in {r4a,r4b,r4c,r4d,r4e}
        \node[appr, fit=(\n)] {};
    \node[appr, fit=(r5a)] {};
\end{scope}

\rowbase{1}{0}{c}{a}{y}{x}{z}{\{c\}}
\rowbase{2}{-\ysep}{y}{z}{x}{c}{a}{C}
\rowbase{3}{-2*\ysep}{x}{z}{a}{c}{y}{\varnothing}
\rowbase{4}{-3*\ysep}{x}{y}{a}{z}{c}{C}
\rowbase{5}{-4*\ysep}{y}{a}{c}{x}{z}{\{y\}}


\end{tikzpicture}

\caption{An example profile on which $2$-DVC fails support monotonicity for disapproving voters.}
        \label{fig:dvc-example}
\end{figure}%

\subsection{Proofs from Subsection \Cref{sec:quant}}
\omegadvc*
\begin{proof}\sonja{Edit this more.}If $0\leq \omega\leq 1$, then $\alpha =0$, and so $\alpha=\max\left\{0,1-\frac{1}{\omega}\right\}=0.$ \footnote{Here we take $\frac{1}{0}=\infty$.}.
So now, assume $\omega > 1$, and note that then $1-\frac{1}{\omega}> 0$.
If coalition $S$ is of size $s$ with \(s>\frac n\omega\), then \(1-\omega\frac sn<0\). Since \(\mu(B_S(c))\ge 0\), we have $ \mu(B_S(c))>1-\omega\frac sn.$ Therefore by definition of $\omega-\dvc$, if $S$ disapproved $c$, $S$ blocks $c$. So $\alpha \geq 1- \frac{1}{\omega}$.

To see that this bound is tight, consider an instance in which all voters rank \(c\) first and dissappovers $D$ of $c$ are of size $|D|\leq \frac{n}{\omega}$.
Then \(B_S(c)=\emptyset\) for every coalition \(S\). Moreover, every disapproving coalition \(S\subseteq D\) satisfies \(|S|\leq \frac n\omega\) which implies $\mu(B_S(c))=0 \not> 1-\omega\frac{|S|}{n}$, so no disapproving coalition blocks.
The number of approvers of $c$ is least  $n-\frac{n}{\omega}$ and thus $\alpha =1-\frac{1}{\omega}$.
\end{proof}

\section{Additional Details on Minority Protection Measures}\label{app:RelWorkBudgetingGames}


\subsection{Critical Endowment}
We restate the definition of critical omega introduced in the experimental section.
\begin{definition}
    we say that $c$ has \emph{critical endowment} $\omega^*(c) {\ge 0}$ if \[\omega^*(c) = \max \{\omega \ge 0 \mid \text{ $c\in$ $\omega$-\dvc}\}.\]
    We say that a solution concept $f$ \emph{guarantees critical endowment} $\omega$ on instance $\mathcal I$ if  $\omega^*(c) \ge \omega$ for all $c\in f(\mathcal I)$.
\end{definition}


\measurecomp*
\begin{proof}

Fix a candidate \(c\), and let
\[
    D(c)=\{i\in N\mid u_i(c)<\tau_i\}
\]
be the set of voters who disapprove of \(c\). For each voter \(i\), write
\(B_i(c)\) for the set of candidates that voter \(i\) strictly prefers to
\(c\). For \(T\subseteq D(c)\), define
\[
    f_\omega(T)
    =
    \frac{\left|\bigcup_{i\in T} B_i(c)^c\right|}{m}
    -
    \frac{\omega |T|}{n}.
\]
Then \(c\) is blocked under \(\omega\)-\dvc{} if and only if
\(f_\omega(T)<0\) for some non-empty \(T\subseteq D(c)\). Since
\(f_\omega(\emptyset)=0\), the empty coalition cannot witness blocking under
the strict inequality. Thus it suffices to minimize \(f_\omega(T)\) over all
\(T\subseteq D(c)\) and test whether the minimum is negative.

We first identify the values of \(\omega\) that can be critical. For any
non-empty \(T\), the condition \(f_\omega(T)<0\) is equivalent to
\[
    \frac{n\left|\bigcup_{i\in T}B_i(c)^c\right|}{m|T|}
    < \omega .
\]
Since \(\left|\bigcup_{i\in T}B_i(c)^c\right|\in\{0,\dots,m\}\) and
\(|T|\in\{1,\dots,n\}\), every possible breakpoint is of the form
\[
    \frac{ni}{mj},
    \qquad
    i\in\{0,\dots,m\},\quad j\in\{1,\dots,n\}.
\]
Thus there are at most \((m+1)n\) candidate values. In particular, for every
value tested by the algorithm, \(\omega=\frac{ni}{mj}\) has numerator and
denominator polynomially bounded in \(m\) and \(n\).

It remains to give a decision procedure for a fixed tested value of
\(\omega\). Construct a directed graph with source \(s\), sink \(t\), one node
for each voter in \(D(c)\), and one node for each candidate. Add an edge
\(s\to i\) of capacity \(m\omega\) for each \(i\in D(c)\), an edge \(d\to t\)
of capacity \(n\) for each candidate \(d\), and an edge \(i\to d\) of capacity
\(M\) whenever \(d\in B_i(c)^c\), where \(M\) is larger than any cut value
using only the finite-capacity source-voter and candidate-sink edges.

Consider a finite cut and let \(T\subseteq D(c)\) be the voters on the source
side. Because the edges \(i\to d\) have capacity \(M\), every candidate in
\(\bigcup_{i\in T}B_i(c)^c\) must also lie on the source side of any minimum
cut. Hence the cut value corresponding to \(T\) is
\[
    m\omega\bigl(|D(c)|-|T|\bigr)
    +
    n\left|\bigcup_{i\in T}B_i(c)^c\right|.
\]
The term \(m\omega |D(c)|\) is constant, so minimizing this cut value is
equivalent to minimizing
\[
    n\left|\bigcup_{i\in T}B_i(c)^c\right|-m\omega |T|
    =
    mn f_\omega(T).
\]
In particular, the cut corresponding to \(T=\emptyset\) has value
\(m\omega |D(c)|\), and \(c\) is blocked if and only if the minimum cut value is
strictly smaller than this value.

For a tested value \(\omega=\frac{ni}{mj}\), multiplying all capacities by
\(mj\) makes the source-voter capacities \(m\omega\) integral:
$mj\cdot m\omega = mni.$
The candidate-sink capacities become \(mjn\), and the enforcement capacities
become \(mjM\). Since \(i\le m\), \(j\le n\), and \(M\) is chosen
polynomially bounded, all capacities after scaling are polynomially bounded
integers. The graph is directed and has \(O(n+m)\) vertices and \(O(nm)\)
edges, so the algorithm of \citet{ChenKyngLiuPengProbstGutenbergSachdeva2022}
computes the corresponding minimum cut in \((mn)^{1+o(1)}\) time.
\end{proof}

 \subsection{Critical $\epsilon$ and Generalization}
 Recently, \citet{chooi2026findingcommongroundsea} proposed the following measure in the setting without consent.
\begin{definition}[\citet{chooi2026findingcommongroundsea}]
    Let $\mathcal I$ be any instance, $c\in C\setminus \pvc(\mathcal I)$.
    Recall that $B_S(c)= \{d\in C\mid \forall i\in S: \quad d\succ_i c\}$.
    We say that $c$ has \emph{critical epsilon} $\varepsilon$ for $\varepsilon\geq 0$, if this is the smallest non-negative value such that $\mu(B_S(c)) \le 1- \frac{\lvert  S\rvert}{n} + \varepsilon$ for all $S\subseteq N$.
\end{definition}

From the perspective of the normalized veto budgeting game, for a fixed coalition $S$, the corresponding minimal epsilon describes the amount of aggregated budget that the coalition has remaining after blocking $c$.\footnote{To see this, let $W_S(c)= C\setminus B_S(c) = \{d\in C\mid \exists i\in S\colon \, c\succsim_i d\}$.
Then, the inequality above is equivalent to $\frac{\lvert  S\rvert}{n} \le 1- \mu(B_S(c)) + \varepsilon = \mu(W_S(c)) + \varepsilon$.} The critical epsilon then describes the \emph{maximum amount of budget any blocking coalition has remaining after blocking $c$}. 

While this notion already is not the most natural in our setting, it is only defined for blocked candidates. However, it may also be important to quantify the degree of minority protection that candidates contained in $\dvc$ provide: two choice sets $f(\mathcal I),g(\mathcal I)$ may be both contained in $\dvc$, but still guarantee very different levels of minority protection.
A simple adaptation to our setting is the following 

\begin{definition}
    Let $\mathcal I$ be any instance, $c\in C$.
    We say that $c$ has \emph{local excess budget} $\varepsilon^*(c)$, if it is the smallest $\varepsilon\in \mathbb R$ such that $\mu(B_S(c)) \le 1- \frac{\lvert  S\rvert}{n} + \varepsilon$ for all non-empty $S\subseteq N$ with $c\notin A_i$ for all $i\in S$.
\end{definition}
Note that for $c\in \dvc(\mathcal I)$, the local excess budget is negative and indicates the amount of total budget that the coalition closest to vetoing $c$ is lacking. 


\section{Computation}\label{app:finite_C_Computation}
If the candidate set if finite, all of the solution concepts in our work can be computed in polynomial-time. 

\paragraph{Disapproval Veto Core.}
To test whether a candidate \(c\) is blocked under \(\dvc\), we can use a simple adaptation of the approach by \cite{ianovski2023computingproportionalvetocore}: We construct the same flow network for \(c\), but include outgoing voter--candidate edges only from voters \(i\) with \(c\notin A_i\), and only to candidates \(d\) such that \(c\succsim_i d\).  
The resulting max-flow/min-cut computation determines whether there exists a coalition \(S\) of disapprovers such that
$\mu(B_S(c))>1-\frac{|S|}{n}$.
Running this test for every \(c\in C\) and returning precisely the candidates for which no blocking coalition is found computes \(\dvc(\mathcal I)\) in polynomial time.

\paragraph{Approval-Weighted Veto Core.}
The computation of AWVC uses the same flow-based blocking test candidate by candidate as the Disapproval Veto Core, with two modifications.  
First, candidates with \(\score{c}{\mathcal I}=0\) are blocked immediately by convention; second, for candidates with positive approval score, the blocking threshold is changed from \(|S|/n\) to $\frac{|S|}{\score{c}{\mathcal I}}.$
Equivalently, in the Ianovski--Kondratev network, the candidate-specific budget normalization is adjusted from the electorate size \(n\) to the approval score of \(c\) (i.e. it is equivalent to locally assuming that the electorate has size approval score of \(c\) and then applying the approach of \cite{ianovski2023computingproportionalvetocore}, while voter--candidate edges are again included only for voters who disapprove \(c\) and only toward candidates they rank weakly below \(c\).

For more details on the flow construction, we refer to the proof of \Cref{thm:measure}, in which we implicitly show how to compute the DVC and AWVC more efficiently than \cite{ianovski2023computingproportionalvetocore}.
\paragraph{Disapproval Smith Set.}
To compute the disapproval Smith set, construct the directed graph \(G=(C,E)\)
whose vertices are candidates and where \((c,d)\in E\) iff
\(n(c,d)\ge n(d,c)\). Compute the strongly connected components of \(G\)
using a standard depth-first-search algorithm, such as Tarjan's algorithm
\citep{Tarjan1972DFS}, and form the condensation graph by contracting each
strongly connected component to a single node. Since \(R\) is complete, between
any two distinct strongly connected components there is an edge in exactly one
direction; hence the condensation graph is an acyclic tournament and has a
unique source component, which reaches every other component. The disapproval
Smith set is exactly the set of candidates contained in this unique source
component.

\section{Modifying Threshold Based Rules}\label{app:threshold_based_rule_fix}

We now state how we fix the rules once approval ballots are not defined via thresholds any more. Assume utilities normalized to $[0,1]$ for ease of threshold choice.

Generally, if we want to translate arbitrary approval sets into thresholds, we can proceed as follows:
If voters submit $(u^i, A_i)$ such that $A_i$ is a general approval set instead of an upper contour set defined via some $\tau_i$. 
we first define $\tau_i$ as the minimal achievable 
$\tau_i = \inf \{ u_i(c) \mid c\in A_i: \quad \forall d\in C, u_i(d) \ge u_i(c) \Longrightarrow d\in A_i \} $. Here, if this set is empty and the infimum therefore not well-defined, we set $\tau_i = 1$ and mark this voter, as well as all her top-ranked and not approved candidates with a ``cheat'' mark. 
With these thresholds given, we update the utility of all approved candidates $c\in A_i$ with $u_i(c) < \tau_i$ by setting $u^*_i(c) \gets \tau_i$. For all other candidates we have $u^*_i(c)\gets u_i(c)$.
With these modified utilities and created thresholds we are now able to run our rules.

\begin{definition}[Rel Util]
    This rule stays almost exactly the same, i.e.,
    $\ru^*(u,A) = \{c\in C\mid \sum_{i\in N} u^*_i(c) \ge \sum_{i\in N} \tau_i\}$, unless
    a voter is marked for cheats. In that case, we have $\ru^*(u,A) = \{c\in C\mid \sum_{i\in N} u^*_i(c) > \sum_{i\in N} \tau_i\}$.
\end{definition}

\begin{definition}[Minimize Margin of Disapproval]
    Given $(u,A)$ as input, the distance that voter $i$ gives $c$ is defined as $\max(0, \tau_i - u^*_i(c))$.
    We once more calculate the set  $T =\arg \min_{c\in C} \sum_{i\in  N} \text{distance$(i,c)$}$.
    Among $T$, it chooses the candidates with the minimal number of cheats.
\end{definition}

\section{Support Monotonicity for Disapproving Voters}\label{App:SuppMonDisapprovingVoters}
We empirically test how often $2$-\dvc{} may violate support monotonicity for disapproving voters counterfactually. To be precise, we ask the following question: given an instance $\mathcal I$, are there candidates $c\in 2\text{-}\dvc(\mathcal I)$ such that there exists some set of disapprovers $S$ of $c$, such that removing them leads to $c$ becoming blocked? This can be (i) viewed as the group-version of our monotonicity property, or (ii) we can view the removal of the group one by one. For one of these voters, $c$ must go from being chosen to blocked, therefore in that step a single-voter violation of the axiom occurred.
Since it is computationally expensive to recalculate the rule for each $S\subseteq N$, we chose a heuristic approach: given some $c\in 2$-$\dvc{}$, we can slightly modify the max-flow algorithm computing the rule to obtain the set of voters $S$ who are ``closest'' to blocking $c$. We then simply check whether the abstention of all other disapprovers makes $S$ be able to block $c$, as $S$ then benefits from the budget the leaving voters do not spend any more.

The aggregated occurrences for our dataset are as follows.

\begin{center}
\begin{tikzpicture}
\node[draw, rounded corners, inner sep=6pt] {
\begin{tabular}{lrrr}
\toprule
Source &  Elections & With axiom violation & Rate \\
\midrule
kidney &  60 & 56 & 0.9333 \\
moral\_machine &  60 & 9 & 0.1500 \\
openai\_coval &  319 & 31 & 0.0972 \\
politics &  9 & 0 & 0.0000 \\
synthetic &  600 & 176 & 0.2933 \\
\bottomrule
\end{tabular}
};
\end{tikzpicture}
\end{center}

This indicates that violations of support monotonicity for disapproving voters for $2$-\dvc{} may be frequently observed in practice. While less frequent, some violations are already observable when we brute force single-voter abstentions.

\section{Experiments}\label{app:exp}

On a 12-core machine, running all experiments takes around six hours; no GPUs or other specialized computing resources are needed.

\subsection{Dataset 1: The Moral Machine} \label{app:exp:moralmachine}
In the Moral Machine experiment, users interact with a web application, making binary choices (moral dilemmas) involving autonomous vehicles. We refer the reader to \citet{awad2018moral} for a description of the Moral Machine dataset. Following \citet{kim2018computational}, we represent each 
country $i$ as a voter equipped with a learned 24-dimensional weight vector $u^{i}\in \mathbb{R}^{24}$ over moral features. The per-country vectors 
are obtained by fitting a Bradley--Terry model
(maximum likelihood with analytical gradient) to each country's pairwise response data, grouping by \texttt{UserCountry3}. Countries with fewer than $1000$ complete scenarios or whose learned model achieves below $50\%$ train accuracy are excluded, yielding $n=137$ voters. As candidates, we sample $m=100$ policy vectors $x_c$ from the uniform $(-1,1)^{24}$ distribution over the feature space. In order to derive utilities for each policy, we adopt two options:

\emph{Option $1$: } We consider the utility of voter $i$ for policy/candidate $c$ to be $u_i(c)=u^i \cdot x_c$. To generate approval profiles, we consider various threshold classes, where a voter approves the top $\alpha$-fraction of candidates by utility for $\alpha=0.2,0.3,0.4$.

\emph{Option $2$: } We sample $K=10,000$ random moral dilemma pairs $(f_a,f_b)$, where $f_a,f_b\in\mathbb{R}^{24}$ are drawn independently 
from $\mathrm{Unif}(-1,1)^{24}$ and represent the moral-feature vectors of two hypothetical outcomes in a dilemma (e.g., two groups of pedestrians 
an autonomous vehicle must choose between). For each pair, a country/voter $i$ chooses the outcome in line with its moral preferences via $\mathrm{sign}(u^i\cdot (f_a-f_b))$; a policy $p$ picks as $\text{sign}(u^p\cdot (f_a-f_b))$. We say a voter $i$ has a \textit{strong opinion} on a pair $(f_a,f_b)$ if and only if $|u^{i}(f_a-f_b)|$ exceeds its own $q-$quantile among this value for all $K$ pairs, where $q\in \{0.25, 0.50, 0.75\}$. A country then approves a candidate policy $p$ if and only if it agrees on more than half of the dilemmas the country has a strong opinion on.

Each configuration produces a pair of .csv files encoding per-voter utilities and binary approvals over candidates. Sorting each voter's utility values provides an ordinal ranking over policies. Combined with approval sets, the rankings form a profile on which all our solution concepts are analyzed. In total, this yields six experimental variants: three for \emph{Option~1} ($\alpha \in \{0.2, 0.3, 0.4\}$) and three for \emph{Option~2} ($q \in \{0.25, 0.50, 0.75\}$), each containing $10$ elections with $n=137$ voters and $m=100$ candidates ($60$ elections in total).

\subsection{Dataset 2: Politics}
Our second class of datasets are nine political elections drawn from PrefLib \citep{MaWa13a} and the Voter Autrement (VA) project \citep{VA2007,BAUJARD2014131,VA2022}. We use the Sciences Po dataset (PrefLib 00029)\cite{LaslierVanDerStraeten2004} and the elections from four survey rounds: VA-2007 (one in-situ election with approval and score ballots), VA-2012 (three cities: Louvigny, Saint-Etienne, Strasbourg, each with approval and cardinal scores), VA 2017 online (two variants: continuous 0-100 evaluations and STV 1-11 rankings) and VA-2022 online (two variants: $\{-1,0,1,2\}$ score ballots and Borda-style top-$4$-of-$12$ rankings). For elections where only ordinal rankings are available (VA~2017 STV, VA~2022 Borda), we synthesize cardinal utilities using Borda Scores, with ties receiving the average of the positions they share. Importantly, the approval sets provided in the dataset did not satisfy upward closure (i.e., there were many instances where a voter would rank a disapproved candidate $j$ higher than an approved candidate $i$). Further, many voters only ranked their most preferred candidates. In this case, all non-ranked candidates are treated as one indifference class. Note that for an election concerning minority protection, we recommend also querying the least-preferred candidates of each voter. Therefore, the results from this dataset may be heavily-influenced by the preference-retrieval procedure, in addition to voter preferences.

\subsection{Dataset 3: The CoVal Dataset}
The CoVal dataset \citep{coval2026} collects human assessments of AI-generated responses to ethical prompts. Each assessment comprises two rankings (their personal ranking, and a world ranking of what they think would be best for the world overall) of four candidate responses $(A,B,C,D)$, and allows annotators to optionally flag candidates as \texttt{unacceptable}. We restrict to voters who provided a personal ranking. Candidates flagged as \texttt{unacceptable} by a voter are disapproved, and all others are approved. We discard prompts with no disapproving voters, and we synthesize utilities via Borda scores from the rankings. This yields $319$ election instances per variants, each with $4$ candidates and variable numbers of voters.

\subsection{Dataset 4: Kidney Exchange}
We construct $60$ approval-based elections from the kidney-allocation survey data of \citet{keswani2026moral}. The dataset comprises $301$ human-reported importance weights over $8$-features relevant to kidney-allocation decision-making. We simulate $100$ potential kidney recipients per election as feature vectors sampled independently and identically distributed from Uniform$(-1,1)$ in the $8-$dimensional space. As with the Moral Machine dataset, we assume that a voter's utility for a candidate is the scalar product of their weight vector and the candidate's feature vector. A voter approves a candidate if and only if their utility for the candidate is positive.

\subsection{Dataset 5: Synthetic Data} \label{synthetic_data_description}
For the purposes of analyzing our solution concepts against suitable domain restrictions, we generate elections under six standard models from the literature. Under the \textit{Impartial Culture} model, each of $n=50$ voters draws from a uniformly random ranking of $m=5$ candidates and approves a random number of their top-ranked candidates (approval counts drawn uniformly from $\{0,1,\ldots, m\}$.) Under the \textit{Mallows} model, rankings are drawn from a distribution centered on a fixed reference ranking with Kendall-$\tau$ distance and dispersion parameter $\phi\in \{0.2,0.5,0.8\}$, where lower $\phi$ produces stronger consensus. Approval sets are generated randomly as before. Lastly, under the spatial model, candidates and voters are placed uniformly on the line $[0,1]$ and the unit cube $[0,1]^2$. We assume utilities to be the negative Euclidean distance between the voter and the candidate. A voter approves every candidate within a fixed radius. For each configuration, we generate $100$ independent elections.

\begin{table}[H]
    \centering
    \resizebox{\textwidth}{!}{%
        \input{Current_Work/Neurips/plots/diagram6_unanimity.tex}    }
    \caption{%
    \emph{Decisively Better Candidates}. Column headers report
    \(N\), the total number of elections in the dataset, and \(N_u\),
    the number of those elections that contain at least one unanimously
    approved candidate. For each cell, the first column (\#) gives the
    number of those \(N_u\) elections in which the rule selects at
    least one non-unanimously-approved candidate, witnessing a
    violation of exclusive unanimity. The second column
    (tot.)\ gives the total count of non-unanimously-approved
    candidates the rule selected across those elections. 
    }
    \label{tab:unanimity}
\end{table}

\begin{table}[H]
    \centering
    \resizebox{\textwidth}{!}{%
        \input{Current_Work/Neurips/plots/diagram2_submajority.tex}%
    }
    \caption{%
    \emph{Sufficient support}. The first column (\#) gives the number of
    elections in which the concepts selects a candidate \(c\) with \(,\score{c}{\mathcal I}<\frac{n}{2}\). The second column
    ($\widetilde{\min\,s_c/n}$) gives the median over the minimal approval fraction $\min_{c\in f(\mathcal I)} \score{c}{\mathcal I}/n$ of a selected candidate. Instances in which
    the concept returns the empty set contribute neither to the count nor
    to the median.%
    }
    \label{tab:submajority}
\end{table}

\section{Discussion on Experiments}

\subsection{Effect of Candidate Pool Size on the Behavior of Solution Concepts}
One set of experiments we run on the synthetically-generated data is to sweep over the number of candidates $m$, measuring how each of the solution concepts behave. For each value of $m\in \{5,10,15,\ldots,50\}$, we generate $100$ synthetic elections (split across the six standard models described in section \ref{synthetic_data_description}) and observe changes in selectiveness, sub-majority selections, minority protection and unanimity violations. 

\paragraph{Selectiveness: } Observing Figure \ref{m_sweep:selectivity}, it is clear that the DSS$\circ$AWVC becomes increasingly more selective as $m$ grows, dropping from $\approx 0.15$ to $0.20$ at $m=5$ to $\approx 0.035$ by $m=50$. The DVC by contrast remains consistent in the fraction it approves across all $m$, staying around $0.55$ to $0.60$. The AWVC stays stable at around $0.27$ to $0.30$. On the other hand, ARU stands out as the only solution concept that becomes \textit{less} selective with growing $m$, rising from $0.19$ at $m=5$ to $\approx 0.37$ to $0.40$ by $m=40$. 

\begin{figure}[H]
    \centering
    \resizebox{\textwidth}{!}{%
        \input{Current_Work/Neurips/plots/m_sweep/m_sweep_selectivity}%
    }
    \caption{%
    \emph{Selectivity across $m$}: On the vertical axis, the average fraction of candidates selected ($f(\mathcal{I})/m$) is plotted as the number of candidates $m$ varies from $5$ to $50$, with $n=50$ voters and $100$ elections per value of $m$. This data is averaged across the results from all six models.
    }
    \label{m_sweep:selectivity}
\end{figure}

\paragraph{Sufficient Support: } With respect to sufficient support, as observed in Figure \ref{m_sweep:submajorityviolations}, the DVC's rate of selecting sub-majority candidates rises with growing $m$. 

\begin{figure}[H]
    \centering
    \resizebox{\textwidth}{!}{%
        \input{Current_Work/Neurips/plots/m_sweep/m_sweep_submajorityviolations}%
    }
    \caption{\emph{Sufficient Support across $m$: }Left panel: fraction of elections in which the solution concepts select at least one sub-majority candidate, i.e., candidate $c$ with $s_c(\mathcal{I}) < n/2$. Right panel: median across elections of the minimum approval fraction among the selected candidates, $\min_{c \in f(\mathcal{I})} s_c(\mathcal{I})/n$. Both panels sweep $m$ from $5$ to $50$ with $n=50$ voters and $100$ elections per value of $m$. This data is averaged across the results from all six models.}
    \label{m_sweep:submajorityviolations}
\end{figure}

\paragraph{Minority Protection: }The minority protection trends (Figure \ref{m_sweep:minority_protection}) show that the DVC and AWVC never produce $\omega^*<1$ at any $m$, but their median $\omega^*$s behave differently with growing $m$: DVC's median decreases from $1.25$ to $1.00$, whilst the AWVC remains stable around $1.9$ to $2.0$. The DSS$\circ$AWVC displays the strongest increase, with it median $\omega^*$ rising from $2.6$ at $m=5$ to $7.1$ at $m=50$, showing that selected candidates become harder to block as the candidate pool grows. The DSS displays a similar upward trend. 
\begin{figure}[H]
    \centering
    \resizebox{\textwidth}{!}{%
        \input{Current_Work/Neurips/plots/m_sweep/m_sweep_minority_protection}%
    }
    \caption{\emph{Minority Protection Across $m$:} Left panel: fraction of elections in which the solution concepts select at least one candidate with $\omega^*(c)<1$. Right panel: median across elections of $\min_{c\in f(\mathcal{I})}\omega^*(c)$. Both panels sweep $m$ from $5$ to $50$ with $n=50$ voters and $100$ elections per value of $m$, pooled across the six synthetic preference models}
    \label{m_sweep:minority_protection}
\end{figure}

\paragraph{Decisively Better Candidate: } On exclusive unanimity, Figure \ref{m_sweep:unanimity} highlights how the solution concepts are partitioned into two groups: DVC, AWVC, Majority Threshold and ARU violate unanimity in $100\%$ of elections, i.e., select at least one candidate that is not unanimously approved in $100\%$ of elections where a unanimously approved candidate exists. The total count of non-unanimous selections grows with $m$, peaking at $m=40$ at $\approx 170$ to $190$ for DVC and Majority Threshold. 
\begin{figure}[H]
    \centering
    \resizebox{\textwidth}{!}{%
        \input{Current_Work/Neurips/plots/m_sweep/m_sweep_unanimity}%
    }
    \caption{\emph{Decisively Better Candidates Across $m$: }Left panel: fraction of elections (among those containing at least one unanimously approved candidate) in which the rule selects at least one non-unanimously-approved candidate. Right panel: total count of non-unanimously-approved candidates selected across those elections. Both panels sweep $m$ from $5$ to $50$ with $n=50$ voters and $100$ elections per value of $m$, pooled across six synthetic preference models. The left panel shows no data points for small $m$ because no elections in those configurations contain a unanimously approved candidate.}
    \label{m_sweep:unanimity}
\end{figure}

\subsection{Rank Consistency of $\omega^*$ and $\epsilon^*$}
The experiment below tests whether the two independent measures of candidate protection, i.e., critical endowment $\omega^*(c)$ and critical excess budget $\epsilon^*(c)$ produce consistent rankings across candidates within each election. For every election, we pair $\omega^*(c)$ and $\epsilon^*(c)$ for each candidate, discarding those where we were unable to compute $\epsilon^*$ (cf.\ \Cref{app:RelWorkBudgetingGames}). We then compute Spearman $\rho$ and Kendall $\tau$ over the remaining candidates. Since higher $\omega^*$ signals safety and higher $\epsilon^*$ signals vulnerability, $\rho \approx -1$ when the two measures agree. 

\begin{figure}[H]
    \centering
    \resizebox{\textwidth}{!}{%
        \input{Current_Work/Neurips/plots/correlation_epsilon_omega/clamped_histogram_simple}%
    }
    \caption{Distribution of the number of candidates per election whose critical excess budget $\epsilon^*(c)$ could not be computed. For small-$m$ datasets (Synthetic, CoVal, Political), most elections have few or no clamped candidates. For large-$m$ datasets (Kidney with $m=100$, Moral Machine with $m=100$), the majority of candidates per election are clamped, reflecting that most candidates are deeply safe and far from the blocking threshold.}
    \label{correlation_epsilon_omega:simple}
\end{figure}

Figure~\ref{correlation_epsilon_omega:scatter_simple} plots $\omega^*(c)$ against $\epsilon^*(c)$ for every (election, candidate) pair where both values are well-defined, pooled across all five datasets. The two measures exhibit a clear monotone negative relationship: candidates with high $\omega^*$ (robust) consistently have low or negative $\epsilon^*$ (safe), and vice versa. The dashed line at $\epsilon^*=0$ separates blocked candidates (above) from safe ones (below).

Quantitatively, the per-election Spearman $\rho$ between $\omega^*$ and $\epsilon^*$ is close to $-1$ across all datasets. The Moral Machine elections show the tightest agreement, with mean $\rho$ between $-0.997$ and $-0.998$ across all six variants. The Synthetic elections range from $\rho=-0.953$ (Mallows $\phi=0.5$) to $\rho=-0.996$ (Mallows $\phi=0.2$), and the Kidney dataset achieves $\rho=-0.924$. CoVal, with only $m=4$ candidates per election, yields mean $\rho=-0.971$ (median $-1.0$). The Political elections are more variable---mean $\rho$ ranges from $-0.653$ (Virginia 2012) to $-1.0$ (PrefLib, Virginia 2007, Virginia 2022)---but these groups contain at most 3 elections each, so individual outliers have outsized influence. Overall, the strong negative correlation empirically confirm that $\omega^*$ and $\epsilon^*$ rank candidates by safety in the same order.

\begin{figure}[H]
    \centering
    \resizebox{\textwidth}{!}{%
        \input{Current_Work/Neurips/plots/correlation_epsilon_omega/pooled_scatter_simple.tex}%
    }
    \caption{Scatter plot of critical endowment $\omega^*(c)$ against critical excess budget $\epsilon^*(c)$ for DVC, pooled across all (election, candidate) pairs where both values are well-defined. The dashed line at $\epsilon^*=0$ separates blocked candidates (above) from safe ones (below). The original dataset contains $8{,}911$ points; the plot displays a uniform random subsample of $2{,}970$ points (one-third) for compilation efficiency. The negative monotone relationship confirms that the two measures rank candidates by safety in the same order.}
    \label{correlation_epsilon_omega:scatter_simple}
\end{figure}

\subsection{Solution Concepts on the CoVal Dataset}
We highlight a few observations across the experiments on the CoVal dataset.
\paragraph{Selectivity across CoVal: } First, we observe the trends from Figure \ref{fig:min_approval_coval}, which report the distribution of $\min_{c \in f(\mathcal{I})} s_c(\mathcal{I})/n$ across $N=319$ elections. The DVC exhibits the most spread among the solution concepts, with several elections where the least-approved selected candidate falls below the majority threshold (represented by the dashed line at $0.5$). The AWVC, on the other hand, concentrates its mass higher, showing how approval-weighting penalizes low-approval candidates heavily. DSS $\circ$ AWVC and DSS both push the distribution higher, with the DSS achieving its median minimum approval fraction near $1.0$. This indicates how in most elections, every DSS-selected candidate has nearly-unanimous approval. On the other hand, ARU and MMOD behave similarly, with medians around $0.8$ and occasional outliers below $0.5$, while the Majority Threshold rule never selects below $0.5$ by construction and serves as a baseline.

\paragraph{Minimum Critical Endowments, CoVal: } Next, we observe Figure \ref{fig:min_omega_coval}, which reports the distribution of minimum critical endowments of selected candidates per solution concept. Majority Threshold and the ARU occasionally dip below the $\omega^*=1$ threshold, indicating that these solution concepts can select candidates that are blockable under the standard veto budget. Compared to DVC and majority Threshold, the AWVC shifts the distribution upward, with $43$ elections where every selected candidate is unanimously approved. DSS$\circ$AWVC, DSS, and MMOD push further: all three have medians above $4$ and place the majority of their elections ($175$ out of $319$) in the $\infty$ band, indicating that on most CoVal instances these solution concepts select only unanimously approved candidates. ARU sits between these two groups, with a median around $2.5$ and 42 elections at $\infty$, comparable to AWVC. The overall pattern suggests the following observation: in terms of providing minority protection, DVC $<$ AWVC $\approx$ ARU $<$ DSS~$\circ$~AWVC $\approx$ DSS $\approx$
MMOD on this dataset.

\begin{figure}[H]
    \centering
    \includegraphics[width=\linewidth]{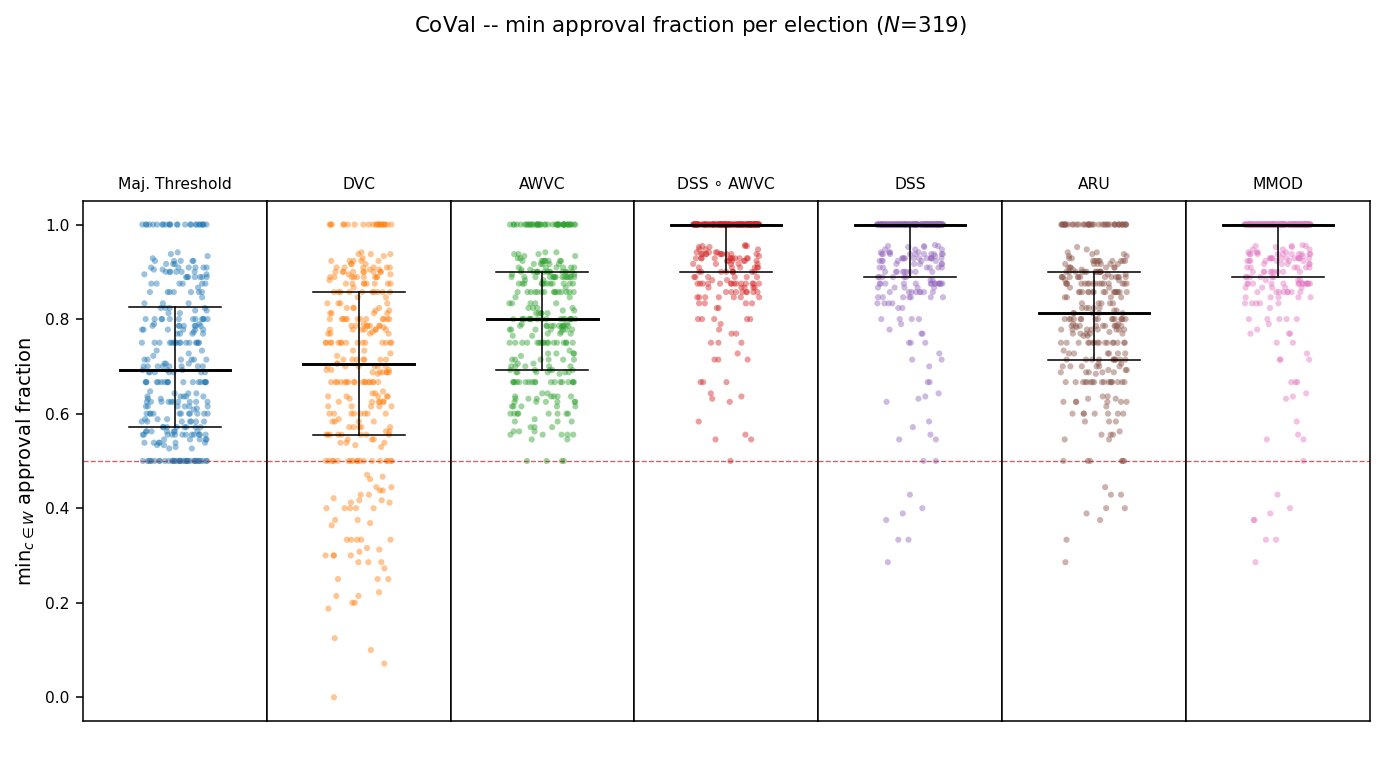}
    \caption{%
    Distribution of the minimum approval fraction of a selected candidate on the CoVal dataset, per
    solution concept. Each dot represents one election: its vertical position is
    \(\min_{c\in f(\mathcal I)}\score{c}{\mathcal I}/n\), the lowest
    approval fraction among the candidates the solution concept selected in that
    election. Elections in which the solution concept returned the empty set are
    omitted. The dashed red line marks the majority threshold of \(0.5\):
    points below it correspond to elections in which the solution concept selected at
    least one sub-majority candidate. Concentration of points near the
    top of the strip indicates that the solution concept consistently restricts itself
    to broadly approved candidates. 
    }
    \label{fig:min_approval_coval}
\end{figure}

\begin{figure}[H]
    \centering
    \includegraphics[width=\linewidth]{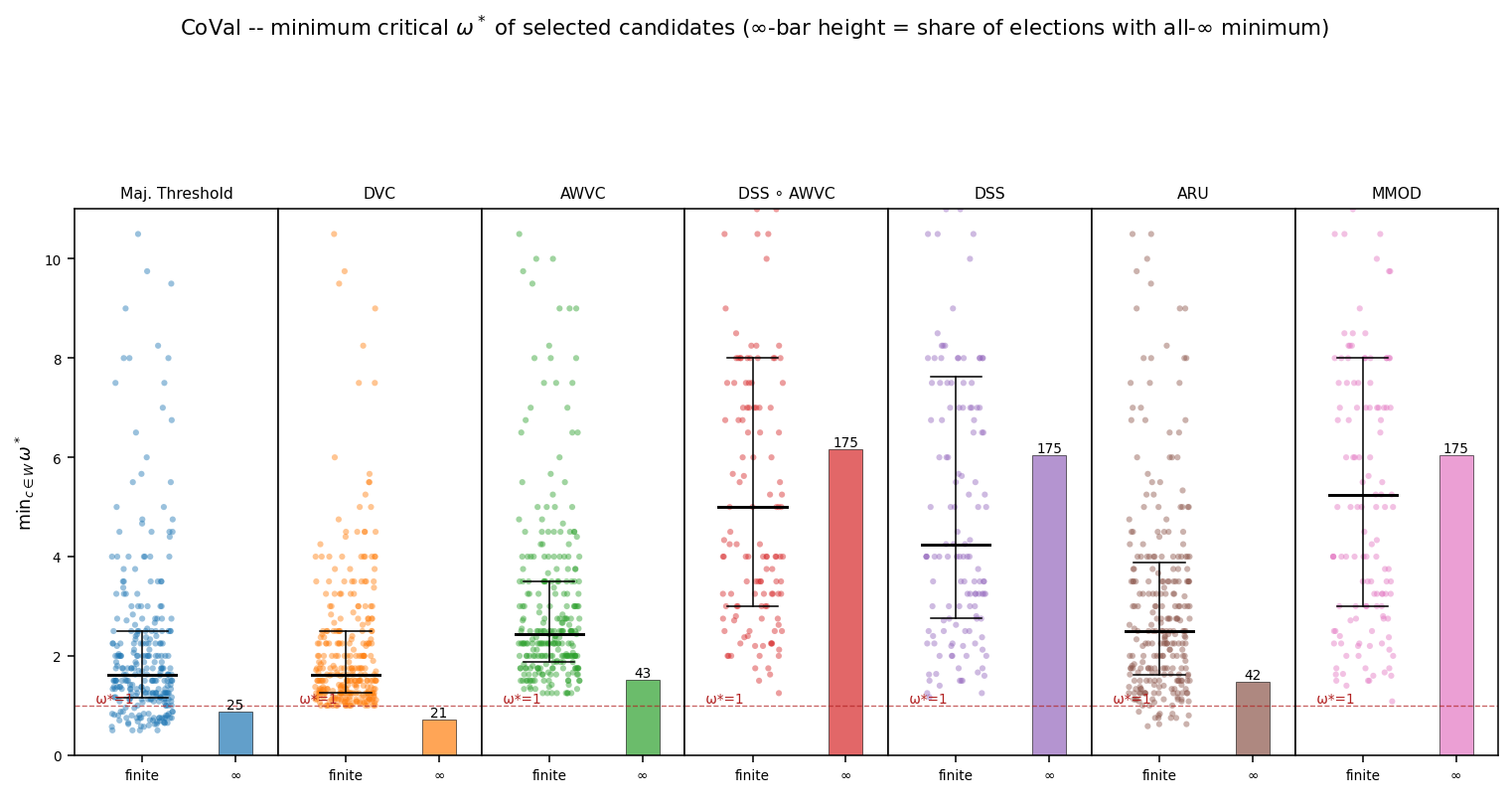}
    \caption{%
    Distribution of minimum critical endowments of selected candidates on the CoVal dataset, per
    solution concept. Each dot represents one election; its vertical position is
    \(\min_{c\in f(\mathcal I)}\omega^*(c)\), the lowest critical
    endowment among the candidates the solution concept selected (Definition~13).
    Each strip is split into two bands: the left ``finite'' band shows
    elections with finite minimum \(\omega^*\), and the right ``\(\infty\)''
    band aggregates elections in which every selected candidate is
    unanimously approved and therefore unblockable under any
    \(\omega\)-\dvc{}. Lower finite values indicate that the solution concept selects
    candidates the electorate could veto with less than the standard
    veto budget; concentration in the \(\infty\) band indicates the solution concept
    is selecting only unanimously-approved candidates on this dataset.
    }
    \label{fig:min_omega_coval}
\end{figure}

\subsection{Experimental Variants: Synthetic and Moral Machine}
Our next set of observations distinguishes the experimental variants on the Moral Machine and the synthetically-generated datasets to make sure that no signals are aggregated away when reporting them each as one combined dataset. 

\paragraph{Synthetic Data Variants: }For our discussion on the patterns across the experiments on synthetically-generated preference models, we refer to Figures \ref{fig:synthetic_variation_selection_fraction}, \ref{fig:synthetic_variation_sufficient_support} and \ref{fig:synthetic_minority_protection}. Figure \ref{fig:synthetic_variation_selection_fraction} reveals that selectivity varies across models. Under Impartial Culture, DVC and Majority Threshold select the largest fractions, while AWVC and DSS $\circ$ AWVC are the most selective. As the Mallows dispersion parameter $\phi$ increases from $0.2$ to $0.8$, DVC and Majority Threshold are more permissive, whereas solution concepts like DSS, DSS$\circ$AWVC and MMOD remain relatively stable throughout. The spatial models produce the most selective behavior on the whole, across all solution concepts. 

In terms of sufficient support, Figure \ref{fig:synthetic_variation_sufficient_support} shows that the DVC is most prone to selecting sub-majority candidates, with median minimum approval consistently below $0.50$. The DSS too is prone to selecting sub-majority candidates, albeit less than the DVC. On the other hand, ARU only selects a sub-majority candidate in $1$ and $2$ elections in IC and Mallows $\phi=0.2$ respectively. AWVC and DSS$\circ$ AWVC never select a sub-majority candidate, maintaining median minimum approval above $0.50$ across all models. 

Lastly, Figure \ref{fig:synthetic_minority_protection} shows that the Majority Threshold frequently selects candidates with $\omega^*(c)<1$, especially in Spatial 1D, where it does so in $68$ out of $100$ elections, with a median $\omega^*$ of $0.67$. While the DVC never falls below $\omega^*=1$, its median is modest ($1.0-1.43$). The DSS falls below $1$ in $9$ out of $100$ IC elections. The AWVC, DSS~$\circ$~AWVC, and MMOD never produce $\omega^*<1$ in any
model and achieve the highest median endowments ($2.00$--$4.55$), showing that they select candidates that remain unblockable even under inflated veto budgets. 

\paragraph{Moral Machine Variants: }For our discussion on the patterns across the experimental variants in the Moral Machine dataset, we refer to Figures \ref{fig:moral_selecivity},\ref{fig:moral_support},\ref{fig:moral_minority}, and \ref{fig:moral_unanimous}. Across all six variants, DSS$\circ$AWVC and AWVC are the most selective (Figure \ref{fig:moral_selecivity}), roughly selecting $10-15 \%$ of candidates. The DVC and Majority Threshold are the least-selective. This selectivity gap widens under the Option $2$ variants with increasing quantile thresholds ($q$). The DVC is the only solution concept that selects sub-majority candidates across every variant (Figure \ref{fig:moral_support}). Majority threshold also exhibits instances of sub-majority selections, albeit with higher median minimum approvals. In strict contrast, DSS, DSS$\circ$AWVC and AWVC never select a sub-majority candidate in every variant. In terms of minority protection, the results from Figure \ref{fig:moral_minority} show that DSS, DSS$\circ$AWVC and MMOD achieve $\omega^*=\infty$ across all variants, showing that every selected candidate is unanimously approved and unblockable. DVC and Majority Threshold remain finite, with DVC's median $\omega^*$ values being close to $1$, i.e., close to the boundary at which candidates become blockable. The ARU lies in between, having finite $\omega^*$ values that lie comfortably above $1$. Lastly, we observe the unanimity consistency results, where Figure \ref{fig:moral_unanimous} show that the DSS, DSS$\circ$AWVC and MMOD never select a non-unanimously-approved candidate in any variant, while Majority Threshold, DVC and AWVC violate unanimity in all $10$ elections of every variant. The DVC has the highest count of non-unanimous selections. 

\begin{figure}[H]

    \centering
    \resizebox{\textwidth}{!}{%
        \input{Current_Work/Neurips/plots/Synthetic_MoralVariation_Plots/synthetic_selection_fraction}%
    }
    \caption{Selectivity Across Domains}
    \label{fig:synthetic_variation_selection_fraction}
\end{figure}

\begin{figure}[H]
    \centering
    \resizebox{\textwidth}{!}{%
        \input{Current_Work/Neurips/plots/Synthetic_MoralVariation_Plots/synthetic_submajority}%
    }
    \caption{\emph{Sufficient support across domains }The first column in each domain (\#) gives the number of instances in which the concept selected at least one candidate with approval below a strict majority (i.e., $s_c(\mathcal{I}) < n/2$). The second column (med min app.) gives the median across elections of the minimum approval fraction among selected candidates, $\min_{c \in f(\mathcal{I})} s_c(\mathcal{I})/n$. A higher median min approval and a lower $\#$ indicate that the solution concept consistently selects well-supported candidates. The table shows results across six synthetic preference models ($N=100$ elections each).}\label{fig:synthetic_variation_sufficient_support}
\end{figure}

\begin{figure}[H]
    \centering
    \resizebox{\textwidth}{!}{%
        \input{Current_Work/Neurips/plots/Synthetic_MoralVariation_Plots/synthetic_minority_protection}%
    }
    \caption{\emph{Minority Protection Across Domains:} The first column in each domain(\#) gives
    the number of instances in which the concept selected at least one
    candidate with \(\omega^*(c)<1\). The
    second column (med \(\omega^*\)) gives the median across elections
    of \(\min_{c\in f(\mathcal I)} \omega^*(c)\); this value can be  \(\infty\)
    in case all selected candidates are unanimously approved
    and therefore unblockable. The table shows results across six synthetic preference models ($N=100$ elections each).}
    \label{fig:synthetic_minority_protection}
\end{figure}

\begin{figure}[H]
    \centering
    \resizebox{\textwidth}{!}{%
        \input{Current_Work/Neurips/plots/Synthetic_MoralVariation_Plots/moral_selectivity}%
    }
    \caption{\emph{Selectivity} across experimental variants (Moral Machine): Each bar shows the average fraction of candidates selected ($|W|/m$) across elections for a given solution concept and variant. Option $1$ variants sample candidate weights with noise parameter $\alpha$; Option 2 variants construct candidates from pairwise comparisons with quantile threshold $q$. Each variant comprises $N=10$ elections with $m=100$ candidates.}
    \label{fig:moral_selecivity}
\end{figure}

\begin{figure}[H]
    \centering
    \resizebox{\textwidth}{!}{%
        \input{Current_Work/Neurips/plots/Synthetic_MoralVariation_Plots/moral_submajority}%
    }
    \caption{\emph{Sufficient Support }across experimental variants (Moral Machine). The first column in each variant (\#) gives the number of instances in which the concept selected at least one candidate with approval below a strict majority (i.e., $s_c(\mathcal{I}) < n/2$). The second column (med min app.) gives the median across elections of the minimum approval fraction among selected candidates, $\min_{c \in f(\mathcal{I})} s_c(\mathcal{I})/n$. A higher median min approval and a lower \# indicate that the solution concept consistently selects well-supported candidates. Each variant comprises $N=10$ elections with $m=100$ candidates.}
    \label{fig:moral_support}
\end{figure}

\begin{figure}[H]
    \centering
    \resizebox{\textwidth}{!}{%
        \input{Current_Work/Neurips/plots/Synthetic_MoralVariation_Plots/moral_minority_protection}%
    }
    \caption{\emph{Minority Protection} across experimental variants (Moral Machine). The first column in each variant (\#) gives the number of instances in which the concept selected at least one candidate with $\omega^*(c) < 1$. The second column (med $\omega^*$) gives the median across elections of $\min_{c \in f(\mathcal{I})} \omega^*(c)$; this value can be $\infty$ in case all selected candidates are unanimously approved and therefore unblockable. Each variant comprises $N=10$ elections with $m=100$ candidates.
}
    \label{fig:moral_minority}
\end{figure}

\begin{figure}[H]
    \centering
    \resizebox{\textwidth}{!}{%
        \input{Current_Work/Neurips/plots/Synthetic_MoralVariation_Plots/moral_unanimity}%
    }
    \caption{\emph{Decisively better candidates } across experimental variants (Moral Machine). Column headers report $N$, the total number of elections in the variant, and $N_u$, the number of those elections that contain at least one unanimously approved candidate. For each cell, the first column (\#) gives the number of those $N_u$ elections in which the solution concept selects at least one non-unanimously-approved candidate, witnessing a violation of exclusive unanimity. The second column (tot.) gives the total count of non-unanimously-approved candidates the solution concept selected across those elections. Lower values indicate better unanimity consistency. Each variant comprises $N=10$ elections with $m=100$ candidates.}
    \label{fig:moral_unanimous}
\end{figure}

\section{Relational Blocking Revisited}
\label{subsec:global-blocking}

In economic theory, a natural way to declare that a candidate is clearly better than another one is through \emph{Pareto domination}, and it may be interesting to adapt this concept to our setting. We describe one possible way of doing so: we write $c \mathrel{P^*} d$ (and say that \emph{$c$ disapproval Pareto-dominates $d$}) for $c,d\in C$, if $c\in A_i$ or $c\succsim_i d$ for all $i\in N$, while further there is at least one $j\in N$ with $d\notin A_j$ and ($c\succ_j d$ or $c\in A_j$).

\begin{definition}[disapproval Pareto set]
    The \emph{disapproval Pareto set} is
    $
        \po(\mathcal I)
        =
        \{c\in C \mid d \mathrel{P^*} c \text{ for no } d\in C\},
    $.
\end{definition}

Under this solution concept, a candidate is blocked if and only if it is disapproval Pareto-dominated. It may be more suitable than $\tc{}$ for real-world elections, as there the latter often only outputs a single candidate.


\end{document}

%% file: Current_Work/Neurips/plots/diagram1_selection_fraction.tex
%
%
\definecolor{papercolord1rule0}{RGB}{31,119,180}
\definecolor{papercolord1rule1}{RGB}{255,127,14}
\definecolor{papercolord1rule2}{RGB}{44,160,44}
\definecolor{papercolord1rule3}{RGB}{214,39,40}
\definecolor{papercolord1rule4}{RGB}{148,103,189}
\definecolor{papercolord1rule5}{RGB}{140,86,75}
\definecolor{papercolord1rule6}{RGB}{227,119,194}
\begin{tikzpicture}
    \begin{axis}[
        ybar=0.5pt,
        bar width=5pt,
        width=\linewidth,
        height=3.5cm,
        ymin=0, ymax=1.0,
        ylabel={avg. $|f(\mathcal
        {I})|/m$},
        ylabel style={font=\small},
        symbolic x coords={CoVal,Synthetic,Political,Kidney,Moral Machine},
        xtick=data,
        x tick label style={font=\small},
        y tick label style={font=\footnotesize},
        ymajorgrids=true,
        grid style={dashed,gray!30},
        legend image code/.code={
            \draw[#1] (0cm,-0.1cm) rectangle (0.3cm,0.1cm);
        },
        legend style={
            at={(0.5,1.4)},
            anchor=north,
            legend columns=7,
            /tikz/every even column/.append style={column sep=4pt},
            font=\footnotesize,
            draw=none,
        },
        enlarge x limits=0.12,
    ]
        \addplot[ybar, fill=papercolord1rule0, draw=black, line width=0.3pt] coordinates {(CoVal,0.7955) (Synthetic,0.4777) (Political,0.1182) (Kidney,0.5045) (Moral Machine,0.3893)};
        \addlegendentry{Maj. Threshold}
        \addplot[ybar, fill=papercolord1rule1, draw=black, line width=0.3pt] coordinates {(CoVal,0.8307) (Synthetic,0.6207) (Political,0.9798) (Kidney,0.7920) (Moral Machine,0.4547)};
        \addlegendentry{DVC}
        \addplot[ybar, fill=papercolord1rule2, draw=black, line width=0.3pt] coordinates {(CoVal,0.6873) (Synthetic,0.2497) (Political,0.1182) (Kidney,0.4633) (Moral Machine,0.3755)};
        \addlegendentry{AWVC}
        \addplot[ybar, fill=papercolord1rule3, draw=black, line width=0.3pt] coordinates {(CoVal,0.3683) (Synthetic,0.1493) (Political,0.0795) (Kidney,0.0120) (Moral Machine,0.1167)};
        \addlegendentry{DSS $\circ$ AWVC}
        \addplot[ybar, fill=papercolord1rule4, draw=black, line width=0.3pt] coordinates {(CoVal,0.3801) (Synthetic,0.2360) (Political,0.0887) (Kidney,0.0120) (Moral Machine,0.1167)};
        \addlegendentry{DSS}
        \addplot[ybar, fill=papercolord1rule5, draw=black, line width=0.3pt] coordinates {(CoVal,0.6552) (Synthetic,0.2057) (Political,0.0000) (Kidney,0.4903) (Moral Machine,0.3788)};
        \addlegendentry{ARU}
        \addplot[ybar, fill=papercolord1rule6, draw=black, line width=0.3pt] coordinates {(CoVal,0.3292) (Synthetic,0.2010) (Political,0.0887) (Kidney,0.0100) (Moral Machine,0.1167)};
        \addlegendentry{MMOD}
    \end{axis}
\end{tikzpicture}

%% file: Current_Work/Neurips/plots/diagram5_low_omega.tex
\begin{tabular}{lrrrrrrrrrr}
\toprule
 & \multicolumn{2}{c}{CoVal} & \multicolumn{2}{c}{Synthetic} & \multicolumn{2}{c}{Political} & \multicolumn{2}{c}{Kidney} & \multicolumn{2}{c}{Moral Machine} \\
 & \multicolumn{2}{c}{($N$=319)} & \multicolumn{2}{c}{($N$=600)} & \multicolumn{2}{c}{($N$=9)} & \multicolumn{2}{c}{($N$=60)} & \multicolumn{2}{c}{($N$=60)} \\
\cmidrule(lr){2-3}\cmidrule(lr){4-5}\cmidrule(lr){6-7}\cmidrule(lr){8-9}\cmidrule(lr){10-11}
Rule & \# & med $\omega^*$ & \# & med $\omega^*$ & \# & med $\omega^*$ & \# & med $\omega^*$ & \# & med $\omega^*$ \\
\midrule
Maj. Threshold & 50 & 1.75 & 115 & 1.43 & 0 & 2.45 & 37 & 0.75 & 0 & 1.66 \\
DVC & 0 & 1.75 & 0 & 1.25 & 0 & 1.04 & 0 & 1.00 & 0 & 1.03 \\
AWVC & 0 & 2.55 & 0 & 2.14 & 0 & 2.45 & 0 & 1.91 & 0 & 1.97 \\
DSS $\circ$ AWVC & 0 & $\infty$ & 0 & 2.63 & 0 & 2.68 & 0 & 22.36 & 0 & $\infty$ \\
DSS & 0 & $\infty$ & 24 & 2.17 & 0 & 2.46 & 0 & 22.36 & 0 & $\infty$ \\
ARU & 12 & 2.87 & 3 & 2.22 & --- & --- & 27 & 1.00 & 0 & 1.82 \\
MMOD & 0 & $\infty$ & 18 & 2.17 & 0 & 2.46 & 0 & 22.33 & 0 & $\infty$ \\
\bottomrule
\end{tabular}

%% file: Current_Work/Neurips/plots/diagram6_unanimity.tex
\begin{tabular}{lrrrrrrrrrr}
\toprule
 & \multicolumn{2}{c}{CoVal} & \multicolumn{2}{c}{Synthetic} & \multicolumn{2}{c}{Political} & \multicolumn{2}{c}{Kidney} & \multicolumn{2}{c}{Moral Machine} \\
 & \multicolumn{2}{c}{($N$=319, $N_u$=175)} & \multicolumn{2}{c}{($N$=600, $N_u$=0)} & \multicolumn{2}{c}{($N$=9, $N_u$=0)} & \multicolumn{2}{c}{($N$=60, $N_u$=0)} & \multicolumn{2}{c}{($N$=60, $N_u$=60)} \\
\cmidrule(lr){2-3}\cmidrule(lr){4-5}\cmidrule(lr){6-7}\cmidrule(lr){8-9}\cmidrule(lr){10-11}
Rule & \# & tot. & \# & tot. & \# & tot. & \# & tot. & \# & tot. \\
\midrule
Maj. Threshold & 150 & 310 & 0 & 0 & 0 & 0 & 0 & 0 & 60 & 1636 \\
DVC & 154 & 304 & 0 & 0 & 0 & 0 & 0 & 0 & 60 & 2028 \\
AWVC & 132 & 230 & 0 & 0 & 0 & 0 & 0 & 0 & 60 & 1553 \\
DSS $\circ$ AWVC & 0 & 0 & 0 & 0 & 0 & 0 & 0 & 0 & 0 & 0 \\
DSS & 0 & 0 & 0 & 0 & 0 & 0 & 0 & 0 & 0 & 0 \\
ARU & 132 & 243 & 0 & 0 & 0 & 0 & 0 & 0 & 60 & 1573 \\
MMOD & 0 & 0 & 0 & 0 & 0 & 0 & 0 & 0 & 0 & 0 \\
\bottomrule
\end{tabular}

%% file: Current_Work/Neurips/plots/diagram2_submajority.tex
\begin{tabular}{lrcrcrcrcrc}
\toprule
 & \multicolumn{2}{c}{CoVal} & \multicolumn{2}{c}{Synthetic} & \multicolumn{2}{c}{Political} & \multicolumn{2}{c}{Kidney} & \multicolumn{2}{c}{Moral Machine} \\
 & \multicolumn{2}{c}{($N$=319)} & \multicolumn{2}{c}{($N$=600)} & \multicolumn{2}{c}{($N$=9)} & \multicolumn{2}{c}{($N$=60)} & \multicolumn{2}{c}{($N$=60)} \\
\cmidrule(lr){2-3}\cmidrule(lr){4-5}\cmidrule(lr){6-7}\cmidrule(lr){8-9}\cmidrule(lr){10-11}
Rule & \# & $\widetilde{\min\,s_c/n}$ & \# & $\widetilde{\min\,s_c/n}$ & \# & $\widetilde{\min\,s_c/n}$ & \# & $\widetilde{\min\,s_c/n}$ & \# & $\widetilde{\min\,s_c/n}$ \\
\midrule
Maj.\ Threshold & 0 & 0.69 & 0 & 0.54 & 0 & 0.59 & 0 & 0.50 & 0 & 0.52 \\
\dvc{}          & 52 & 0.71 & 347 & 0.48 & 9 & 0.04 & 60 & 0.08 & 60 & 0.28 \\
AWVC            & 0 & 0.80 & 0 & 0.60 & 0 & 0.59 & 0 & 0.51 & 0 & 0.61 \\
DSS $\circ$ AWVC & 0 & 1.00 & 0 & 0.65 & 0 & 0.62 & 0 & 0.96 & 0 & 1.00 \\
DSS              & 7 & 1.00 & 106 & 0.60 & 1 & 0.59 & 0 & 0.96 & 0 & 1.00 \\
ARU              & 9 & 0.81 & 3 & 0.60 & 0 & --- & 41 & 0.48 & 5 & 0.58 \\
MMOD            & 8 & 1.00 & 95 & 0.60 & 1 & 0.59 & 0 & 0.96 & 0 & 1.00 \\
\bottomrule
\end{tabular}

%% file: Current_Work/Neurips/plots/m_sweep/m_sweep_selectivity.tex
%
%
\definecolor{papercolormrule0}{RGB}{31,119,180}
\definecolor{papercolormrule1}{RGB}{255,127,14}
\definecolor{papercolormrule2}{RGB}{44,160,44}
\definecolor{papercolormrule3}{RGB}{214,39,40}
\definecolor{papercolormrule4}{RGB}{148,103,189}
\definecolor{papercolormrule5}{RGB}{140,86,75}
\definecolor{papercolormrule6}{RGB}{227,119,194}

\begin{tikzpicture}
    \begin{axis}[
        width=\linewidth, height=5cm,
        xlabel={$m$}, ylabel={mean $|f(\mathcal{I})|/m$},
        xtick={5,10,15,20,25,30,35,40,45,50}, x tick label style={font=\footnotesize},
        y tick label style={font=\footnotesize},
        ylabel style={font=\small},
        ymin=0, ymax=1.05, 
        ymajorgrids=true, grid style={dashed,gray!30},
        legend style={
            at={(0.5,-0.30)}, anchor=north,
            legend columns=7, font=\footnotesize, draw=none,
        },
    ]
        \addplot+[mark=*, mark size=1.5pt, color=papercolormrule0, thick] coordinates {(5,0.4760) (10,0.4610) (15,0.4767) (20,0.4740) (25,0.4804) (30,0.4817) (35,0.4583) (40,0.4800) (45,0.4913) (50,0.4562)};
        \addlegendentry{Maj. Threshold}
        \addplot+[mark=*, mark size=1.5pt, color=papercolormrule1, thick] coordinates {(5,0.6280) (10,0.5960) (15,0.5727) (20,0.5590) (25,0.5868) (30,0.5470) (35,0.5546) (40,0.5627) (45,0.5747) (50,0.5814)};
        \addlegendentry{DVC}
        \addplot+[mark=*, mark size=1.5pt, color=papercolormrule2, thick] coordinates {(5,0.2340) (10,0.2640) (15,0.2800) (20,0.2915) (25,0.3176) (30,0.2840) (35,0.2820) (40,0.2867) (45,0.2947) (50,0.2742)};
        \addlegendentry{AWVC}
        \addplot+[mark=*, mark size=1.5pt, color=papercolormrule3, thick] coordinates {(5,0.1560) (10,0.0970) (15,0.0733) (20,0.0555) (25,0.0684) (30,0.0480) (35,0.0449) (40,0.0495) (45,0.0416) (50,0.0366)};
        \addlegendentry{DSS $\circ$ AWVC}
        \addplot+[mark=*, mark size=1.5pt, color=papercolormrule4, thick] coordinates {(5,0.2320) (10,0.1680) (15,0.1627) (20,0.1210) (25,0.1508) (30,0.1933) (35,0.1429) (40,0.1760) (45,0.1204) (50,0.1596)};
        \addlegendentry{DSS}
        \addplot+[mark=*, mark size=1.5pt, color=papercolormrule5, thick] coordinates {(5,0.1940) (10,0.2600) (15,0.3247) (20,0.3465) (25,0.3676) (30,0.3867) (35,0.3614) (40,0.3917) (45,0.4067) (50,0.3700)};
        \addlegendentry{ARU}
        \addplot+[mark=*, mark size=1.5pt, color=papercolormrule6, thick] coordinates {(5,0.2040) (10,0.1000) (15,0.0687) (20,0.0515) (25,0.0404) (30,0.0343) (35,0.0303) (40,0.0272) (45,0.0229) (50,0.0206)};
        \addlegendentry{MMOD}
    \end{axis}
\end{tikzpicture}

%% file: Current_Work/Neurips/plots/m_sweep/m_sweep_submajorityviolations.tex
%
%
\definecolor{papercolormrule0}{RGB}{31,119,180}
\definecolor{papercolormrule1}{RGB}{255,127,14}
\definecolor{papercolormrule2}{RGB}{44,160,44}
\definecolor{papercolormrule3}{RGB}{214,39,40}
\definecolor{papercolormrule4}{RGB}{148,103,189}
\definecolor{papercolormrule5}{RGB}{140,86,75}
\definecolor{papercolormrule6}{RGB}{227,119,194}

\begin{tikzpicture}
    \begin{axis}[
        width=\linewidth, height=5cm,
        xlabel={$m$}, ylabel={frac.\ sub-maj.},
        xtick={5,10,15,20,25,30,35,40,45,50}, x tick label style={font=\footnotesize},
        y tick label style={font=\footnotesize},
        ylabel style={font=\small},
        ymin=0, ymax=1.05, 
        ymajorgrids=true, grid style={dashed,gray!30},
        legend style={
            at={(0.5,-0.30)}, anchor=north,
            legend columns=7, font=\footnotesize, draw=none,
        },
    ]
        \addplot+[mark=*, mark size=1.5pt, color=papercolormrule0, thick] coordinates {(5,0.0000) (10,0.0000) (15,0.0000) (20,0.0000) (25,0.0000) (30,0.0000) (35,0.0000) (40,0.0000) (45,0.0000) (50,0.0000)};
        \addlegendentry{Maj. Threshold}
        \addplot+[mark=*, mark size=1.5pt, color=papercolormrule1, thick] coordinates {(5,0.5500) (10,0.7900) (15,0.7200) (20,0.7700) (25,0.7900) (30,0.7500) (35,0.7800) (40,0.7400) (45,0.7200) (50,0.8100)};
        \addlegendentry{DVC}
        \addplot+[mark=*, mark size=1.5pt, color=papercolormrule2, thick] coordinates {(5,0.0000) (10,0.0000) (15,0.0000) (20,0.0000) (25,0.0000) (30,0.0000) (35,0.0000) (40,0.0000) (45,0.0000) (50,0.0000)};
        \addlegendentry{AWVC}
        \addplot+[mark=*, mark size=1.5pt, color=papercolormrule3, thick] coordinates {(5,0.0000) (10,0.0000) (15,0.0000) (20,0.0000) (25,0.0000) (30,0.0000) (35,0.0000) (40,0.0000) (45,0.0000) (50,0.0000)};
        \addlegendentry{DSS $\circ$ AWVC}
        \addplot+[mark=*, mark size=1.5pt, color=papercolormrule4, thick] coordinates {(5,0.1800) (10,0.1600) (15,0.2000) (20,0.1600) (25,0.1600) (30,0.2400) (35,0.1700) (40,0.1900) (45,0.1500) (50,0.2200)};
        \addlegendentry{DSS}
        \addplot+[mark=*, mark size=1.5pt, color=papercolormrule5, thick] coordinates {(5,0.0100) (10,0.0300) (15,0.0600) (20,0.1300) (25,0.1400) (30,0.2700) (35,0.2200) (40,0.2400) (45,0.2300) (50,0.3300)};
        \addlegendentry{ARU}
        \addplot+[mark=*, mark size=1.5pt, color=papercolormrule6, thick] coordinates {(5,0.1400) (10,0.1100) (15,0.1200) (20,0.1000) (25,0.0400) (30,0.0800) (35,0.0600) (40,0.0800) (45,0.1000) (50,0.1100)};
        \addlegendentry{MMOD}
    \end{axis}
\end{tikzpicture}

\par\smallskip

\begin{tikzpicture}
    \begin{axis}[
        width=\linewidth, height=5cm,
        xlabel={$m$}, ylabel={med.\ min app.},
        xtick={5,10,15,20,25,30,35,40,45,50}, x tick label style={font=\footnotesize},
        y tick label style={font=\footnotesize},
        ylabel style={font=\small},
        ymin=0, ymax=1.05, 
        ymajorgrids=true, grid style={dashed,gray!30},
        legend style={
            at={(0.5,-0.30)}, anchor=north,
            legend columns=7, font=\footnotesize, draw=none,
        },
    ]
        \addplot+[mark=*, mark size=1.5pt, color=papercolormrule0, thick] coordinates {(5,0.5200) (10,0.5200) (15,0.5200) (20,0.5000) (25,0.5000) (30,0.5000) (35,0.5000) (40,0.5000) (45,0.5000) (50,0.5000)};
        \addlegendentry{Maj. Threshold}
        \addplot+[mark=*, mark size=1.5pt, color=papercolormrule1, thick] coordinates {(5,0.4800) (10,0.4200) (15,0.4400) (20,0.4200) (25,0.4000) (30,0.4200) (35,0.4000) (40,0.4300) (45,0.4200) (50,0.4000)};
        \addlegendentry{DVC}
        \addplot+[mark=*, mark size=1.5pt, color=papercolormrule2, thick] coordinates {(5,0.6200) (10,0.5800) (15,0.5800) (20,0.5600) (25,0.5600) (30,0.5600) (35,0.5800) (40,0.5600) (45,0.5700) (50,0.5600)};
        \addlegendentry{AWVC}
        \addplot+[mark=*, mark size=1.5pt, color=papercolormrule3, thick] coordinates {(5,0.6400) (10,0.6700) (15,0.7000) (20,0.7600) (25,0.8000) (30,0.8200) (35,0.8200) (40,0.8200) (45,0.8200) (50,0.8600)};
        \addlegendentry{DSS $\circ$ AWVC}
        \addplot+[mark=*, mark size=1.5pt, color=papercolormrule4, thick] coordinates {(5,0.6200) (10,0.6500) (15,0.6800) (20,0.7400) (25,0.7500) (30,0.7800) (35,0.7900) (40,0.7900) (45,0.7700) (50,0.8000)};
        \addlegendentry{DSS}
        \addplot+[mark=*, mark size=1.5pt, color=papercolormrule5, thick] coordinates {(5,0.6200) (10,0.5600) (15,0.5600) (20,0.5400) (25,0.5400) (30,0.5200) (35,0.5200) (40,0.5200) (45,0.5200) (50,0.5200)};
        \addlegendentry{ARU}
        \addplot+[mark=*, mark size=1.5pt, color=papercolormrule6, thick] coordinates {(5,0.6200) (10,0.6400) (15,0.6800) (20,0.7600) (25,0.7600) (30,0.7900) (35,0.8200) (40,0.8000) (45,0.8000) (50,0.8200)};
        \addlegendentry{MMOD}
    \end{axis}
\end{tikzpicture}

%% file: Current_Work/Neurips/plots/m_sweep/m_sweep_minority_protection.tex
%
%
\definecolor{papercolormrule0}{RGB}{31,119,180}
\definecolor{papercolormrule1}{RGB}{255,127,14}
\definecolor{papercolormrule2}{RGB}{44,160,44}
\definecolor{papercolormrule3}{RGB}{214,39,40}
\definecolor{papercolormrule4}{RGB}{148,103,189}
\definecolor{papercolormrule5}{RGB}{140,86,75}
\definecolor{papercolormrule6}{RGB}{227,119,194}

\begin{tikzpicture}
    \begin{axis}[
        width=\linewidth, height=5cm,
        xlabel={$m$}, ylabel={frac.\ $\omega^*<1$},
        xtick={5,10,15,20,25,30,35,40,45,50}, x tick label style={font=\footnotesize},
        y tick label style={font=\footnotesize},
        ylabel style={font=\small},
        ymin=0, ymax=1.05, 
        ymajorgrids=true, grid style={dashed,gray!30},
        legend style={
            at={(0.5,-0.30)}, anchor=north,
            legend columns=7, font=\footnotesize, draw=none,
        },
    ]
        \addplot+[mark=*, mark size=1.5pt, color=papercolormrule0, thick] coordinates {(5,0.2100) (10,0.3100) (15,0.3400) (20,0.3900) (25,0.3400) (30,0.4100) (35,0.3700) (40,0.3500) (45,0.3900) (50,0.3500)};
        \addlegendentry{Maj. Threshold}
        \addplot+[mark=*, mark size=1.5pt, color=papercolormrule1, thick] coordinates {(5,0.0000) (10,0.0000) (15,0.0000) (20,0.0000) (25,0.0000) (30,0.0000) (35,0.0000) (40,0.0000) (45,0.0000) (50,0.0000)};
        \addlegendentry{DVC}
        \addplot+[mark=*, mark size=1.5pt, color=papercolormrule2, thick] coordinates {(5,0.0000) (10,0.0000) (15,0.0000) (20,0.0000) (25,0.0000) (30,0.0000) (35,0.0000) (40,0.0000) (45,0.0000) (50,0.0000)};
        \addlegendentry{AWVC}
        \addplot+[mark=*, mark size=1.5pt, color=papercolormrule3, thick] coordinates {(5,0.0000) (10,0.0000) (15,0.0000) (20,0.0000) (25,0.0000) (30,0.0000) (35,0.0000) (40,0.0000) (45,0.0000) (50,0.0000)};
        \addlegendentry{DSS $\circ$ AWVC}
        \addplot+[mark=*, mark size=1.5pt, color=papercolormrule4, thick] coordinates {(5,0.0300) (10,0.0700) (15,0.1000) (20,0.0900) (25,0.1200) (30,0.1700) (35,0.1400) (40,0.1400) (45,0.0900) (50,0.1500)};
        \addlegendentry{DSS}
        \addplot+[mark=*, mark size=1.5pt, color=papercolormrule5, thick] coordinates {(5,0.0000) (10,0.0300) (15,0.1700) (20,0.1800) (25,0.2500) (30,0.2800) (35,0.2500) (40,0.2600) (45,0.2500) (50,0.2600)};
        \addlegendentry{ARU}
        \addplot+[mark=*, mark size=1.5pt, color=papercolormrule6, thick] coordinates {(5,0.0300) (10,0.0000) (15,0.0100) (20,0.0500) (25,0.0300) (30,0.0400) (35,0.0400) (40,0.0300) (45,0.0200) (50,0.0400)};
        \addlegendentry{MMOD}
    \end{axis}
\end{tikzpicture}

\par\smallskip

\begin{tikzpicture}
    \begin{axis}[
        width=\linewidth, height=5cm,
        xlabel={$m$}, ylabel={med.\ min $\omega^*$},
        xtick={5,10,15,20,25,30,35,40,45,50}, x tick label style={font=\footnotesize},
        y tick label style={font=\footnotesize},
        ylabel style={font=\small},
        ymin=0, 
        ymajorgrids=true, grid style={dashed,gray!30},
        legend style={
            at={(0.5,-0.30)}, anchor=north,
            legend columns=7, font=\footnotesize, draw=none,
        },
    ]
        \addplot+[mark=*, mark size=1.5pt, color=papercolormrule0, thick] coordinates {(5,1.4286) (10,1.5000) (15,1.2121) (20,1.2500) (25,1.2772) (30,1.2599) (35,1.2245) (40,1.1979) (45,1.1111) (50,1.1600)};
        \addlegendentry{Maj. Threshold}
        \addplot+[mark=*, mark size=1.5pt, color=papercolormrule1, thick] coordinates {(5,1.2500) (10,1.0620) (15,1.1111) (20,1.1538) (25,1.0000) (30,1.0714) (35,1.0714) (40,1.0714) (45,1.0581) (50,1.0000)};
        \addlegendentry{DVC}
        \addplot+[mark=*, mark size=1.5pt, color=papercolormrule2, thick] coordinates {(5,2.2222) (10,2.0833) (15,1.9444) (20,2.0000) (25,2.0000) (30,1.8333) (35,1.9255) (40,1.9318) (45,1.8922) (50,1.9000)};
        \addlegendentry{AWVC}
        \addplot+[mark=*, mark size=1.5pt, color=papercolormrule3, thick] coordinates {(5,2.5658) (10,2.6316) (15,3.3333) (20,3.8462) (25,3.9231) (30,5.5556) (35,5.5556) (40,5.5556) (45,5.6250) (50,7.1429)};
        \addlegendentry{DSS $\circ$ AWVC}
        \addplot+[mark=*, mark size=1.5pt, color=papercolormrule4, thick] coordinates {(5,2.0000) (10,2.5000) (15,2.7864) (20,3.4524) (25,3.5714) (30,4.4231) (35,3.9054) (40,4.8611) (45,4.3561) (50,4.8611)};
        \addlegendentry{DSS}
        \addplot+[mark=*, mark size=1.5pt, color=papercolormrule5, thick] coordinates {(5,2.3529) (10,1.8750) (15,1.6667) (20,1.3889) (25,1.3880) (30,1.3194) (35,1.4286) (40,1.2500) (45,1.2560) (50,1.2174)};
        \addlegendentry{ARU}
        \addplot+[mark=*, mark size=1.5pt, color=papercolormrule6, thick] coordinates {(5,2.0417) (10,2.5000) (15,2.7047) (20,3.7981) (25,3.9231) (30,4.7009) (35,4.4104) (40,5.2083) (45,5.0505) (50,5.5556)};
        \addlegendentry{MMOD}
    \end{axis}
\end{tikzpicture}

%% file: Current_Work/Neurips/plots/m_sweep/m_sweep_unanimity.tex
%
%
\definecolor{papercolormrule0}{RGB}{31,119,180}
\definecolor{papercolormrule1}{RGB}{255,127,14}
\definecolor{papercolormrule2}{RGB}{44,160,44}
\definecolor{papercolormrule3}{RGB}{214,39,40}
\definecolor{papercolormrule4}{RGB}{148,103,189}
\definecolor{papercolormrule5}{RGB}{140,86,75}
\definecolor{papercolormrule6}{RGB}{227,119,194}

\begin{tikzpicture}
    \begin{axis}[
        width=\linewidth, height=5cm,
        xlabel={$m$}, ylabel={frac.\ non-unanim.},
        xtick={5,10,15,20,25,30,35,40,45,50}, x tick label style={font=\footnotesize},
        y tick label style={font=\footnotesize},
        ylabel style={font=\small},
        ymin=0, ymax=1.05, 
        ymajorgrids=true, grid style={dashed,gray!30},
        legend style={
            at={(0.5,-0.30)}, anchor=north,
            legend columns=7, font=\footnotesize, draw=none,
        },
    ]
        \addplot+[mark=*, mark size=1.5pt, color=papercolormrule0, thick] coordinates {(5,nan) (10,nan) (15,nan) (20,1.0000) (25,nan) (30,1.0000) (35,1.0000) (40,1.0000) (45,1.0000) (50,1.0000)};
        \addlegendentry{Maj. Threshold}
        \addplot+[mark=*, mark size=1.5pt, color=papercolormrule1, thick] coordinates {(5,nan) (10,nan) (15,nan) (20,1.0000) (25,nan) (30,1.0000) (35,1.0000) (40,1.0000) (45,1.0000) (50,1.0000)};
        \addlegendentry{DVC}
        \addplot+[mark=*, mark size=1.5pt, color=papercolormrule2, thick] coordinates {(5,nan) (10,nan) (15,nan) (20,1.0000) (25,nan) (30,1.0000) (35,1.0000) (40,1.0000) (45,1.0000) (50,1.0000)};
        \addlegendentry{AWVC}
        \addplot+[mark=*, mark size=1.5pt, color=papercolormrule3, thick] coordinates {(5,nan) (10,nan) (15,nan) (20,0.0000) (25,nan) (30,0.0000) (35,0.0000) (40,0.0000) (45,0.0000) (50,0.0000)};
        \addlegendentry{DSS $\circ$ AWVC}
        \addplot+[mark=*, mark size=1.5pt, color=papercolormrule4, thick] coordinates {(5,nan) (10,nan) (15,nan) (20,0.0000) (25,nan) (30,0.0000) (35,0.0000) (40,0.0000) (45,0.0000) (50,0.0000)};
        \addlegendentry{DSS}
        \addplot+[mark=*, mark size=1.5pt, color=papercolormrule5, thick] coordinates {(5,nan) (10,nan) (15,nan) (20,1.0000) (25,nan) (30,1.0000) (35,1.0000) (40,1.0000) (45,1.0000) (50,1.0000)};
        \addlegendentry{ARU}
        \addplot+[mark=*, mark size=1.5pt, color=papercolormrule6, thick] coordinates {(5,nan) (10,nan) (15,nan) (20,0.0000) (25,nan) (30,0.0000) (35,0.0000) (40,0.0000) (45,0.0000) (50,0.0000)};
        \addlegendentry{MMOD}
    \end{axis}
\end{tikzpicture}

\par\smallskip

\begin{tikzpicture}
    \begin{axis}[
        width=\linewidth, height=5cm,
        xlabel={$m$}, ylabel={tot.\ non-unanim.},
        xtick={5,10,15,20,25,30,35,40,45,50}, x tick label style={font=\footnotesize},
        y tick label style={font=\footnotesize},
        ylabel style={font=\small},
        ymin=0, 
        ymajorgrids=true, grid style={dashed,gray!30},
        legend style={
            at={(0.5,-0.30)}, anchor=north,
            legend columns=7, font=\footnotesize, draw=none,
        },
    ]
        \addplot+[mark=*, mark size=1.5pt, color=papercolormrule0, thick] coordinates {(5,0.0000) (10,0.0000) (15,0.0000) (20,10.0000) (25,0.0000) (30,28.0000) (35,68.0000) (40,183.0000) (45,168.0000) (50,123.0000)};
        \addlegendentry{Maj. Threshold}
        \addplot+[mark=*, mark size=1.5pt, color=papercolormrule1, thick] coordinates {(5,0.0000) (10,0.0000) (15,0.0000) (20,10.0000) (25,0.0000) (30,31.0000) (35,74.0000) (40,189.0000) (45,168.0000) (50,132.0000)};
        \addlegendentry{DVC}
        \addplot+[mark=*, mark size=1.5pt, color=papercolormrule2, thick] coordinates {(5,0.0000) (10,0.0000) (15,0.0000) (20,9.0000) (25,0.0000) (30,21.0000) (35,53.0000) (40,146.0000) (45,124.0000) (50,99.0000)};
        \addlegendentry{AWVC}
        \addplot+[mark=*, mark size=1.5pt, color=papercolormrule3, thick] coordinates {(5,0.0000) (10,0.0000) (15,0.0000) (20,0.0000) (25,0.0000) (30,0.0000) (35,0.0000) (40,0.0000) (45,0.0000) (50,0.0000)};
        \addlegendentry{DSS $\circ$ AWVC}
        \addplot+[mark=*, mark size=1.5pt, color=papercolormrule4, thick] coordinates {(5,0.0000) (10,0.0000) (15,0.0000) (20,0.0000) (25,0.0000) (30,0.0000) (35,0.0000) (40,0.0000) (45,0.0000) (50,0.0000)};
        \addlegendentry{DSS}
        \addplot+[mark=*, mark size=1.5pt, color=papercolormrule5, thick] coordinates {(5,0.0000) (10,0.0000) (15,0.0000) (20,10.0000) (25,0.0000) (30,27.0000) (35,65.0000) (40,172.0000) (45,159.0000) (50,116.0000)};
        \addlegendentry{ARU}
        \addplot+[mark=*, mark size=1.5pt, color=papercolormrule6, thick] coordinates {(5,0.0000) (10,0.0000) (15,0.0000) (20,0.0000) (25,0.0000) (30,0.0000) (35,0.0000) (40,0.0000) (45,0.0000) (50,0.0000)};
        \addlegendentry{MMOD}
    \end{axis}
\end{tikzpicture}

%% file: Current_Work/Neurips/plots/correlation_epsilon_omega/clamped_histogram_simple.tex
%
%
%
%
\definecolor{papercoldshsynthetic}{RGB}{31,119,180}
\definecolor{papercoldshkidney}{RGB}{255,127,14}
\definecolor{papercoldshmmachine}{RGB}{44,160,44}
\definecolor{papercoldshpolitics}{RGB}{214,39,40}
\definecolor{papercoldshcoval}{RGB}{148,103,189}

\begin{tikzpicture}
    \begin{axis}[
        ybar stacked,
        bar width=12pt,
        width=\linewidth, height=6cm,
        xlabel={candidates per election with $\varepsilon^*$ clamped (uncomputable)},
        ylabel={\# elections},
        xlabel style={font=\small}, ylabel style={font=\small},
        x tick label style={font=\footnotesize},
        y tick label style={font=\footnotesize},
        ymajorgrids=true, grid style={dashed,gray!30},
        symbolic x coords={b0, b1to8, b9to16, b17to24, b25to32, b33to40, b41to48, b49to56, b57to64, b65to67},
        xtick=data,
        xticklabels={{0}, {1--8}, {9--16}, {17--24}, {25--32}, {33--40}, {41--48}, {49--56}, {57--64}, {65--67}},
        enlarge x limits=0.06,
        legend style={
            at={(0.5,-0.22)}, anchor=north,
            legend columns=5, draw=none,
            font=\footnotesize,
            /tikz/every even column/.append style={column sep=6pt},
        },
    ]
        \addplot+[ybar, fill=papercoldshsynthetic, draw=papercoldshsynthetic!70!black, line width=0.3pt] table[x=bin, y=synthetic] {{Current_Work/Neurips/plots/correlation_epsilon_omega/clamped_histogram_simple.dat}};
        \addplot+[ybar, fill=papercoldshkidney, draw=papercoldshkidney!70!black, line width=0.3pt] table[x=bin, y=kidney] {{Current_Work/Neurips/plots/correlation_epsilon_omega/clamped_histogram_simple.dat}};
        \addplot+[ybar, fill=papercoldshmmachine, draw=papercoldshmmachine!70!black, line width=0.3pt] table[x=bin, y=mmachine] {{Current_Work/Neurips/plots/correlation_epsilon_omega/clamped_histogram_simple.dat}};
        \addplot+[ybar, fill=papercoldshpolitics, draw=papercoldshpolitics!70!black, line width=0.3pt] table[x=bin, y=politics] {{Current_Work/Neurips/plots/correlation_epsilon_omega/clamped_histogram_simple.dat}};
        \addplot+[ybar, fill=papercoldshcoval, draw=papercoldshcoval!70!black, line width=0.3pt] table[x=bin, y=coval] {{Current_Work/Neurips/plots/correlation_epsilon_omega/clamped_histogram_simple.dat}};
        \legend{Synthetic ($n=600$), Kidney ($n=60$), Moral Machine ($n=60$), Political ($n=9$), CoVal ($n=319$)}
    \end{axis}
\end{tikzpicture}

%% file: Current_Work/Neurips/plots/correlation_epsilon_omega/pooled_scatter_simple.tex
%
%
%
%
\definecolor{papercoldssynthetic}{RGB}{31,119,180}
\definecolor{papercoldskidney}{RGB}{255,127,14}
\definecolor{papercoldsmmachine}{RGB}{44,160,44}
\definecolor{papercoldspolitics}{RGB}{214,39,40}
\definecolor{papercoldscoval}{RGB}{148,103,189}

\begin{tikzpicture}
    \begin{axis}[
        width=\linewidth, height=7cm,
        xlabel={critical $\omega^*$ (high $=$ robust)},
        ylabel={critical $\varepsilon^*$ ($>0$: blocked)},
        xlabel style={font=\small}, ylabel style={font=\small},
        x tick label style={font=\footnotesize},
        y tick label style={font=\footnotesize},
        ymajorgrids=true, grid style={dashed,gray!30},
        scatter/classes={
            synthetic={mark=*, mark size=1.2pt, papercoldssynthetic, opacity=0.55},
            kidney={mark=*, mark size=1.2pt, papercoldskidney, opacity=0.55},
            mmachine={mark=*, mark size=1.2pt, papercoldsmmachine, opacity=0.55},
            politics={mark=*, mark size=1.2pt, papercoldspolitics, opacity=0.55},
            coval={mark=*, mark size=1.2pt, papercoldscoval, opacity=0.55}
        },
        legend style={
            at={(0.5,-0.20)}, anchor=north,
            legend columns=5, draw=none,
            font=\footnotesize,
            /tikz/every even column/.append style={column sep=6pt},
        },
    ]
        \addplot[scatter, only marks, scatter src=explicit symbolic]
            table[meta=source] {{Current_Work/Neurips/plots/correlation_epsilon_omega/pooled_scatter_simple_downsampled.dat}};
        \addplot[gray, dashed, thick, forget plot]
            coordinates {(-0.346,0) (7.499,0)};
        \legend{Synthetic ($n=2578$), Kidney ($n=2294$), Moral Machine ($n=3293$), Political ($n=61$), CoVal ($n=685$)}
    \end{axis}
\end{tikzpicture}

%% file: Current_Work/Neurips/plots/Synthetic_MoralVariation_Plots/synthetic_selection_fraction.tex
%
%
\definecolor{papercolord1rule0}{RGB}{31,119,180}
\definecolor{papercolord1rule1}{RGB}{255,127,14}
\definecolor{papercolord1rule2}{RGB}{44,160,44}
\definecolor{papercolord1rule3}{RGB}{214,39,40}
\definecolor{papercolord1rule4}{RGB}{148,103,189}
\definecolor{papercolord1rule5}{RGB}{140,86,75}
\definecolor{papercolord1rule6}{RGB}{227,119,194}

\begin{tikzpicture}
    \begin{axis}[
        ybar=0.5pt,
        bar width=5pt,
        width=\linewidth,
        height=5cm,
        ymin=0, ymax=1.0,
        ylabel={avg $|f(\mathcal{I})|/m$ across elections},
        ylabel style={font=\small},
        symbolic x coords={IC,Mallows $\phi$=0.2,Mallows $\phi$=0.5,Mallows $\phi$=0.8,Spatial 1D,Spatial 2D},
        xtick=data,
        x tick label style={font=\small},
        y tick label style={font=\footnotesize},
        ymajorgrids=true,
        grid style={dashed,gray!30},
        legend style={
            at={(0.5,-0.18)},
            anchor=north,
            legend columns=7,
            /tikz/every even column/.append style={column sep=4pt},
            font=\footnotesize,
            draw=none,
        },
        enlarge x limits=0.12,
    ]
        \addplot+[ybar, fill=papercolord1rule0, draw=black, line width=0.3pt] coordinates {(IC,0.5540) (Mallows $\phi$=0.2,0.4940) (Mallows $\phi$=0.5,0.5520) (Mallows $\phi$=0.8,0.5660) (Spatial 1D,0.5840) (Spatial 2D,0.1160)};
        \addlegendentry{Maj. Threshold}
        \addplot+[ybar, fill=papercolord1rule1, draw=black, line width=0.3pt] coordinates {(IC,0.7960) (Mallows $\phi$=0.2,0.6040) (Mallows $\phi$=0.5,0.6740) (Mallows $\phi$=0.8,0.7180) (Spatial 1D,0.4460) (Spatial 2D,0.4860)};
        \addlegendentry{DVC}
        \addplot+[ybar, fill=papercolord1rule2, draw=black, line width=0.3pt] coordinates {(IC,0.1400) (Mallows $\phi$=0.2,0.4020) (Mallows $\phi$=0.5,0.4460) (Mallows $\phi$=0.8,0.2140) (Spatial 1D,0.2200) (Spatial 2D,0.0760)};
        \addlegendentry{AWVC}
        \addplot+[ybar, fill=papercolord1rule3, draw=black, line width=0.3pt] coordinates {(IC,0.1120) (Mallows $\phi$=0.2,0.2000) (Mallows $\phi$=0.5,0.2020) (Mallows $\phi$=0.8,0.1600) (Spatial 1D,0.1580) (Spatial 2D,0.0640)};
        \addlegendentry{DSS $\circ$ AWVC}
        \addplot+[ybar, fill=papercolord1rule4, draw=black, line width=0.3pt] coordinates {(IC,0.3400) (Mallows $\phi$=0.2,0.2000) (Mallows $\phi$=0.5,0.2020) (Mallows $\phi$=0.8,0.2400) (Spatial 1D,0.2100) (Spatial 2D,0.2240)};
        \addlegendentry{DSS}
        \addplot+[ybar, fill=papercolord1rule5, draw=black, line width=0.3pt] coordinates {(IC,0.1540) (Mallows $\phi$=0.2,0.4120) (Mallows $\phi$=0.5,0.3960) (Mallows $\phi$=0.8,0.2120) (Spatial 1D,0.0540) (Spatial 2D,0.0060)};
        \addlegendentry{ARU}
        \addplot+[ybar, fill=papercolord1rule6, draw=black, line width=0.3pt] coordinates {(IC,0.2020) (Mallows $\phi$=0.2,0.2000) (Mallows $\phi$=0.5,0.2000) (Mallows $\phi$=0.8,0.2020) (Spatial 1D,0.2020) (Spatial 2D,0.2000)};
        \addlegendentry{MMOD}
    \end{axis}
\end{tikzpicture}

%% file: Current_Work/Neurips/plots/Synthetic_MoralVariation_Plots/synthetic_submajority.tex
\begin{tabular}{lrrrrrrrrrrrr}
\toprule
 & \multicolumn{2}{c}{IC} & \multicolumn{2}{c}{Mallows $\phi$=0.2} & \multicolumn{2}{c}{Mallows $\phi$=0.5} & \multicolumn{2}{c}{Mallows $\phi$=0.8} & \multicolumn{2}{c}{Spatial 1D} & \multicolumn{2}{c}{Spatial 2D} \\
 & \multicolumn{2}{c}{($N$=100)} & \multicolumn{2}{c}{($N$=100)} & \multicolumn{2}{c}{($N$=100)} & \multicolumn{2}{c}{($N$=100)} & \multicolumn{2}{c}{($N$=100)} & \multicolumn{2}{c}{($N$=100)} \\
\cmidrule(lr){2-3}\cmidrule(lr){4-5}\cmidrule(lr){6-7}\cmidrule(lr){8-9}\cmidrule(lr){10-11}\cmidrule(lr){12-13}
Rule & \# & med min app. & \# & med min app. & \# & med min app. & \# & med min app. & \# & med min app. & \# & med min app. \\
\midrule
Maj. Threshold & 0 & 0.52 & 0 & 0.58 & 0 & 0.54 & 0 & 0.52 & 0 & 0.54 & 0 & 0.52 \\
DVC & 71 & 0.46 & 53 & 0.48 & 50 & 0.49 & 60 & 0.48 & 23 & 0.54 & 90 & 0.36 \\
AWVC & 0 & 0.56 & 0 & 0.66 & 0 & 0.60 & 0 & 0.57 & 0 & 0.60 & 0 & 0.54 \\
DSS $\circ$ AWVC & 0 & 0.58 & 0 & 0.78 & 0 & 0.70 & 0 & 0.60 & 0 & 0.62 & 0 & 0.54 \\
DSS & 25 & 0.56 & 0 & 0.78 & 0 & 0.70 & 10 & 0.58 & 8 & 0.60 & 63 & 0.46 \\
ARU & 1 & 0.58 & 2 & 0.65 & 0 & 0.62 & 0 & 0.58 & 0 & 0.62 & 0 & 0.62 \\
MMOD & 16 & 0.56 & 0 & 0.78 & 0 & 0.70 & 7 & 0.57 & 8 & 0.60 & 64 & 0.46 \\
\bottomrule
\end{tabular}

%% file: Current_Work/Neurips/plots/Synthetic_MoralVariation_Plots/synthetic_minority_protection.tex
\begin{tabular}{lrrrrrrrrrrrr}
\toprule
 & \multicolumn{2}{c}{IC} & \multicolumn{2}{c}{Mallows $\phi$=0.2} & \multicolumn{2}{c}{Mallows $\phi$=0.5} & \multicolumn{2}{c}{Mallows $\phi$=0.8} & \multicolumn{2}{c}{Spatial 1D} & \multicolumn{2}{c}{Spatial 2D} \\
 & \multicolumn{2}{c}{($N$=100)} & \multicolumn{2}{c}{($N$=100)} & \multicolumn{2}{c}{($N$=100)} & \multicolumn{2}{c}{($N$=100)} & \multicolumn{2}{c}{($N$=100)} & \multicolumn{2}{c}{($N$=100)} \\
\cmidrule(lr){2-3}\cmidrule(lr){4-5}\cmidrule(lr){6-7}\cmidrule(lr){8-9}\cmidrule(lr){10-11}\cmidrule(lr){12-13}
Rule & \# & med $\omega^*$ & \# & med $\omega^*$ & \# & med $\omega^*$ & \# & med $\omega^*$ & \# & med $\omega^*$ & \# & med $\omega^*$ \\
\midrule
Maj. Threshold & 23 & 1.11 & 1 & 1.87 & 0 & 1.87 & 21 & 1.11 & 68 & 0.67 & 2 & 2.00 \\
DVC & 0 & 1.00 & 0 & 1.30 & 0 & 1.43 & 0 & 1.11 & 0 & 1.30 & 0 & 1.25 \\
AWVC & 0 & 2.00 & 0 & 2.65 & 0 & 2.18 & 0 & 2.08 & 0 & 2.08 & 0 & 2.08 \\
DSS $\circ$ AWVC & 0 & 2.00 & 0 & 4.55 & 0 & 3.33 & 0 & 2.27 & 0 & 2.17 & 0 & 2.14 \\
DSS & 9 & 1.67 & 0 & 4.55 & 0 & 3.33 & 0 & 2.00 & 8 & 2.00 & 7 & 1.78 \\
ARU & 1 & 1.43 & 0 & 2.50 & 0 & 2.35 & 0 & 1.83 & 2 & 1.74 & 0 & 2.63 \\
MMOD & 1 & 1.67 & 0 & 4.55 & 0 & 3.33 & 0 & 2.08 & 11 & 2.00 & 6 & 1.74 \\
\bottomrule
\end{tabular}

%% file: Current_Work/Neurips/plots/Synthetic_MoralVariation_Plots/moral_selectivity.tex
%
%
\definecolor{papercolord1rule0}{RGB}{31,119,180}
\definecolor{papercolord1rule1}{RGB}{255,127,14}
\definecolor{papercolord1rule2}{RGB}{44,160,44}
\definecolor{papercolord1rule3}{RGB}{214,39,40}
\definecolor{papercolord1rule4}{RGB}{148,103,189}
\definecolor{papercolord1rule5}{RGB}{140,86,75}
\definecolor{papercolord1rule6}{RGB}{227,119,194}

\begin{tikzpicture}
    \begin{axis}[
        ybar=0.5pt,
        bar width=5pt,
        width=\linewidth,
        height=5cm,
        ymin=0, ymax=1.0,
        ylabel={avg $f|(\mathcal{I})|/m$ across elections},
        ylabel style={font=\small},
        symbolic x coords={Opt.\,1 $\alpha$=0.20,Opt.\,1 $\alpha$=0.30,Opt.\,1 $\alpha$=0.40,Opt.\,2 $q$=0.25,Opt.\,2 $q$=0.50,Opt.\,2 $q$=0.75},
        xtick=data,
        x tick label style={font=\small},
        y tick label style={font=\footnotesize},
        ymajorgrids=true,
        grid style={dashed,gray!30},
        legend style={
            at={(0.5,-0.18)},
            anchor=north,
            legend columns=7,
            /tikz/every even column/.append style={column sep=4pt},
            font=\footnotesize,
            draw=none,
        },
        enlarge x limits=0.12,
    ]
        \addplot+[ybar, fill=papercolord1rule0, draw=black, line width=0.3pt] coordinates {(Opt.\,1 $\alpha$=0.20,0.1950) (Opt.\,1 $\alpha$=0.30,0.3020) (Opt.\,1 $\alpha$=0.40,0.3970) (Opt.\,2 $q$=0.25,0.4880) (Opt.\,2 $q$=0.50,0.4870) (Opt.\,2 $q$=0.75,0.4670)};
        \addlegendentry{Maj. Threshold}
        \addplot+[ybar, fill=papercolord1rule1, draw=black, line width=0.3pt] coordinates {(Opt.\,1 $\alpha$=0.20,0.3010) (Opt.\,1 $\alpha$=0.30,0.3740) (Opt.\,1 $\alpha$=0.40,0.4550) (Opt.\,2 $q$=0.25,0.5430) (Opt.\,2 $q$=0.50,0.5280) (Opt.\,2 $q$=0.75,0.5270)};
        \addlegendentry{DVC}
        \addplot+[ybar, fill=papercolord1rule2, draw=black, line width=0.3pt] coordinates {(Opt.\,1 $\alpha$=0.20,0.1910) (Opt.\,1 $\alpha$=0.30,0.2920) (Opt.\,1 $\alpha$=0.40,0.3830) (Opt.\,2 $q$=0.25,0.4680) (Opt.\,2 $q$=0.50,0.4700) (Opt.\,2 $q$=0.75,0.4490)};
        \addlegendentry{AWVC}
        \addplot+[ybar, fill=papercolord1rule3, draw=black, line width=0.3pt] coordinates {(Opt.\,1 $\alpha$=0.20,0.0320) (Opt.\,1 $\alpha$=0.30,0.0670) (Opt.\,1 $\alpha$=0.40,0.1060) (Opt.\,2 $q$=0.25,0.1660) (Opt.\,2 $q$=0.50,0.1660) (Opt.\,2 $q$=0.75,0.1630)};
        \addlegendentry{DSS $\circ$ AWVC}
        \addplot+[ybar, fill=papercolord1rule4, draw=black, line width=0.3pt] coordinates {(Opt.\,1 $\alpha$=0.20,0.0320) (Opt.\,1 $\alpha$=0.30,0.0670) (Opt.\,1 $\alpha$=0.40,0.1060) (Opt.\,2 $q$=0.25,0.1660) (Opt.\,2 $q$=0.50,0.1660) (Opt.\,2 $q$=0.75,0.1630)};
        \addlegendentry{DSS}
        \addplot+[ybar, fill=papercolord1rule5, draw=black, line width=0.3pt] coordinates {(Opt.\,1 $\alpha$=0.20,0.1850) (Opt.\,1 $\alpha$=0.30,0.2860) (Opt.\,1 $\alpha$=0.40,0.3900) (Opt.\,2 $q$=0.25,0.4780) (Opt.\,2 $q$=0.50,0.4760) (Opt.\,2 $q$=0.75,0.4580)};
        \addlegendentry{ARU}
        \addplot+[ybar, fill=papercolord1rule6, draw=black, line width=0.3pt] coordinates {(Opt.\,1 $\alpha$=0.20,0.0320) (Opt.\,1 $\alpha$=0.30,0.0670) (Opt.\,1 $\alpha$=0.40,0.1060) (Opt.\,2 $q$=0.25,0.1660) (Opt.\,2 $q$=0.50,0.1660) (Opt.\,2 $q$=0.75,0.1630)};
        \addlegendentry{MMOD}
    \end{axis}
\end{tikzpicture}

%% file: Current_Work/Neurips/plots/Synthetic_MoralVariation_Plots/moral_submajority.tex
\begin{tabular}{lrrrrrrrrrrrr}
\toprule
 & \multicolumn{2}{c}{Opt.\,1 $\alpha$=0.20} & \multicolumn{2}{c}{Opt.\,1 $\alpha$=0.30} & \multicolumn{2}{c}{Opt.\,1 $\alpha$=0.40} & \multicolumn{2}{c}{Opt.\,2 $q$=0.25} & \multicolumn{2}{c}{Opt.\,2 $q$=0.50} & \multicolumn{2}{c}{Opt.\,2 $q$=0.75} \\
 & \multicolumn{2}{c}{($N$=10)} & \multicolumn{2}{c}{($N$=10)} & \multicolumn{2}{c}{($N$=10)} & \multicolumn{2}{c}{($N$=10)} & \multicolumn{2}{c}{($N$=10)} & \multicolumn{2}{c}{($N$=10)} \\
\cmidrule(lr){2-3}\cmidrule(lr){4-5}\cmidrule(lr){6-7}\cmidrule(lr){8-9}\cmidrule(lr){10-11}\cmidrule(lr){12-13}
Rule & \# & med min app. & \# & med min app. & \# & med min app. & \# & med min app. & \# & med min app. & \# & med min app. \\
\midrule
Maj. Threshold & 0 & 0.55 & 0 & 0.51 & 0 & 0.53 & 0 & 0.53 & 0 & 0.54 & 0 & 0.51 \\
DVC & 10 & 0.10 & 10 & 0.18 & 10 & 0.28 & 10 & 0.33 & 10 & 0.34 & 10 & 0.34 \\
AWVC & 0 & 0.59 & 0 & 0.56 & 0 & 0.58 & 0 & 0.66 & 0 & 0.66 & 0 & 0.62 \\
DSS $\circ$ AWVC & 0 & 1.00 & 0 & 1.00 & 0 & 1.00 & 0 & 1.00 & 0 & 1.00 & 0 & 1.00 \\
DSS & 0 & 1.00 & 0 & 1.00 & 0 & 1.00 & 0 & 1.00 & 0 & 1.00 & 0 & 1.00 \\
ARU & 0 & 0.61 & 0 & 0.58 & 1 & 0.57 & 1 & 0.57 & 1 & 0.61 & 2 & 0.53 \\
MMOD & 0 & 1.00 & 0 & 1.00 & 0 & 1.00 & 0 & 1.00 & 0 & 1.00 & 0 & 1.00 \\
\bottomrule
\end{tabular}

%% file: Current_Work/Neurips/plots/Synthetic_MoralVariation_Plots/moral_minority_protection.tex
\begin{tabular}{lrrrrrrrrrrrr}
\toprule
 & \multicolumn{2}{c}{Opt.\,1 $\alpha$=0.20} & \multicolumn{2}{c}{Opt.\,1 $\alpha$=0.30} & \multicolumn{2}{c}{Opt.\,1 $\alpha$=0.40} & \multicolumn{2}{c}{Opt.\,2 $q$=0.25} & \multicolumn{2}{c}{Opt.\,2 $q$=0.50} & \multicolumn{2}{c}{Opt.\,2 $q$=0.75} \\
 & \multicolumn{2}{c}{($N$=10)} & \multicolumn{2}{c}{($N$=10)} & \multicolumn{2}{c}{($N$=10)} & \multicolumn{2}{c}{($N$=10)} & \multicolumn{2}{c}{($N$=10)} & \multicolumn{2}{c}{($N$=10)} \\
\cmidrule(lr){2-3}\cmidrule(lr){4-5}\cmidrule(lr){6-7}\cmidrule(lr){8-9}\cmidrule(lr){10-11}\cmidrule(lr){12-13}
Rule & \# & med $\omega^*$ & \# & med $\omega^*$ & \# & med $\omega^*$ & \# & med $\omega^*$ & \# & med $\omega^*$ & \# & med $\omega^*$ \\
\midrule
Maj. Threshold & 0 & 2.02 & 0 & 1.78 & 0 & 1.69 & 0 & 1.49 & 0 & 1.49 & 0 & 1.44 \\
DVC & 0 & 1.01 & 0 & 1.03 & 0 & 1.03 & 0 & 1.04 & 0 & 1.03 & 0 & 1.06 \\
AWVC & 0 & 2.21 & 0 & 1.97 & 0 & 1.86 & 0 & 2.02 & 0 & 1.96 & 0 & 1.93 \\
DSS $\circ$ AWVC & 0 & $\infty$ & 0 & $\infty$ & 0 & $\infty$ & 0 & $\infty$ & 0 & $\infty$ & 0 & $\infty$ \\
DSS & 0 & $\infty$ & 0 & $\infty$ & 0 & $\infty$ & 0 & $\infty$ & 0 & $\infty$ & 0 & $\infty$ \\
ARU & 0 & 2.31 & 0 & 2.00 & 0 & 1.80 & 0 & 1.59 & 0 & 1.89 & 0 & 1.52 \\
MMOD & 0 & $\infty$ & 0 & $\infty$ & 0 & $\infty$ & 0 & $\infty$ & 0 & $\infty$ & 0 & $\infty$ \\
\bottomrule
\end{tabular}

%% file: Current_Work/Neurips/plots/Synthetic_MoralVariation_Plots/moral_unanimity.tex
\begin{tabular}{lrrrrrrrrrrrr}
\toprule
 & \multicolumn{2}{c}{Opt.\,1 $\alpha$=0.20} & \multicolumn{2}{c}{Opt.\,1 $\alpha$=0.30} & \multicolumn{2}{c}{Opt.\,1 $\alpha$=0.40} & \multicolumn{2}{c}{Opt.\,2 $q$=0.25} & \multicolumn{2}{c}{Opt.\,2 $q$=0.50} & \multicolumn{2}{c}{Opt.\,2 $q$=0.75} \\
 & \multicolumn{2}{c}{($N$=10, $N_u$=10)} & \multicolumn{2}{c}{($N$=10, $N_u$=10)} & \multicolumn{2}{c}{($N$=10, $N_u$=10)} & \multicolumn{2}{c}{($N$=10, $N_u$=10)} & \multicolumn{2}{c}{($N$=10, $N_u$=10)} & \multicolumn{2}{c}{($N$=10, $N_u$=10)} \\
\cmidrule(lr){2-3}\cmidrule(lr){4-5}\cmidrule(lr){6-7}\cmidrule(lr){8-9}\cmidrule(lr){10-11}\cmidrule(lr){12-13}
Rule & \# & tot. & \# & tot. & \# & tot. & \# & tot. & \# & tot. & \# & tot. \\
\midrule
Maj. Threshold & 10 & 163 & 10 & 235 & 10 & 291 & 10 & 322 & 10 & 321 & 10 & 304 \\
DVC & 10 & 269 & 10 & 307 & 10 & 349 & 10 & 377 & 10 & 362 & 10 & 364 \\
AWVC & 10 & 159 & 10 & 225 & 10 & 277 & 10 & 302 & 10 & 304 & 10 & 286 \\
DSS $\circ$ AWVC & 0 & 0 & 0 & 0 & 0 & 0 & 0 & 0 & 0 & 0 & 0 & 0 \\
DSS & 0 & 0 & 0 & 0 & 0 & 0 & 0 & 0 & 0 & 0 & 0 & 0 \\
ARU & 10 & 153 & 10 & 219 & 10 & 284 & 10 & 312 & 10 & 310 & 10 & 295 \\
MMOD & 0 & 0 & 0 & 0 & 0 & 0 & 0 & 0 & 0 & 0 & 0 & 0 \\
\bottomrule
\end{tabular}

%% file: arxiv.bbl
\begin{thebibliography}{44}
\providecommand{\natexlab}[1]{#1}
\providecommand{\url}[1]{\texttt{#1}}
\expandafter\ifx\csname urlstyle\endcsname\relax
  \providecommand{\doi}[1]{doi: #1}\else
  \providecommand{\doi}{doi: \begingroup \urlstyle{rm}\Url}\fi

\bibitem[Alamdari et~al.(2024)Alamdari, Ebadian, and Procaccia]{alamdari2024policyaggregation}
Parand~A. Alamdari, Soroush Ebadian, and Ariel~D. Procaccia.
\newblock Policy aggregation, 2024.
\newblock URL \url{https://arxiv.org/abs/2411.03651}.

\bibitem[Awad et~al.(2018)Awad, Dsouza, Kim, Schulz, Henrich, Shariff, Bonnefon, and Rahwan]{awad2018moral}
Edmond Awad, Sohan Dsouza, Richard Kim, Jonathan Schulz, Joseph Henrich, Azim Shariff, Jean-Fran{\c{c}}ois Bonnefon, and Iyad Rahwan.
\newblock The moral machine experiment.
\newblock \emph{Nature}, 563\penalty0 (7729):\penalty0 59--64, 2018.
\newblock \doi{10.1038/s41586-018-0637-6}.

\bibitem[Baujard and Igersheim(2007)]{VA2007}
Antoinette Baujard and Herrade Igersheim.
\newblock Voter autrement 2007 --- dataset of the in situ experiments, 2007.

\bibitem[Baujard et~al.(2014)Baujard, Igersheim, Lebon, Gavrel, and Laslier]{BAUJARD2014131}
Antoinette Baujard, Herrade Igersheim, Isabelle Lebon, Frédéric Gavrel, and Jean-François Laslier.
\newblock Who's favored by evaluative voting? an experiment conducted during the 2012 french presidential election.
\newblock \emph{Electoral Studies}, 34:\penalty0 131--145, 2014.
\newblock ISSN 0261-3794.
\newblock \doi{https://doi.org/10.1016/j.electstud.2013.11.003}.
\newblock URL \url{https://www.sciencedirect.com/science/article/pii/S0261379413001807}.

\bibitem[Baujard et~al.(2022)Baujard, Igersheim, Laslier, and Lebon]{VA2022}
Antoinette Baujard, Herrade Igersheim, Jean-Fran\c{c}ois Laslier, and Isabelle Lebon.
\newblock Voter autrement 2022 --- online experiment on the {French} presidential election, 2022.
\newblock URL \url{https://zenodo.org/records/10998451}.

\bibitem[Boehmer et~al.(2026)Boehmer, Kreisel, and Peters]{boehmer2026explanation}
Niclas Boehmer, Luca Kreisel, and Jannik Peters.
\newblock Explanation systems for approval-based multiwinner voting.
\newblock \emph{arXiv preprint arXiv:2604.24307}, 2026.

\bibitem[Brams and Sanver(2009)]{Brams2009}
Steven~J. Brams and M.~Remzi Sanver.
\newblock \emph{Voting Systems that Combine Approval and Preference}, pages 215--237.
\newblock Springer Berlin Heidelberg, Berlin, Heidelberg, 2009.

\bibitem[Brandl and Peters(2019)]{brandl2019axiomatic}
Florian Brandl and Dominik Peters.
\newblock An axiomatic characterization of the borda mean rule.
\newblock \emph{Social choice and welfare}, 52\penalty0 (4):\penalty0 685--707, 2019.

\bibitem[Brandt et~al.(2016)Brandt, Conitzer, Endriss, Lang, and Procaccia]{HandbookofCSC}
Felix Brandt, Vincent Conitzer, Ulle Endriss, J{\'e}r{\^o}me Lang, and Ariel~D Procaccia.
\newblock \emph{Handbook of computational social choice}.
\newblock Cambridge University Press, 2016.

\bibitem[Chaudhury et~al.(2024)Chaudhury, Murhekar, Yuan, Li, Mehta, and Procaccia]{VolumePVC}
Bhaskar~Ray Chaudhury, Aniket Murhekar, Zhuowen Yuan, Bo~Li, Ruta Mehta, and Ariel~D Procaccia.
\newblock Fair federated learning via the proportional veto core.
\newblock In \emph{Forty-first International Conference on Machine Learning}, 2024.

\bibitem[Chen et~al.(2022)Chen, Kyng, Liu, Peng, Gutenberg, and Sachdeva]{ChenKyngLiuPengProbstGutenbergSachdeva2022}
Li~Chen, Rasmus Kyng, Yang~P. Liu, Richard Peng, Maximilian~Probst Gutenberg, and Sushant Sachdeva.
\newblock Maximum flow and minimum-cost flow in almost-linear time.
\newblock In \emph{Proceedings of the 63rd IEEE Annual Symposium on Foundations of Computer Science (FOCS)}, pages 612--623, 2022.
\newblock \doi{10.1109/FOCS54457.2022.00064}.

\bibitem[Chooi et~al.(2026)Chooi, Gölz, Procaccia, Schiffer, and Zhang]{chooi2026findingcommongroundsea}
Jay Chooi, Paul Gölz, Ariel~D. Procaccia, Benjamin Schiffer, and Shirley Zhang.
\newblock Finding common ground in a sea of alternatives, 2026.
\newblock URL \url{https://arxiv.org/abs/2603.16751}.

\bibitem[Christiano et~al.(2017)Christiano, Leike, Brown, Martic, Legg, and Amodei]{christiano2017deep}
Paul~F Christiano, Jan Leike, Tom~B Brown, Miljan Martic, Shane Legg, and Dario Amodei.
\newblock Deep reinforcement learning from human preferences.
\newblock In \emph{Advances in Neural Information Processing Systems (NeurIPS)}, volume~30, 2017.

\bibitem[Condorcet(1785)]{CondorcetCycles}
{Marquis de} Condorcet.
\newblock \emph{Essai sur l'application de l'analyse {\`a} la probabilit{\'e} des d{\'e}cisions rendues {\`a} la pluralit{\'e} des voix}.
\newblock Imprimerie Royale, 1785.

\bibitem[Debreu(1954)]{debreu1954representation}
G\'erard Debreu.
\newblock Representation of a preference ordering by a numerical function.
\newblock \emph{Decision processes}, 3:\penalty0 159--165, 1954.

\bibitem[Dhillon and Mertens(1999)]{DhillonMertens1999}
Amrita Dhillon and Jean-François Mertens.
\newblock Relative utilitarianism.
\newblock \emph{Econometrica}, 67\penalty0 (3):\penalty0 471--498, May 1999.
\newblock \doi{10.1111/1468-0262.00033}.

\bibitem[Dong and Peters(2026)]{DoPe2026a}
Chris Dong and Jannik Peters.
\newblock An axiomatic analysis of proportionality notions in approval-based multiwinner voting, 2026.
\newblock URL \url{https://arxiv.org/abs/2605.04612}.

\bibitem[Dong et~al.(2021)Dong, Li, He, and Chen]{dong2021preference}
Yucheng Dong, Yao Li, Ying He, and Xia Chen.
\newblock Preference--approval structures in group decision making: Axiomatic distance and aggregation.
\newblock \emph{Decision Analysis}, 18\penalty0 (4):\penalty0 273--295, 2021.

\bibitem[Endriss(2017)]{10.5555/3180776}
Ulle Endriss.
\newblock \emph{Trends in Computational Social Choice}.
\newblock AI Access, 2017.
\newblock ISBN 1326912097.

\bibitem[Faliszewski et~al.(2017)Faliszewski, Skowron, Slinko, and Talmon]{faliszewski2017multiwinner}
Piotr Faliszewski, Piotr Skowron, Arkadii Slinko, and Nimrod Talmon.
\newblock Multiwinner voting: A new challenge for social choice theory.
\newblock \emph{Trends in computational social choice}, 74\penalty0 (2017):\penalty0 27--47, 2017.

\bibitem[Faliszewski et~al.(2020)Faliszewski, Slinko, and Talmon]{faliszewski2020multiwinner}
Piotr Faliszewski, Arkadii Slinko, and Nimrod Talmon.
\newblock Multiwinner rules with variable number of winners.
\newblock In \emph{Proceedings of the 24th European Conference on Artificial Intelligence (ECAI)}, pages 67--74, 2020.

\bibitem[Freeman et~al.(2020)Freeman, Kahng, and Pennock]{freeman2020proportionality}
Rupert Freeman, Anson Kahng, and David~M Pennock.
\newblock Proportionality in approval-based elections with a variable number of winners.
\newblock In \emph{Proceedings of the 29th International Conference on International Joint Conference on Artificial Intelligence (IJCAI)}, pages 132--138, 2020.

\bibitem[Halpern et~al.(2025)Halpern, Procaccia, and Suksompong]{halpern2025ApprovalVC}
Daniel Halpern, Ariel~D Procaccia, and Warut Suksompong.
\newblock The proportional veto principle for approval ballots.
\newblock \emph{arXiv preprint arXiv:2505.01395}, 2025.

\bibitem[Haret et~al.(2024)Haret, Klumper, Maly, and Sch{\"a}fer]{BudgetingGames}
Adrian Haret, Sophie Klumper, Jan Maly, and Guido Sch{\"a}fer.
\newblock Committees and equilibria: Multiwinner approval voting through the lens of budgeting games.
\newblock In \emph{Proceedings of the 25th ACM Conference on Economics and Computation}, pages 51--70, 2024.

\bibitem[Hitzig et~al.(2026)Hitzig, Gordon, Eloundou, Kalai, and Agarwal]{coval2026}
Zo\"{e} Hitzig, Mitchell Gordon, Tyna Eloundou, Adam Kalai, and Sandhini Agarwal.
\newblock Coval: Learning values-aware rubrics from the crowd.
\newblock OpenAI Alignment Research Blog, Jan 2026.
\newblock URL \url{https://alignment.openai.com/coval/}.

\bibitem[Ianovski and Kondratev(2023)]{ianovski2023computingproportionalvetocore}
Egor Ianovski and Aleksei~Y. Kondratev.
\newblock Computing the proportional veto core, 2023.
\newblock URL \url{https://arxiv.org/abs/2003.09153}.

\bibitem[Keswani et~al.(2026)Keswani, Cousins, Nguyen, Conitzer, Heidari, Borg, and Sinnott-Armstrong]{keswani2026moral}
Vijay Keswani, Cyrus Cousins, Breanna Nguyen, Vincent Conitzer, Hoda Heidari, Jana~Schaich Borg, and Walter Sinnott-Armstrong.
\newblock Moral change or noise? on problems of aligning {AI} with temporally unstable human feedback.
\newblock In \emph{Proceedings of the AAAI Conference on Artificial Intelligence (AAAI)}, 2026.

\bibitem[Kim et~al.(2018)Kim, Kleiman-Weiner, Abeliuk, Awad, Dsouza, Tenenbaum, and Rahwan]{kim2018computational}
Richard Kim, Max Kleiman-Weiner, Andr{\'e}s Abeliuk, Edmond Awad, Sohan Dsouza, Joshua~B. Tenenbaum, and Iyad Rahwan.
\newblock A computational model of commonsense moral decision making.
\newblock In \emph{Proceedings of the 2018 AAAI/ACM Conference on AI, Ethics, and Society}, pages 197--203, 2018.
\newblock \doi{10.1145/3278721.3278770}.

\bibitem[Kizilkaya and Kempe(2023)]{KizilkayaKempe2023}
Fatih~Erdem Kizilkaya and David Kempe.
\newblock Generalized veto core and a practical voting rule with optimal metric distortion.
\newblock In \emph{Proceedings of the 24th ACM Conference on Economics and Computation}, pages 913--936, 2023.

\bibitem[Kizilkaya and Kempe(2025)]{kizilkaya2025k}
Fatih~Erdem Kizilkaya and David Kempe.
\newblock $ k $-approval veto: A spectrum of voting rules balancing metric distortion and minority protection.
\newblock \emph{arXiv preprint arXiv:2507.17981}, 2025.

\bibitem[Kondratev and Ianovski(2024)]{kondratev2024veto}
Aleksei~Y Kondratev and Egor Ianovski.
\newblock Veto core consistent preference aggregation.
\newblock In \emph{Proceedings of the 23rd International Conference on Autonomous Agents and Multiagent Systems}, pages 1020--1028, 2024.

\bibitem[Kraiczy et~al.(2025)Kraiczy, Papasotiropoulos, Skowron, et~al.]{kraiczy2025proportionality}
Sonja Kraiczy, Georgios Papasotiropoulos, Piotr Skowron, et~al.
\newblock Proportionality in thumbs up and down voting.
\newblock \emph{arXiv preprint arXiv:2503.01985}, 2025.

\bibitem[Kruger and Sanver(2021)]{KrugerArrovian}
Justin Kruger and M.~Remzi Sanver.
\newblock An arrovian impossibility in combining ranking and evaluation.
\newblock \emph{Soc. Choice Welf.}, 57\penalty0 (3):\penalty0 535--555, 2021.

\bibitem[Kyi et~al.(2026)Kyi, G{\"o}lz, Berjon, and Biega]{Kyi2026}
Lin Kyi, Paul G{\"o}lz, Robin Berjon, and Asia~J. Biega.
\newblock From clicks to consensus: Collective consent assemblies for data governance.
\newblock In \emph{Proceedings of the 2026 {CHI} Conference on Human Factors in Computing Systems, {CHI} 2026, Barcelona, Spain, April 13--17, 2026}, pages 348:1--348:17, 2026.
\newblock \doi{10.1145/3772318.3790690}.

\bibitem[Lackner and Maly(2025)]{lackner2025approval}
Martin Lackner and Jan Maly.
\newblock Approval-based shortlisting.
\newblock \emph{Social Choice and Welfare}, 64\penalty0 (1):\penalty0 97--142, 2025.

\bibitem[Lackner and Skowron(2023)]{lackner2023multi}
Martin Lackner and Piotr Skowron.
\newblock \emph{Multi-winner voting with approval preferences}.
\newblock Springer, 2023.

\bibitem[Laslier and der Straeten(2004)]{LaslierVanDerStraeten2004}
Jean-François Laslier and Karine~Van der Straeten.
\newblock Une expérience de vote par assentiment lors de l'élection présidentielle française de 2002.
\newblock \emph{Revue française de science politique}, 54\penalty0 (1):\penalty0 99--130, 2004.
\newblock \doi{10.3917/rfsp.541.0099}.
\newblock URL \url{https://doi.org/10.3917/rfsp.541.0099}.

\bibitem[Mattei and Walsh(2013)]{MaWa13a}
Nicholas Mattei and Toby Walsh.
\newblock Preflib: A library of preference data \textsc{http://preflib.org}.
\newblock In \emph{Proceedings of the 3rd International Conference on Algorithmic Decision Theory (ADT 2013)}, Lecture Notes in Artificial Intelligence. Springer, 2013.

\bibitem[Moulin(1981)]{Moulin1981}
Herv{\'e} Moulin.
\newblock The proportional veto principle.
\newblock \emph{The Review of Economic Studies}, 48\penalty0 (3):\penalty0 407--416, 1981.

\bibitem[Moulin(1982)]{Moulin1982}
Herv{\'e} Moulin.
\newblock Voting with proportional veto power.
\newblock \emph{Econometrica: Journal of the Econometric Society}, pages 145--162, 1982.

\bibitem[Ouyang et~al.(2022)Ouyang, Wu, Jiang, Almeida, Wainwright, Mishkin, Zhang, Agarwal, Slama, Ray, et~al.]{ouyang2022training}
Long Ouyang, Jeffrey Wu, Xu~Jiang, Diogo Almeida, Carroll Wainwright, Pamela Mishkin, Chong Zhang, Sandhini Agarwal, Katarina Slama, Alex Ray, et~al.
\newblock Training language models to follow instructions with human feedback.
\newblock In \emph{Advances in Neural Information Processing Systems (NeurIPS)}, volume~35, pages 27730--27744, 2022.

\bibitem[Peter(2023)]{peter2023political}
Fabienne Peter.
\newblock Political legitimacy.
\newblock In Edward~N. Zalta and Uri Nodelman, editors, \emph{The {Stanford} Encyclopedia of Philosophy}. Metaphysics Research Lab, Stanford University, winter 2023 edition, 2023.
\newblock URL \url{https://plato.stanford.edu/archives/win2023/entries/legitimacy/}.

\bibitem[Tarjan(1972)]{Tarjan1972DFS}
Robert~E. Tarjan.
\newblock Depth-first search and linear graph algorithms.
\newblock \emph{SIAM Journal on Computing}, 1\penalty0 (2):\penalty0 146--160, 1972.
\newblock \doi{10.1137/0201010}.

\bibitem[Weijer and Emanuel(2000)]{weijer2000protecting}
Charles Weijer and Ezekiel~J. Emanuel.
\newblock Protecting communities in biomedical research.
\newblock \emph{Science}, 289\penalty0 (5482):\penalty0 1142--1144, 2000.
\newblock \doi{10.1126/science.289.5482.1142}.

\end{thebibliography}
